\newtheorem{theorem}{Theorem}[section]
\newtheorem{assumption}{Assumption}[section]
\newtheorem{proposition}{Proposition}[section]
\newtheorem{remark}{Remark}[section]
\newtheorem{lemma}{Lemma}
\newtheorem{corollary}{Corollary}[section]
\newcommand{\T}{\top}
\newcommand{\TT}{\mathcal{T}}
\newcommand{\op}{o_{\mathbb{P}}}
\newcommand{\Op}{O_{\mathbb{P}}}
\newcommand{\Ic}{\overline{\mathbb{I}}}
\newcommand{\E}{\mathbb{E}}
\newcommand{\F}{\mathcal{F}} 
\newcommand{\FF}{\mathcal{F}}
\newcommand{\R}{\mathbb{R}}
\newcommand{\B}{\mathbb{B}}
\newcommand{\PP}{\mathbb{P}}
\newcommand{\mf}{\mathbf}
\newcommand{\bs}{\boldsymbol}
\newcommand{\lf}{\lfloor}
\newcommand{\rf}{\rfloor}
\newcommand{\nb}{\lceil n b \rceil}
\newcommand{\bt}{\mathrm{boot}}
\newcommand{\nt}{\lceil n \tau_n \rceil}
\numberwithin{equation}{section}
\definecolor{darkgreen}{rgb}{0.0, 0.5, 0.0}
\definecolor{ashgrey}{rgb}{0.7, 0.75, 0.71}
\title{\bf Time-varying correlation network analysis of non-stationary multivariate time series with complex trends}
\date{}
\author[*]{Lujia Bai}
\author[*]{Weichi Wu}
\affil[*]{Center for Statistical Science and Department of Industrial Engineering, \protect\\
Tsinghua University, Beijing 100084, China}
\begin{document}
\maketitle
\onehalfspacing
\begin{abstract}
This paper proposes a flexible framework for inferring large-scale  time-varying and time-lagged correlation networks from multivariate or high-dimensional non-stationary time series with piecewise smooth trends. Built on a novel and unified multiple-testing procedure of time-lagged cross-correlation functions with a fixed or diverging number of lags, our method can accurately disclose flexible time-varying network structures associated with complex  functional structures  at all time points. We broaden the applicability of our method to the structure breaks by developing difference-based nonparametric estimators of cross-correlations,  achieve accurate family-wise error control via a bootstrap-assisted procedure adaptive to the complex temporal dynamics, and enhance the probability of recovering the time-varying network structures using a new uniform variance reduction technique.  We prove the asymptotic validity of the proposed method and demonstrate its effectiveness in finite samples through simulation studies and empirical applications.\end{abstract}
\noindent%

\textbf{Keywords}:   time-varying correlation network, variance reduction, nonparametric estimate, locally stationary, family-wise error rate
\footnotetext[1]{E-mail addresses: \href{blj20@mails.tsinghua.edu.cn}{blj20@mails.tsinghua.edu.cn}(L.Bai), \href{wuweichi@mail.tsinghua.edu.cn}{wuweichi@mail.tsinghua.edu.cn}(W.Wu)} 
\section{Introduction}

Estimating network structures plays a fundamental role in many fields, such as finance (\cite{Marti2021}), biology (\cite{langfelder2008wgcna}) and psychology (\cite{borsboom2021network}). The correlation network is arguably the most widely used network  which
  boils down to  inferring the set of non-zero correlations, see \cite{kolaczyk2014statistical}, \cite{efron2012large} and \cite{basu2021graphical}.
In particular, for a random vector sequence $(\mf Y_t)_{t=1}^n = ((Y_{t,1}, \cdots, Y_{t,p})^{\T})_{t=1}^n $, the correlation network is defined by the association graph $\mathcal G= (\mathcal V, \mathcal E)$ with vertex set $\mathcal V = 1,2, \cdots, p$ and edge set $\mathcal E= \{(i,j) \in  \mathcal V \times \mathcal V: c_{i,j} \neq 0\}$ 
where $(c_{i,j})$ is a certain correlation-based similarity measure.
 Although the classical correlation network has provably achieved success in many applications over the last decades, the restrictive independent assumption of $(\mf Y_t)_{t=1}^n$ and the pairwise construction have limited its applicability in the massive data with complex structures arising today. For example, in bioinformatics, a lot of microarray data displays dependence among different columns, see \cite{efron2012large}. In economic and financial studies where the correlation network is often constructed from time series, the seminal work of \cite{DIEBOLD2014119} builds connectedness measures on variance decomposition, arguing that ``Correlation-based measures remain widespread, yet they measure only pairwise association and are largely wed
to linear, Gaussian thinking, making them of limited value in financial-market contexts.''  Moreover, the time-evolving feature of connectivity is often  of central interest in the cutting edge of many areas such as risk management (\cite{DIEBOLD2014119}, \cite{Marti2021}) and  dynamic operations of biological networks (\cite{kim2014inference}), necessitating new methods and theories for building correlation networks. We refer to \cref{rm:tvnetwork} for a detailed discussion of current literature on correlation and dynamic networks. 

\begin{figure}
    \centering
    \includegraphics[width = 0.8\textwidth ]{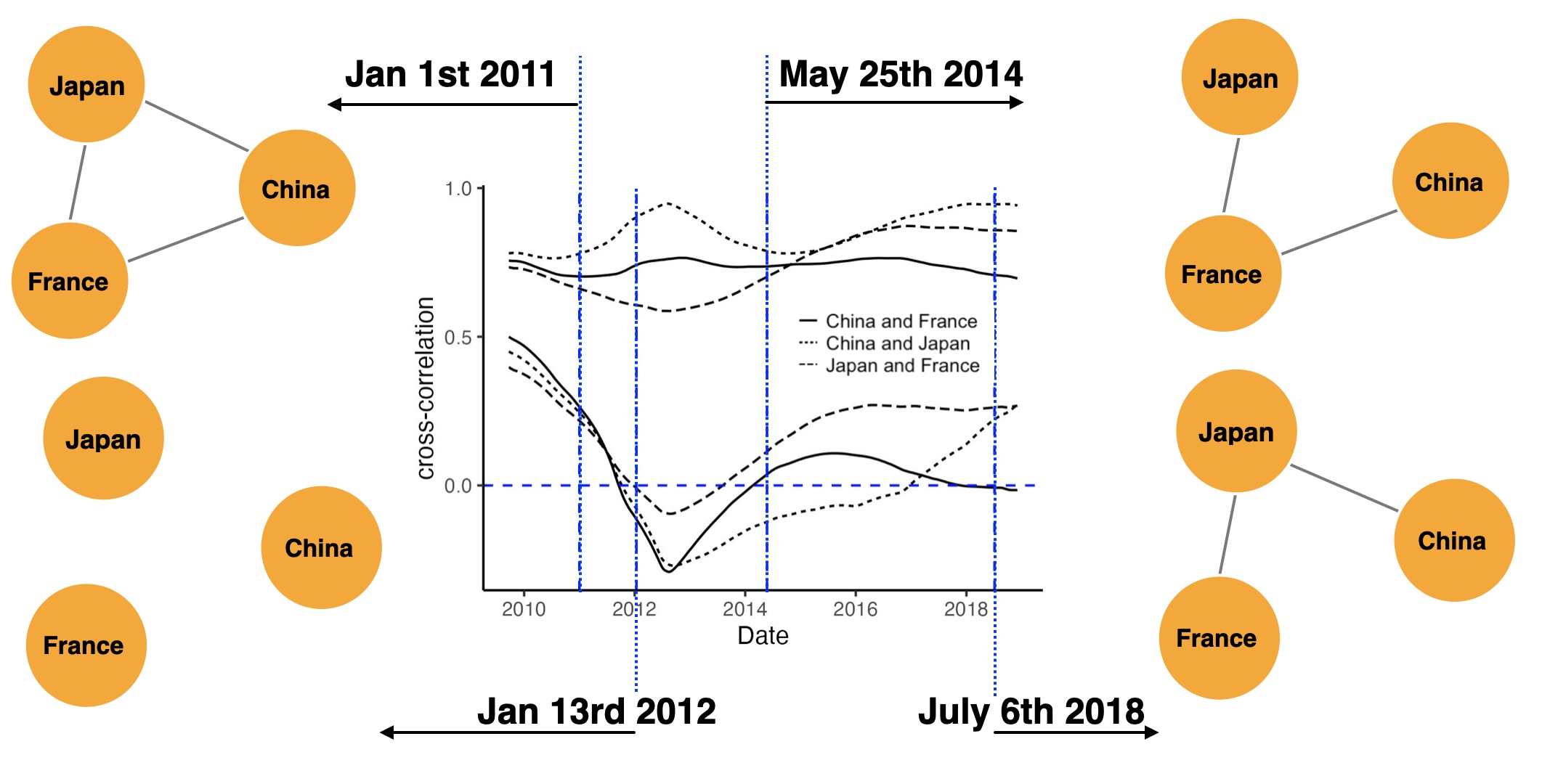}
    \caption{Middel panel: solid, dashed and dotted curves denote the SCBs of the cross-correlations of China and France, Japan and France, and China and Japan, respectively; Left and right panels: the snapshot of time-varying networks induced by $H_0 = 0$  on Jan 1st 2011, Jan 13rd 2012, May 25th 2014 and July 6th 2018, respectively. }
    \label{fig:my_label}
\end{figure}


In this paper, we establish a new framework for constructing time-varying correlation networks from non-stationary multivariate time series via multiple testing, largely addressing the drawbacks of correlation networks pointed out by \cite{DIEBOLD2014119}.  
  Specifically, we build the time-varying correlation networks on a group of correlation-based similarity measures through a {\it simultaneous} test of the hypotheses for
  $(i,j, k) \in \B$, where $\B$ is a index subset of $\{(i,j,k): i,j = 1,\cdots,p, k=0,\cdots, d_n\}$,
\begin{align}
    H_{0ijkt}: c_{ijk}(t) = g_{ijk}(t) \quad \text{versus}\quad  H_{1ijkt} : c_{ijk}(t) \neq g_{ijk}(t), \quad t \in (0,1), \label{eq:test_t}
\end{align}
where $c_{ijk}(\cdot)$ is a time-rescaled (on $(0,1)$) correlation-based measure function  between $(Y_{s,i})_{s=1}^{n-k}$ and $(Y_{s-k,j})_{s=k+1}^n,$ $g_{ijk}(\cdot)$ is a pre-specified and possibly time-varying function. Simple and meaningful choices of $g_{ijk}(\cdot)$ include but are not limited to: (1) $g_{ijk}(\cdot)\equiv 0$ which tests for the presence of correlations; (2) $g_{ijk}(t)$ are constant functions for a given index set of $(i,j)$ to test for time-invariance of the corresponding sub-graph.  
Let $T(t)=(T_{ijk}(t))$ be the corresponding test statistics. {\it We connect $i$ and $j$ \textbf{at time $t$} if there exists some $k$ such that $T_{ijk}(t)\geq c_{1-\alpha}(t,i,j,k)$ where $c_{1-\alpha}(t,i,j,k)$ is the threshold 
that controls the family-wise error rate (FWER) of  hypotheses in \eqref{eq:test_t}  at the nominal level $\alpha$.} 
For brevity and without loss of generality, we consider $c_{ijk}(t)$ as time-varying and time-lagged cross-correlations of general time series $\mf Y_t:=f(\mf Z_t,\cdots,\mf Z_{t-r})$ for $p_1$-dimensional non-stationary multivariate time series $\mf Z_t$, where $r$ is some fixed integer, and $f:(\mathbb R^{p_1})^r\rightarrow \mathbb R^{p}$ is a smooth vector function.
Importantly, we alleviate the aforementioned limitation of the pairwise construction of correlation-based networks via the flexible choices of $f$. When $\mf Y_t=\mf Z_t$, our method is related to the linear Granger-causality test. When $\mf Y_t =((\mf Z_{t}\circ \mf Z_{t})^\top,...,(\mf Z_{t}\circ \mf Z_{t-r})^\top)^\top$,  where $\circ$ stands for the Hadamard product, our method provides an alternative to \cite{DIEBOLD2014119}, gauging the connectedness in terms of volatility.   Moreover, by considering high-order polynomials for $f$, we can measure high-order dependence in time series vectors even if the data is non-Gaussian, while the classic correlation can only capture pairwise linear dependence under Gaussianity and leads to the `linear, Gaussian thinking' issue raised by  \cite{DIEBOLD2014119}.
Our framework for recovering time-varying network structures is based on a new class of  difference-based estimators and a novel bootstrap-assisted procedure, admitting flexible choices of $\B$ considered in \eqref{eq:test_t}.
Our difference-based estimator  circumvents the pre-estimation of complex trends and  can be applied to 
general non-stationary linear or nonlinear processes (\cite{zhou2010simultaneous},\cite{vogt2012}, \cite{dahlhaus2019bej}, and \cite{dette2022}) with piecewise smooth trends that allow for structure breaks commonly identified in many applications (see for instance \cite{granger2005}, \cite{bouri2019modelling}, and \cite{karavias2022structural}). 
To tackle the functional constraints in \eqref{eq:test_t} and control FWER, we provide a uniform bootstrap-assisted device for constructing simultaneous confidence bands (SCBs) for time-lagged cross-correlations of $(\mf Y_t)$. The SCBs are defined by $\hat U_{\alpha,ijk}(\cdot)$ and $\hat L_{\alpha,ijk}(\cdot)$ estimated from data such that for $\alpha\in (0,1)$ as $n \to \infty$, \begin{align}\label{SCB-original}
P(\hat L_{\alpha,ijk}(t/n)\leq \mathrm{corr}( Y_{t,i}, Y_{t-k,j}) \leq \hat U_{\alpha,ijk}(t/n),~ 1\leq t\leq n, (i,j,k) \in \B) \to 1-\alpha,
\end{align}
where 
$Y_{t,i}$ is the $i_{th}$ component of $\mf Y_t$. By the dual property of hypothesis testing and confidence sets, we construct the time-varying network by connecting $i$ and $j$ at time $s \in (0,1)$ if $g_{ijk}(s)\not \in [ \hat L_{\alpha,ijk}(s),\hat U_{\alpha,ijk}(s)]$.  We present \cref{fig:my_label} for illustration, where we infer dynamic networks from SCBs of the financial data ($S_i$ of \cref{sec:data}). We display several sub-graphs of the induced network at different times that have distinct edge connections, demonstrating the changing dynamics. By controlling the FWER through SCBs,  we are able to recover the time-varying network at all times with high probability, which is useful to investigate the evolving dynamics and transition of the interconnectedness of complex data.
Finally, we develop an easily implementable variance reduction technique that uniformly reduces the widths of the equivalent SCBs \eqref{SCB-original} and {\it increases} the recovery probability that all the pairs of nodes are correctly linked. 
To the best of the authors' knowledge, we are the first to consider the uniform variance reduction to obtain a smaller type II error rate of SCBs at a given significance level.  In particular, we build the REduction-in-widths and Difference-based SCBs (RED-SCBs) applicable to high-dimensional time series data via a nonstandard wild bootstrap procedure.  In the data analysis, equipped with the variance-reduced algorithm, we produce SCBs of cross-correlations contained fully in $[-1,1]$ when the original bands lie beyond this range. 
Theoretically, the uniform width reduction effect of RED-SCBs requires delicate analysis of the tail behavior of high dimensional Gaussian processes, which differentiates itself from its stationary univariate counterpart \cite{cheng2007reducing}. 
Based on the results of \cite{HUSLER198891} and \cite{royen2014simple}(see also \cite{latala2017royen}), our argument provides a theoretical tool to evaluate tail probabilities for the maximum of multivariate and high dimensional Gaussian vectors with flexible covariance structures, which is of independent interest.

The rest of the paper is organized as follows. In \cref{sec:main}, we present our main results on disclosing the time-varying network structure, including the estimation and inference procedures for the cross-correlation functions with many time-lags and  a simple uniform variance reduction method.  In \cref{sec:theory}, we establish a Gaussian approximation scheme for the estimated cross-correlation curves, the asymptotic FWER control of the proposed variance-reduced bootstrap-assisted algorithm, and the improvement of its overall recovery probability. We should point out that the asymptotic behavior of the maximum deviation of (auto)correlations over a fixed number of lags differs drastically from that over a diverging number of lags, making the simultaneous inference of (auto)correlations a long-standing difficult problem. We solve this problem as an important by-product.  \cref{sec:impl} gives the selection scheme of smoothing parameters.  \cref{sec:sim} reports the finite sample performance in simulation studies. In \cref{sec:data} we analyze the time-varying cross-correlation networks induced from Daily WRDS World indices. \cref{sec:future} provides conclusion remarks and discussions on future work. \cref{ap:est} presents detailed formulae for the estimators used in the RED-SCBs-based algorithm.   In \cref{sec:diffscb} we provide the detailed algorithm for time-varying cross-correlation analysis via difference-based SCBs and its theoretical properties. \cref{sec:proof} offers auxiliary results and the proofs of theorems.
The online supplement includes the algorithm of using plug-in estimators when the trends are smooth and the corresponding theoretical properties, as well as proofs of \cref{nonasynetwork} and auxiliary results.


\section{Main results}\label{sec:main}

We first summarize the notation that will be used throughout the paper before stating our results formally. For a vector $\mathbf{v} = (v_1,\cdots,v_p)  \in \R^p$, let $|\mathbf{v}| = (\sum_{j=1}^p v^2_j)^{1/2}$. For a random vector $\mathbf{V}$,  $q \geq 1$, let $\|\mf V\|_q  = (\E|\mf V|^q)^{1/q}$ denote the $\mathcal{L}_q$-norm of the random vector $\mathbf{V}$. Let $|\mf z|_{\infty}$ denote the maximum element of the vector $\mf z$.
In this paper we consider the kernel function $K(\cdot)$ that is zero outside $(-1, 1)$, and write $K_{\eta} = K(\cdot/\eta)$ for some bandwidth parameter $\eta$. For an index set $\mathbb A$, let $|\mathbb A|$ denote its cardinality. Let $\mf 1(\cdot)$ denote the indicator function, and $\overset{p}{\to}$ denote convergence in probability.

We consider the time series model of the form \footnote{The intervals $a_{i,l} \leq t<a_{i, l+1}$ can be replaced by $a_{i,l} < t\leq a_{i, l+1}$, but the results will remain the same}: for $i=1,\cdots, p,~j = 1,\cdots,n$,
\begin{align}
    Y_{j,i} = \mu_i(t_j) + \epsilon_{j,i}, \quad
    \mu_i(t)=\sum_{l=0}^{d_i} \mu_{i,l}(t)  \mf 1(a_{i,l} \leq t<a_{i, l+1}), \label{eq:model_spec}
\end{align}
where $t_j = j/n$, $t\in[0,1]$, $(\epsilon_{j,i})_{j=1}^n$ is a locally stationary process\footnote{After careful examining our theoretical arguments, our method can be applied to the piecewise locally stationary models, see \cite{zhou2013heteroscedasticity}, which allows higher-order breaks in the locally stationary models, with much more involved  mathematical arguments. For presentational simplicity we stick to the locally stationary error in this paper.}(see \cref{sec:theory} for the definition),  $\mu_i(\cdot)$ is a deterministic function on $[0,1]$ with abrupt change points $0 = a_{i,0} < a_{i,1} < \cdots < a_{i,d_i} < a_{i, d_i+1} =1$, $d_i$ is the number of change points,  $\mu_{i,l}(t)$ is Lipschitz continuous over $[a_{i,l}, a_{i, l+1}]$, and the Lipschitz constants of $\mu_{i,l}(\cdot)$ are uniformly bounded for $i=1,\cdots, p$, $0 \leq l \leq d_i$. 
To construct the equivalent SCBs of  \eqref{SCB-original},
 we start by introducing the definition of cross-correlation functions under time series non-stationarity. Define the $k_{th}$ $(k\in \mathbb Z)$ order cross-covariance, cross-marginal variance and  cross-correlation function as
\begin{align}
    \gamma_{k}^{i,l}(t) = \mathrm{Cov}(\epsilon_{ \lf nt \rf, i}, \epsilon_{\lf nt \rf + k, l}), \quad \sigma^{2}_{i,l}(t) = \gamma^{l,l}_{0}(t)\gamma^{i,i}_{0}(t), \quad \rho_{k}^{i, l}(t) = \gamma_{k}^{i, l}(t)/\sigma_{i,l}(t).\label{def:cor}
\end{align}
Notice that when $k\neq 0$,  \eqref{def:cor} indicates that in general $\rho^{i,l}_{k}(t) \neq \rho^{i,l}_{-k}(t)$ when $i\neq l$. In this case, the construction based on \eqref{eq:test_t} will yield a directed network. 
 In the rest of the paper, we will use a single index $s$ to stand for the double indices $i,l$ of the superscript   when $i=l = s$ if no confusion is caused.

\begin{remark}\label{rm:ncp}
If it's known that there are no change points, i.e., $d_i=0$, $1 \leq i \leq p$, one can estimate the correlation functions in \eqref{def:cor} via the non-parametric residual
$\hat \epsilon_{j,i}=Y_{j,i}-\hat \mu_i(t_j)$, where $\hat \mu_i(\cdot)$ is the local linear estimator of $\mu_i(\cdot)$, see \cite{zhou2010simultaneous} and \cref{sec:plg}  of the online supplement where  we provide algorithms for constructing correlation networks based on  $\hat \epsilon_{j,i}$. In the main article, we shall focus on the $d_i> 0$ case and propose difference-based estimators, as in the subsequent sections.
\end{remark}
\subsection{Difference-based sample correlation curves}
When $(\mf Y_{t})_{t=1}^n$ is contaminated by unknown abrupt change points, the direct estimation of their piecewise smooth trends will be sophisticated. In fact, there have been no methods designed for this problem. A related problem is to consistently identify a (diverging) number of abrupt change points from the piecewise smooth mean of general non-stationary time series, which has been only considered recently by \cite{wu2019multiscale}. Therefore, a straightforward approach to remove piecewise smooth trends is to separately apply the local linear estimation to the subseries between the identified jump points. However, this will cause severe boundary issues in practice, which are well-known in the field of kernel estimation, see for example \cite{fan1997}. 


Alternatively, we propose a difference-based approach for \eqref{def:cor}, which has succeeded in estimating variance (see \cite{muller1987estimation} and \cite{hall1990bio}), long-run variance (see  \cite{dette2019detecting}), and autocovariance (see \cite{gomez2017sjs} and \cite{cui2020estimation}) without the pre-estimation of the trend function.
Define  $\tilde \epsilon^{i}_{j, k} = \epsilon_{j,i} - \epsilon_{j-k, i}$.
 Let $\beta^{i,l}_{k}(t_{j}) := \E(\tilde \epsilon^{i}_{j+k, k}\tilde \epsilon^{l}_{j+k, h})$ for some large $h = o(n)$. Then under the locally stationary \cref{Ass:error} stated in \cref{sec:theory}, $\beta^{i,l}_{k}(\cdot)$ is a smooth function on $[0,1]$. Notice that  by \cref{Ass:error} we can select a positive $\tilde h<h$  such that
 $\gamma^{i,l}_{k}(t) \approx 0$ for  $|k| \geq \tilde  h$. 
 Note that for $k \leq h-\tilde h$, 
 \begin{align}
    \beta^{i,l}_{k}(t_{j}) &= E(\epsilon_{j+k, i} - \epsilon_{j,i})(\epsilon_{j+k,l} - \epsilon_{j+k-h,l}) \\ &= \gamma^{i,l}_{0}(t_{j+k}) + \gamma^{i,l}_{k-h}(t_{j}) - \gamma^{i,l}_{-h}(t_{j+k}) - \gamma^{i,l}_{k}(t_{j}) \approx  \gamma^{i,l}_{0}(t_{j}) - \gamma^{i,l}_{k}(t_{j}).
     \label{eq:beta_gamma}
 \end{align}
 
\noindent Further define $\tilde y^{i}_{j,k} =  Y_{j, i} - Y_{j-k, i}$, $\tilde \mu^{i}_{j,k} =  \mu_{i}(t_j) - \mu_{i}(t_{j-k})$, and hence $\tilde y_{j,k}^i=\tilde \mu_{j,k}^i+\tilde \epsilon_{j,k}^i$. By the piecewise smoothness of $\mu_i(\cdot)$, $1\leq i\leq p$, under some mild conditions it follows that
 \begin{align}
    \sum_{j=1}^n K_b (t_j - t) \tilde y^{i}_{j, k}\tilde y^{l}_{j, h} &= \sum_{j=1}^n K_b (t_j - t) (\tilde \epsilon^{i}_{j ,k} \tilde \epsilon^{l}_{j ,h}+ \tilde \mu^{i}_{j,k}\tilde \mu^{l}_{j,h} + \tilde \epsilon^i_{j , k}\tilde \mu^{l}_{j,h} +  \tilde \epsilon^l_{j, h}\tilde \mu^{i}_{j,k})\\ &\approx \sum_{j=1}^n K_b (t_j - t) \tilde \epsilon^{i}_{j , k} \tilde \epsilon^{l}_{j , h},
    \label{eq:def_ytilde}
\end{align}
see the proof of \cref{lm:loclin}  of the supplement for details. Hence, by the continuity of $\beta_k^{i,l}(\cdot)$, we have
\begin{align}
\sum_{j=1}^n K_b (t_j - t) E\big(\tilde y^{i}_{j, k}\tilde y^{l}_{j, h}\big)\approx \sum_{j=1}^n K_b (t_j - t) \beta^{i,l}_{k}(t_{j-k})\approx \gamma^{i,l}_0(t)-\gamma_k^{i,l}(t).
\end{align}
Therefore, we can consistently estimate $\gamma_0^{i,l}(t)-\gamma_k^{i,l}(t)$  by the local linear estimator $\hat \beta_{k}^{i,l}(t)$ such that
\begin{align}
    (\hat \beta_{k}^{i,l}(t),\hat \beta_{k}^{i,l,\prime}(t))^{\T} = \underset{(\eta_0, \eta_1) \in \R^2}{\mathrm{argmin}}\sum_{j=1}^n\left(\tilde y^{i}_{j, k}\tilde y^{l}_{j, h} - \eta_0 - \eta_1(t_j - t)\right)^2 K_{b^{i,l}_k}(t_j - t),
    \label{eq:loclin}
\end{align}
where $b^{i,l}_{k}$ is the bandwidth parameter. Define the 
centered version of $\tilde \epsilon^{i}_{j ,k} \tilde \epsilon^{l}_{j ,h}$ as
 \begin{align}
     \tilde e^{i,l}_{j,k} := \tilde \epsilon^{i}_{j ,k} \tilde \epsilon^{l}_{j ,h}-\beta^{i,l}_{k}(t_{j-k}),
     \label{eq:def_eptilde}
 \end{align}
with $\tilde e^{i,l}_{j,0} = 0$ since $\tilde \epsilon^i_{j,0}=0$ by definition.  For $(i,l,k) \in \B$, since  $\tilde \epsilon^i_{j ,k}\tilde \epsilon^l_{j ,h} $ is the product of the differences of locally stationary processes, $\tilde e^{i,l}_{j,k}$ is also locally stationary (see definitions in \cref{sec:theory}). Under mild conditions,  by stochastic expansion (see \cref{cor:suprhok} of the supplement for details),  we have 
\begin{align}
     \max_{(i,l,k) \in \B}\sup _{t \in \mathcal{T}}\left|\hat{\beta}^{i,l}_{k}(t)-\beta^{i,l}_{k}(t)-\frac{1}{nb_k^{i,l}}\sum_{j=1}^n K_{b^{i,l}_k}(t_j - t) \tilde e^{i,l}_{j,k} \right| =\op(1),\label{eq:sto_expansion}
\end{align}
where $\B$ is as defined in \eqref{eq:test_t}.
Analogous to \eqref{eq:loclin} and using \eqref{eq:beta_gamma}, the local linear estimator $\hat \beta_{h}^{i,l}(t)$  is a consistent estimator of $2\gamma_0^{i,l}(t)$. As a consequence, the cross-covariance function $\gamma^{i,l}_{k}(t)$  can be estimated by
     $
     \tilde \gamma^{i,l}_{k}(t)  = \hat \beta^{i,l}_{h}(t)/2 - \hat \beta^{i,l}_{k}(t) 
$
and the cross-correlation function $\rho_k^{i,l}(t)$ can be estimated by
\begin{align}
      \tilde \rho^{i,l}_{k}(t) = \tilde \gamma^{i,l}_{k}(t)/\tilde \sigma_{i,l}(t), \quad \tilde \sigma_{i,l}(t) = \sqrt{\tilde \gamma^{i}_{0}(t) \tilde \gamma^{l}_{0}(t)}.\label{eq:rho_network}
\end{align}
For simplicity, let $z$ denote the triad $(i,l,k)$, and we use $b_z$, $\rho_z(t)$, $\sigma_z(t)$ to denote $b^{i,l}_k$, $\rho^{i,l}_k(t)$, $\sigma_{i,l}(t)$.

\subsection{Variance reduction}\label{sec:reduction}

Albeit by definition the correlation function lies inside $[-1,1]$, 
in practice the nonparametric confidence band of the correlation curve could exceed $[-1,1]$ (see for instance Figure 5 in  \cite{zhao2015inference}) and becomes less informative and less sensitive, especially for the inference of connections.
To enhance the probability of recovering true connections, we develop a uniform variance reduction technique by interpolating at selected points to further narrow the SCBs without changing the nominal level. The improved SCBs (i.e., RED-SCBs) admit simple forms with only a slight computational cost and  project effectively the corresponding asymptotic reduction effect into finite samples. Our inspiration comes from the literature on variance reduction for $i.i.d.$ errors and pointwise inference (see for example \cite{efron1990} and \cite{cheng2007reducing}). 
However, their dependent and non-stationary counterparts remain largely untouched, let alone the extension of their pointwise reduction effects for fixed $t \in [0,1]$ to the uniform reduction effect on $(0, 1)$.
 

\par Specifically, we propose to use the following linear combination of $\hat \beta_{z}(t)$ to refine the estimate of $ \beta_{z}(t)$,   
$$
    \check{\beta}_{z}(t) = \{\check{\beta}_{z,+}(t) + \check{\beta}_{z,-}(t)\}/2,\quad 
    \check{\beta}_{z,\pm}(t)=\sum_{j=0, 1, 2}A_{j}(\pm r)\hat \beta_{z}(t-(\pm r + 1 - j)\omega_z(t)),
$$
where $A_{0}(r)=r(r-1) / 2, A_{1}(r)=\left(1-r^{2}\right),  A_{2}(r)=r(r+1) / 2$, for some selected $r \in (-1,1)$, and $\omega_z(t) = \delta(t) b_z$, $\delta(t)=\min \{\delta, (t- b) /[(r+1) b], (1- b - t) /[(r+1) b]\}$ where $\delta$ is a non-negative constant. The above estimate utilizes the correlation  between $\hat \beta_{z}(t_1)$ and $\hat \beta_{z}(t_2)$ when $t_1, t_2$ fall into the vicinity of $t$ to achieve smaller variance, while the coefficients $A_i(r)$, $i=1,2,3$ are carefully designed such that the asymptotic bias is not changed. 
In the formula of $\check \beta_{z}(\cdot)$, $\omega_z(t)$ controls the range of the smoothing neighborhood 
through $\delta$ and $b_z$. In particular, when $\delta = 0$, $\check \beta_{z}(t)$ equals the original estimator $\hat  \beta_{z}(t)$ and the variance remains unchanged. In practice, a large $\delta$ is superior when the trend function is smooth 
 and the noise level is low, while a smaller $\delta$ is preferred  when the signals contain many abrupt change points.
Our final variance-reduced estimators of cross-correlations are
\begin{align}
    \check  \rho_{z}(t) = \check \gamma_{z}(t) / \check \sigma_{z}(t),\quad \check \sigma_{z}(t) = \sqrt{\check{\gamma}^i_0(t)\check{\gamma}^l_0(t)},  \quad z = (i,l,k) \in \B,\label{eq:estimationreduce}
\end{align}
where 
$
\check \gamma_{z}(t) = \check \beta^{i,l}_{h}(t)/2 -\check \beta^{i,l}_{k}(t)$ and $\check \gamma^l_0(t) = \check \beta^l_{h}(t)/2.
$




\subsection{Controlling FWER via bootstrap-assisted inference }\label{bootstrapsec}

To motivate our bootstrap algorithm for the inference of time-varying networks,  notice that the key to valid SCBs of \eqref{SCB-original} is the quantiles of 
\begin{align}
   \max_{z\in \B}\sup_{t \in [b,1 - b]} \sqrt{nb_z}|\check \rho_z(t)-\rho_z(t)|,\label{eq:dist_net_check}
\end{align}
where $b_z$ is the bandwidth parameter and $b=b_n:=\max_{z \in \B} b_z$ converges to $0$ as $n \to \infty$, so that $\cup_{n}[b_n,1-b_n]= (0,1)$.
In addition to the time-varying data-generating mechanism, we allow $p$ and $d_n$ to be either fixed or divergent for the index set $\B$.
We shall begin with deriving the SCBs of 
\begin{align}
   \max_{z\in \B}\sup_{t \in [b,1 - b]} \sqrt{nb_z}|\tilde \rho_z(t)-\rho_z(t)|, \label{eq:dist_net}
\end{align}
and show how the result of \eqref{eq:dist_net} can lead to the solution of \eqref{eq:dist_net_check}.

The distributional properties of \eqref{eq:dist_net} have been only partially theoretically investigated in time series analysis of autocorrelations (i.e., $p=1$). 
For example, \cite{zhao2015inference} tackles the case when $p= d_n = 1$, i.e., the simultaneous inference of the local correlation curve. \cite{xiao2014portmanteau}  and \cite{braumann2021simultaneous} consider $p=1$ under the assumption of stationarity, where $\rho_k(t) = \rho_k$. 
Specifically, the theoretical conclusions of \cite{xiao2014portmanteau} necessitate that $d_n$ diverges, while the asymptotic results of \cite{braumann2021simultaneous} allow for finite $d_n$ but require the underlying time series to be {\it linear}. Due to the sophisticated distributional properties and to achieve better finite sample performance,  \cite{xiao2014portmanteau} proposes blocks of block bootstrap and \cite{braumann2021simultaneous}  develops AR-sieve-based bootstrap for the linear process. However, 
 their method cannot be applied to the inference of time-varying correlation networks, mainly because those methods are designed for stationary processes of which the correlation curves are constants. 
  


We start by investigating the stochastic expansion of the cross-correlation estimate.  Following \eqref{eq:sto_expansion} and \eqref{eq:rho_network}, we could approximate the maximum deviation of the cross-correlation estimate via $\vartheta_{z}(t)$, the moving weighted average of innovations, i.e.,
\begin{align}
 \max_{z\in\B}\sup_{t \in [b,1-b]} \sqrt{nb_z}\left|\tilde \rho_z(t)-\rho_z(t)-\vartheta_{z}(t)\right| = \op(1), \label{eq:rho_network_ap}
\end{align}
where 
$\vartheta_{z}(t) = (nb_z)^{-1}\sum_{j=1}^n  K_{b_z}(t_j - t)\Xi_{z, j}$, 
and
\begin{align}
   \Xi_{z, j} &:= (\tilde e^{i,l}_{j,h}/2 - \tilde e^{i,l}_{j,k})/\sigma_{i,l}(t_j) - 4^{-1}\rho^{i,l}_{k}(t_j) \left(\tilde e^i_{j, h}/\gamma^i_{0}(t_j) + \tilde e^l_{j, h}/ \gamma^l_{0}(t_j)\right).\label{eq:Xidiscrete}
 \end{align}
 When $i = l$,  $\Xi_{z, j}$   reduces to 
$ \sigma^{-1}_i(t_j)(\tilde e^{i}_{j,h}/2 - \tilde e^{i}_{j,k}- \rho^{i}_{k}(t_j) \tilde e^i_{j, h}/2)
$, which is the stochastic error of the 
difference-based
counterpart of (A.8) in \cite{zhao2015inference}, where the data is required to be zero-mean. The first term of \eqref{eq:Xidiscrete} is due to the approximation to $(\tilde \gamma_z(t)-\gamma_z(t))/\sigma_z(t)$, while the second term of \eqref{eq:Xidiscrete} mainly accounts for the approximation error of $(\tilde \sigma _z(t)- \sigma_z(t))/\tilde \sigma _z(t)$. 

As a result of \eqref{eq:dist_net} and \eqref{eq:rho_network_ap}, we can obtain SCBs and infer the time-varying network structures through the quantiles of $\max_{z\in \mathbb B}\sup_{t \in [b,1-b]}\sqrt{nb_z}|\vartheta_z (t)|$, where $\B$ is defined in \eqref{eq:test_t}. Let $c_z = (b/b_z)^{1/2}$, where $b=\max_{z \in \B} b_z$, and $\tilde \Gamma^2_z(t)$ be the limiting variance of $\sqrt{nb_z}|\vartheta_z(t)|$, of which the existence and non-degeneracy  are ensured by  Lemma C.3 of \cite{Dette2021ConfidenceSF} and \cref{Ass:diff}. Furthermore, we can construct time-varying networks with time-varying and edge-specific thresholds of \eqref{eq:test_t} via estimating $\tilde \Gamma_z(t)$ and deriving the quantiles of $\max_{z\in \mathbb B}\sup_{t \in [b,1-b]}\sqrt{nb_z}|\vartheta_z (t)|/\tilde \Gamma_z(t)$, which can be approximated by the maximum of a possibly high dimensional vector.
To this end, we concatenate the related random variables adjusted by variances and bandwidths in a block of dimension $|\B|$
:
\begin{align}
     \bar {\mf \Xi}_{j,s}^{\B}:= (c_z K_{b_z}(t_j - t_s)\Xi_{z,j}/\tilde {\Gamma}_z(t_s), z\in \B)^{\T}, \quad 1 \leq j,s  \leq n, ~t_s = s/n.
\end{align}
Under mild conditions, we have
$\underset{1 \leq s\leq n}  {\max}\left|\sum_{j=1}^{n}\bar{\mf \Xi}_{j,s}^{\B}/\sqrt{nb}\right|_{\infty} \approx \underset{z \in \B}{\max} \underset {t \in [b,1-b]} {\sup} \sqrt{nb_z}|\vartheta_z (t)|/\tilde \Gamma_z(t)$, see the proof of \cref{nonasynetwork} in the supplement for details. The popular method to mimic the distributional properties of $\underset{1 \leq s\leq n}  {\max}\left|\sum_{j=1}^{n}\bar{\mf \Xi}_{j,s}^{\B}/\sqrt{nb}\right|_{\infty}$ is the multiplier bootstrap using the block sums of time series vector $(\mf{\bar \Xi}^{\B, \T}_{j,1},\cdots,\mf{\bar \Xi}^{\B, \T}_{j,n})^{\T}_{1\leq j\leq n}$, see \cite{zhang2018}.  However, such an approach will be inconsistent due to the sparsity of $\mf{\bar \Xi}^{\B}_{j,s}$ caused by the bounded support of kernel $K(\cdot)$, see the discussion in Section 2.1 of \cite{Dette2021ConfidenceSF}. To address this issue, we 
compress the aforementioned sparse vector series, rearrange the blocks and obtain the following $(n-2\nb+1)|\B|$ dimensional vectors 
 \begin{align}\label{XiB}
     \bar{\mf \Xi}_{j}^{\B} = (\bar{\mf \Xi}_{j,\nb}^{\B, \T}, \bar{\mf \Xi}_{j + 1,\nb + 1}^{\B, \T}, \cdots, \bar{\mf \Xi}_{n - 2\nb +j, n - \nb}^{\B, \T})^{\T}, 1\leq j\leq 2\lceil nb\rceil,
 \end{align} such that  $\left| \sum_{j=1}^{2\nb}\bar{\mf \Xi}_j^{\B}/\sqrt{nb} \right|_{\infty} = \underset{1 \leq s\leq n}{\max}  \left|\sum_{j=1}^{n}\bar{\mf \Xi}_{j,s}^{\B}/\sqrt{nb}\right|_{\infty}$. In fact, $\bar{\mf \Xi}_j^{\B}$ in \eqref{XiB} extends the vector $\hat{\tilde Z}^{\hat \sigma}_j$ in Equation (2.26) of \cite{Dette2021ConfidenceSF} to multivariate non-stationary second-order processes allowing various bandwidths, which enables
us to develop a ``non-standard block wild bootstrap''  which recovers the desired correlation networks based on \eqref{eq:test_t}  while controlling the family-wise type I error regardless of the divergence of the cardinality of the index set $\B$ or the presence of nonlinearity and non-stationarity in the time series. The detailed algorithm is deferred to \cref{alg:jointnetwork} in \cref{sec:diffscb}.

Note that the linear combination of $\hat \beta_z(t)$ at nearby points is asymptotically equivalent to the local linear estimator of  $\beta_z(\cdot)$ using the high-order kernel  $\check K(\cdot)$, where $\check{K}(t) = (\check{K}_{+}(t) + \check{K}_{-}(t))/2$, and $\check{K}_{\pm}(t)=\sum_{j=0, 1, 2}A_{j}(\pm r)K (t+(\pm r + 1 - j)\delta)$. Therefore, the bootstrap procedure based on the high-dimensional vector \eqref{XiB} can be easily adapted 
to \eqref{eq:dist_net_check} via changing the kernel function $K(\cdot)$ by $\check{K}(\cdot)$ in $\vartheta_z(t)$, $\tilde \Gamma_z(t)$ and $\bar{\bs \Xi}^{\B}_{j,s}$, which are denoted by $\check \vartheta_z(t)$, $\check \Gamma_z(t)$ and $\check{\bs \Xi}^{\B}_{j,s}$. 
 We can define the variance-reduced estimators $\hat{\check{\mf \Xi}}_i^{\B}$ and $\hat{\check \Gamma}_{z}^{2}(t)$  as the estimators of $ {\check{\mf \Xi}}_i^{\B} $ and ${\check \Gamma}_{z}^{2}(t)$, see \cref{ap:est} for detailed formulae.  In \cref{alg:jointreduce}, we provide the full algorithm. 

\begin{algorithm}[!t]
    \caption{Time-varying cross-correlation analysis via RED-SCBs}
    \begin{algorithmic} [1]
      \State Compute $\check \rho^{i,l}_k(t)$, $(i,l,k) \in \B$ defined in \eqref{eq:estimationreduce}. 
        \State Compute the $|\B|$-dimensional vectors $\hat{\check{\mf \Xi}}_{j,s}^{\B}$ using \eqref{eq:checkXi}, $1 \leq j, s \leq n$.
        \State For a window size $w$, compute $\hat{\check{\mf S}}_{l,j}^{\B} = \sum_{s = j-w+1}^{j} \hat{\check{\mf \Xi}}_{s+l, \nb + l}^{\B}-\sum_{s = j+1}^{j+w} \hat{\check{\mf \Xi}}_{s+l, \nb + l}^{\B}$, where $l = 0,\cdots, n - 2\nb$.
        \For{$r = 1, \cdots, B$}
        \State Generate independent standard normal random variables $R_j^{(r)}$, $j=1,\cdots,n$. 
         \State Recall that $b = \max_{(i,l,k)\in \B} b^{i,l}_k$. Calculate 
          $$\check Z_{\bt}^{(r)} =\frac{\underset{0 \leq l \leq n- 2\nb}{\max}  \left| \sum_{j= w}^{ 2\nb - w} \hat{\check{\mf S}}_{l, j}^{\B}  R^{(r)}_{l+j}\right|_{\infty}}{\sqrt{2 w\nb }}.$$
           
    \EndFor
    \State Let $\check  r_{\bt}$ denote the $(1-\alpha)$-quantile of the bootstrap sample $\check Z_{\bt}^{(1)}, \cdots, \check Z_{\bt}^{(B)}$.
    \State Connect $i$ and $l$ at time $t \in [b, 1-b]$ if  $|g_{ilk}(t)-\check \rho^{i,l}_k(t)| > \check r_{\bt}  (nb^{i,l}_k)^{-1/2} \hat{\check \Gamma}^{i,l}_k(t)$  for some $k$ such that $(i,l, k) \in \B $.
    \end{algorithmic}
    \label{alg:jointreduce}
\end{algorithm} 



\cref{alg:jointreduce} is different from the conventional block bootstrap (\cite{lahiri2003resampling}) in two aspects. First, most conventional block bootstrap methods are aimed at imitating the original time series and deriving the limiting distribution of a statistic (see for instance Section 2.5 of \cite{lahiri2003resampling}), while \cref{alg:jointreduce}  approximates the maximum of the multivariate or possibly high-dimensional estimation errors.
Therefore, it produces asymptotically correct tests for \eqref{eq:test_t} even though the distribution of \eqref{eq:dist_net_check} under  finite $|\B|$  differs drastically from that under diverging $|\B|$.  Second, the summands $\hat {\check {\mf S}}_{l_1,j_1}^{\B}$ and $\hat {\check {\mf S}}_{l_2,j_2}^{\B}$ share the same Gaussian multipliers if $l_1 + j_1 = l_2 + j_2$, which contrasts
with the classic procedures where the summands are multiplied by independent Gaussian variables, see for example Theorem 5 of \cite{zhou2013heteroscedasticity}.
Furthermore,  \cref{alg:jointreduce} is suitable for efficient parallel computing. In  \cref{alg:jointreduce} we can compute $\left| \sum_{j= w}^{2\nb - w} \hat{\check{\mf S}}_{l, j}^{\B}  R^{(r)}_{l+j}\right|_{\infty}$ for separate blocks of $l$ and combine their maximums to obtain $\check r_{\bt}$, based on which we can further infer time-varying networks. 



\begin{remark}
To incorporate the prior knowledge of the network connections into our quantitative analysis, it is important to allow for flexible choices of $\B$. In \cref{sec:theory} we show that \cref{alg:jointreduce} is  consistent under very mild conditions on $\B$. In particular, $|\B|$ can be either fixed or diverging depending on the practical interest.  
To the best of our knowledge, there are no existing justified methods for inferring correlations that are valid for general time series under both scenarios of a fixed number of lags and a diverging number of lags.
\end{remark}

\begin{remark}\label{rm:tvnetwork}
Inference of correlation networks and testing of correlation matrices have been investigated by 
\cite{efron2007correlation}, \cite{cai2016}, and \cite{ bailey2019multiple}, where they focus on static networks and require independence over time. Recently, time-varying or dynamic networks inferred from time series are increasingly studied, see \cite{basu2021graphical} and \cite{chen2022}. The newly proposed methods therein construct one uniform network structure from time series with changing data generating mechanisms.  However, those methods  cannot be directly used to infer infinite-dimensional time-varying network structures caused by the complex dynamic structure and generating mechanism of the system. In contrast,  through controlling the FWER, the proposed \cref{alg:jointreduce} recovers time-varying networks at all time points with high probabilities, see the next section for theoretical guarantee.
\end{remark}

\section{Theoretical properties}\label{sec:theory}
In this section, we shall examine the theoretical performance of \cref{alg:jointreduce} in the aspects of type I error and recovery probability, while the theoretical properties of \cref{alg:jointnetwork} can be found in \cref{sec:diffscb}. To state the theoretical results rigorously, we introduce the following notation and assumptions. Assume the processes $(\epsilon_{i,j})_{i=1}^n$ in \eqref{eq:model_spec} and $(\tilde e_{z, j})_{j=1}^n$ ($\tilde e^{i,l}_{j,k}$ in \eqref{eq:def_eptilde}) admit
$$\epsilon_{j,i} = G_i(t_j,\FF_j), \quad \tilde e_{z, j} = H_z(t_j, \FF_j), \quad 1 \leq j \leq n,~1 \leq i\leq p,~z =(i,l,k) \in \B,$$
where $ G_i(\cdot, \cdot)$ and  $H_{z}(\cdot, \cdot)$ are  measurable functions of $[0,1]\times \R^{\mathbb Z} \rightarrow \R$,  
 $\FF_j=( \bs \xi_{-\infty},..., \bs \xi_j)$ is a filtration and $(\bs \xi_{i})_{i\in \mathbb Z}$ are $i.i.d.$  random elements.
 Define $\tilde \B(t) = \{z \in \B: \rho_z(t) \neq g_{z}(t)\}$, $t \in [0, 1]$. For a set $A$, let $\bar A$ denote its complement. For $t \in [0,1]$, write $\sqrt{nb_z} |g_{z}(t)-\check \rho_z(t)|/\hat{\check \Gamma}_z(t) $ as $\check T_z(t)$ and let $\check N(t) = \{\check T_z(t)
    \leq \check r_{\bt},z\in \B \cap \bar{\tilde \B}(t)\} \cap  \{\check T_z(t)
    > \check r_{\bt}, z \in \tilde \B(t)\}$.
   The FWER (conditional on data) is  $$1-P_{H_0}(\{\check T_z(t)
    \leq \check r_{\bt}, z=(i,l,k)\in \B, t\in[b,1-b]\}| \F_n),$$ where $H_0$ denotes the null hypotheses of \eqref{eq:test_t}. The recovery probability of {\it time-varying} networks conditional on data based on RED-SCBs can be expressed as
\begin{align}
  \quad \int_0^1 P(\check N(t)| \F_n) dt.
\end{align}
We shall show that \cref{alg:jointreduce} can control the FWER and enjoys the property of recovering the time-varying network structure with probability at least $1-\mathrm{FWER}$ in the subsequent analysis via SCBs.  

The crucial ingredient of our \cref{alg:jointreduce} is to mimic directly the maximum deviation of the estimated correlation curves  instead of
its limiting distribution since the limiting behavior of the maximum deviation of the difference-based non-parametric estimators for non-stationary nonlinear multivariate time series \eqref{eq:model_spec} rests on $\B$ in a complicated way.
To see this, consider univariate stationary time series $X_1$,...,$X_n$, with autocorrelations $\rho(k)$ and their sample version $\hat \rho(k)=\hat \gamma(k)/\hat \gamma(0)$ where $\hat \gamma(k)=\frac{1}{n}\sum_{i=1}^{n-h}X_iX_{i+k}$. The asymptotic distribution of  
$\sqrt{n}\max_{1\leq k\leq d_n}|\hat \rho(k)-\rho(k)|$
for fixed $d_n$ depends on the fourth order structure of the time series, while for diverging $d_n$ the asymptotic distribution is Gumbel-type and solely rests on the second-order properties, provided that $E(|X_i|^q)<\infty$ for $q>4$, see the discussion in \cite{braumann2021simultaneous}.  We avoid the difficulties caused by complicated limiting distributions by approximating the maximum deviation of the estimate via the Gaussian approximation and comparison techniques for sparse and high dimensional time series established by \cite{Dette2021ConfidenceSF}, and show that \cref{alg:jointreduce} yields valid tests adaptive to the size of $|\B|$. 

 For a process $ L(t,\FF_i)$,  we say it is $\mathcal L^q$ stochastic Lipschitz continuous (denoted by $L(\cdot, \cdot) \in \mathrm{Lip}_q$) if for $t_1,t_2\in [0,1]$, there exists a constant $C>0$ such that
  \begin{align}
      \|  L(t_1,\FF_0)-  L(t_2,\FF_0)\|_q\leq C|t_1-t_2|.
  \end{align}
 If $q \geq 2$, then the process  $ L(t,\FF_j)$ is {\it locally stationary} (LS). The LS process 
models the complex and smooth temporal dynamics of the error processes in \eqref{eq:model_spec}.  Our definition of the LS process 
is based on the Bernoulli shift process, which provides a fundamental framework for modeling nonlinear non-stationary processes, see  \cite{dahlhaus2019bej} for a comprehensive review. Let $\F^{*}_j = (\F_{-1}, \bs \varepsilon^{\prime}_0, \bs \varepsilon_1,\cdots, \bs \varepsilon_{j-1}, \bs \varepsilon_{j})$ where $\bs \varepsilon_0'$ is an $i.i.d.$ copy of $\bs \varepsilon_0$. The {\it physical dependence measure} of the nonlinear filter $ L(\cdot, \cdot) \in \mathrm{Lip}_q$ ($q>0$)  is defined by $\delta_q( L, k) = \underset{t \in [0,1]}{\sup}\| L(t,\F_k) -  L(t, \F_k^*) \|_q$, which  quantifies the influence of the input $\bs \varepsilon_0$ on the output $L(\cdot,\FF_k)$. Finally, for a univariate LS time series $L(t, \FF_j)$,  its long-run variance function is defined as 
$$ \sigma^2(L, t) = \sum_{k \in \mathbb Z} \mathrm{Cov}(L(t, \FF_0),L(t, \FF_{k})), ~t \in [0,1].$$
 
\begin{assumption}  \label{Ass:error}
    For some $q \geq 4$, the error process $\epsilon_{j,i} = G_i(t_j, \F_j)$ satisfies, 
    \begin{enumerate}[label = (A\arabic*)]
        \item  $G_i(\cdot,\cdot) \in \mathrm{Lip}_{2q}$, $i=1,2,\cdots, p$ and their Lipschitz constants are uniformly bounded.\label{A:G_LS}
         \item There exists a positive constant $t_0$ such that $\underset{t\in [0,1], i=1,2,\cdots, p}{\sup}\E(t_0\exp(G_i(t, \F_0)))<\infty$.\label{A:exp}
        \item  $\max_{i=1,\cdots, p}\delta_{2q}(G_i, l) = O(\chi^{l})$, for some $\chi \in (0,1)$ \label{A:G_delta}.
         \item  The second derivative $\gamma_{z}^{\prime\prime}(\cdot)$ of the function $\gamma_z(\cdot)$ exists and is Lipschitz continuous on $[0,1]$. The Lipschitz constants are bounded for all $z \in  \B$. 
         \label{A:gamma}
    \end{enumerate}
\end{assumption}
The conditions \ref{A:G_LS}, \ref{A:G_delta} and \ref{A:gamma} in \cref{Ass:error} are  standard in the kernel-based nonparametric analysis of LS time series. Similar assumptions have been posited by  \cite{zhao2015inference} for the inference of the univariate autocorrelation functions.
\cref{Ass:error} is mild in the sense of admitting time series with sub-exponential tails.
\begin{assumption}
  The kernel $K(\cdot)$ is a symmetric function which is zero outside $(-1,1)$ such that $\int_{-\infty}^{\infty} K(u) du = 1$, $\int_{-\infty}^{\infty} u^2 K(u) du = 0$, $\int_{-\infty}^{\infty} (K^{\prime}(u))^2 du < \infty$, and the second order derivative $K^{\prime\prime}$ is Lipschitz continuous on $(0,1)$.  \label{A:K}
\end{assumption}

It can be verified that if $K(\cdot)$ satisfies \Cref{A:K}, so does $\check K(\cdot)$ which is the equivalent kernel for the variance-reduced correlation curve estimator $\check \rho(\cdot)$.
\begin{assumption}\label{Ass:ck}
    There exists a constant $c^*>0$ such that $\min_{z \in \B}b_z/b\geq c^*$. 
\end{assumption}
\cref{Ass:ck} imposes that all $(b_z)_{z \in \B}$ share the same magnitude of orders.
\begin{assumption}\label{Ass:diff}
The following assumptions hold for $\tilde \Gamma_z(\cdot)$ and change points:
    \begin{enumerate}[label = (B\arabic*)]
    \item  The limiting  variance function $\tilde \Gamma^{2}_{z}(t):= \lim_{n \to \infty} \mathrm{Var}(\sqrt{nb_z}\vartheta_{z}(t))$ ($\vartheta_{z}(t)$ is defined in \eqref{eq:Xidiscrete}) is finite and well defined for $t\in [0,1]$, and that $\min_{z\in \B} \inf_{t \in [0,1]} \tilde \Gamma^{2}_{z}(t) > 0$.\label{S4}
    \item $\max_{z\in \B} \sup_{t \in (0,1)}|\tilde \Gamma_z^{\prime}(t)| < \infty$, where $\tilde\Gamma_z^{\prime}(t)$ is the derivative of $\tilde\Gamma_z(t)$.\label{derive}
    \item The number of abrupt  change points $d:=\max_{1 \leq i \leq p} d_i = O(n^{\phi})$, $0 \leq \phi < 2/5$, and for a sufficiently large constant $D$, $\underset{1\leq i \leq p}{\max} \underset{1\leq l \leq d_i}{\max} |\mu_{i,l}(a_{i,l}^{+}) -\mu_{i,l}(a_{i,l}^{-})| \leq D< \infty$.  \label{A:jump}
\end{enumerate}
\end{assumption}
Condition \ref{S4} guarantees that $\tilde   \Gamma^{2}_z(t)$  is non-degenerate. By the Lemma C.3 of \cite{Dette2021ConfidenceSF}, $\tilde \Gamma_z(t)$ exists and an equivalent condition of \ref{S4} is the non-degeneracy of the long-run variance of $\{\Xi_{z, j}\}_{j=1}^n$. 
It is worth noting that $\check \Gamma_z(t)$ and $\check \Gamma^{\prime}_z(t)$ also satisfy the conditions of \ref{S4} and \ref{derive}, since $\check \Gamma_z^2(t) = \lim_{n \to \infty} \mathrm{Var}(\sqrt{nb_z}\check \vartheta_{z}(t))$ and  $\check \vartheta_z(t) = (nb_z)^{-1} \sum_{j=1}^b 
\check K_{b_z}(t_j - t) \Xi_{z,j}$ is a linear combination of $\vartheta_z(t)$.
Condition \ref{A:jump} assures that the jump size is uniformly bounded. 
Note that \ref{A:jump} allows the number of change points in the mean to diverge. Recall that  $\tilde \B(t) = \{z: \rho_z(t) \neq g_{z}(t)\},~ t \in [0, 1]$.
\begin{assumption}\label{ass:alter}
   We assume that
   $
        \lambda (\{t: |\rho_z(t) - g_{z}(t)| \geq \eta_n, z \in \tilde 
        \B(t)\}) \to 1
$,
    where $\lambda$ is the Lebesgue measure on $[0,1]$, $\eta_n = \lambda_n \log(n|\B|)/\sqrt{nb}$,  where $\lambda_n \to \infty$ arbitrarily slowly.
\end{assumption}
\cref{ass:alter} imposes that if $\rho_z(\cdot) \neq g_z(\cdot)$, their absolute difference should be sufficiently large.

\begin{assumption}
The following condition holds for $\Xi_{z,j}$, $K(\cdot)$ and $\check K(\cdot)$:
For any two different triads $z$ and $z^{\prime}$ in $\B$, $0 \leq k , l \leq n-2\nb$, the absolute values of the correlations between  $\sum_{s=1}^{2\nb} \Xi_{z,  s+ k}K_{b_z}(\frac{s-\nb}{n})$ and $\sum_{s=1}^{2\nb} \Xi_{z^{\prime}, s+ l}K_{b_{z^{\prime}}}(\frac{s-\nb}{n})$  are uniformly upper bounded by a constant $0 <\rho < 1$,  and the same argument holds for $\check K(\cdot)$.
 \label{ass:rho}
\end{assumption}
\cref{ass:rho} ensures that there is no perfect correlation between the kernel-weighted errors for different pairs of nodes and lags. 

Our first result \cref{nonasynetwork} approximates the maximum deviation of the difference-based estimators for all correlation curves, i.e., $\max_{z\in \mathbb B}\sup_{t\in \mathcal T}\sqrt{nb_z}|\check \rho_z(t)-\rho_z(t)|/\check \Gamma_z(t)$, by the maximum of a Gaussian vector. Based on  \cref{nonasynetwork}, we can infer all considered correlation curves simultaneously by further investigating the Gaussian vector. The Gaussian approximation theory  for high-dimensional time series has been recently studied by for example \cite{zhang2018} and \cite{Dette2021ConfidenceSF}. The Gaussian approximation for plug-in estimators is a  direct application of existing Gaussian approximation theory, see \cref{sec:plg} of the online supplement. In contrast, \cref{nonasynetwork} provides the first Gaussian approximation scheme for second-order process and  difference-based estimators, which is crucial for developing RED-SCBs.  
\begin{theorem}\label{nonasynetwork}
Under the Assumptions  \ref{Ass:error}, \ref{A:K}, \ref{Ass:ck}, \ref{Ass:diff}, and the bandwidth conditions  $nb^4 \to \infty$, $nb^6 \to 0$, $|\B|^{1/q} h n^{\phi - 1/2}b^{-1/2-1/q} \to 0$, $|\B|^{1/q} n^{1/2}b^{7/2-1/q} \to 0$, and $(nb)^{-1/2}\{|\B|/(hn^{\phi-1}+b^4 + n^{-1/2}hb)\}^{1/(q+2)} \to 0$, there exists a sequence of zero-mean Gaussian vectors $(\tilde{\mf Z}_i)_{i=1}^{2\nb} \in \R^{(n - 2\nb+1)|\B|}$, which share the same autocovariance structure with the vectors $(\bar{\mf \Xi}^{\B}_i)_{i=1}^{2\nb} $  such that
\begin{align}
    \sup_{x \in \R}\left|\PP\left(\max_{(i,l,k) \in \B} \sup_{t \in \TT} \sqrt{nb_z}|\tilde \rho_z(t) - \rho_z(t)|/\tilde \Gamma_z(t)\leq x\right)- \PP\left(\left|\frac{1}{\sqrt{n b}} \sum_{i=1}^{2\nb} \tilde{\mf Z}_i\right|_{\infty}\leq x\right)\right| = O(\tilde \theta_n) = o(1), 
\end{align}
where  $\tilde \theta_n = (n b)^{-(1 - 11\iota)/8}
+ \Theta\left((\sqrt{nb}c_n)^{q/(q+1)} , n |\B|\right) + \Theta\left(|\B|^{1/(q+1)}(nb)^{-q/(q+1)}, n |\B|\right)$,
 $c_n = (n^{\phi-1}b^{-1}h + b^{3} + n^{-1/2}h)(|\B|/b)^{1/q}$.  
\par 
Further, there exists a sequence of zero-mean Gaussian vectors $(\check{\mf Z}_i)_{i=1}^{2\nb} \in \R^{(n - 2\nb+1)|\B|}$, which share the same autocovariance structure with the vectors $(\check{\mf \Xi}^{\B}_i)_{i=1}^{2\nb}$(replacing $K(\cdot)$ by $\check K(\cdot)$ in $(\bar{\mf \Xi}^{\B}_i)_{i=1}^{2\nb}$ ) such that
\begin{align}
    \sup_{x \in \R}\left|\PP\left(\max_{z \in \B} \sup_{t \in \TT} \sqrt{nb_z}|\check \rho_z(t) - \rho_z(t)|/\check \Gamma_z(t)\leq x\right)- \PP\left(\left|\frac{1}{\sqrt{n b}} \sum_{i=1}^{2\nb} \check{\mf Z}_i\right|_{\infty}\leq x\right)\right|  = o(1).
\end{align}
\end{theorem}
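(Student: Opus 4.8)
The plan is to reduce the claim to the Gaussian approximation for sparse, high-dimensional, locally stationary time series of \cite{Dette2021ConfidenceSF} through a short chain of approximations, and then to obtain the variance-reduced statement by substituting the equivalent higher-order kernel $\check K$ for $K$. First I would upgrade the qualitative expansion \eqref{eq:rho_network_ap} to an explicit rate. Combining the stochastic expansion \eqref{eq:sto_expansion} of $\hat\beta^{i,l}_k$ (and of $\hat\beta^{i,l}_h$, a consistent estimator of $2\gamma^{i,l}_0$), a first-order Taylor expansion of the ratio $\tilde\gamma_z/\tilde\sigma_z$ about $(\gamma_z,\sigma_z)$ that reproduces exactly the two terms of $\Xi_{z,j}$ in \eqref{eq:Xidiscrete}, a deterministic bias of order $b^3$ (small because $\int u^2K=0$ in \cref{A:K} and \ref{A:gamma} holds, and negligible after scaling by $\sqrt{nb_z}$ under $nb^4\to\infty$, $nb^6\to0$), the trend-removal error of \eqref{eq:def_ytilde}--\eqref{eq:beta_gamma} bounded via the piecewise-Lipschitz property of $\mu_i$ and \ref{A:jump} (a uniform order $n^{\phi-1}b^{-1}h$ from the at most $O(n^{\phi})$ jumps, each touching $O(h)$ summands of a kernel window), and the truncation $\gamma^{i,l}_{\pm h}\approx0$ (of order $n^{-1/2}h$), I would obtain
\begin{align}
\max_{z\in\B}\ \sup_{t\in\TT}\ \sqrt{nb_z}\,|\tilde\rho_z(t)-\rho_z(t)-\vartheta_z(t)|/\tilde\Gamma_z(t)=\Op\!\left(\sqrt{nb}\,c_n\right),
\end{align}
the $(|\B|/b)^{1/q}$ factor in $c_n$ arising from a maximal inequality over the $|\B|$ curves and an $O(n)$-point grid, justified by the sub-exponential moments in \ref{A:exp} and the geometric decay $\delta_{2q}(G_i,l)=O(\chi^l)$ in \ref{A:G_delta}.

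Next I would discretize in $t$: replacing $\sup_{t\in\TT}$ by the maximum over the grid $\{t_s=s/n\}$ costs an error controlled by the Lipschitz continuity of $K''$ in \cref{A:K} and of $\tilde\Gamma_z$ in \ref{derive}, together with a maximal inequality over the $O(n|\B|)$ pairs of grid point and curve; by the rearrangement \eqref{XiB} the discretized quantity equals $\bigl|\sum_{j=1}^{2\nb}\bar{\mf \Xi}^{\B}_j/\sqrt{nb}\bigr|_\infty$. I would then invoke the Gaussian approximation of \cite{Dette2021ConfidenceSF} for the $(n-2\nb+1)|\B|$-dimensional sequence $(\bar{\mf \Xi}^{\B}_j)_{j=1}^{2\nb}$, verifying its hypotheses: each coordinate $\Xi_{z,j}=H_z(t_j,\F_j)$ is a smooth bilinear-plus-lower-order function of the centered products $\tilde e^{i,l}_{j,k}$, $\tilde e^{i}_{j,h}$, $\tilde e^{l}_{j,h}$ of differenced errors, so the product rule for physical dependence measures together with \ref{A:G_LS} and \ref{A:G_delta} yields geometrically decaying $\mathcal L^q$ dependence and, via \ref{A:exp}, enough moments; the long-run variances are non-degenerate uniformly in $z$ by \ref{S4} and Lemma C.3 of \cite{Dette2021ConfidenceSF}; no two coordinate block-sums are perfectly correlated by \cref{ass:rho}; and the block length $\nb$ and window $w$ obey the required growth, which is exactly what the remaining bandwidth conditions $|\B|^{1/q}hn^{\phi-1/2}b^{-1/2-1/q}\to0$, $|\B|^{1/q}n^{1/2}b^{7/2-1/q}\to0$ and $(nb)^{-1/2}\{|\B|/(hn^{\phi-1}+b^4+n^{-1/2}hb)\}^{1/(q+2)}\to0$ encode. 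This delivers the Gaussian vector $(\tilde{\mf Z}_i)_{i=1}^{2\nb}$ sharing the autocovariances of $(\bar{\mf \Xi}^{\B}_i)$, with error of order $(nb)^{-(1-11\iota)/8}$.

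The first display now follows by chaining the three errors --- the stochastic-expansion remainder $\Op(\sqrt{nb}\,c_n)$, the discretization error, and the Gaussian approximation error $O((nb)^{-(1-11\iota)/8})$ --- via the triangle inequality and the anti-concentration inequality for the supremum of a centered Gaussian vector, the latter converting the first two into the terms $\Theta((\sqrt{nb}c_n)^{q/(q+1)},n|\B|)$ and $\Theta(|\B|^{1/(q+1)}(nb)^{-q/(q+1)},n|\B|)$ of $\tilde\theta_n$. For the second display, note that $\check\rho_z$ is, up to a remainder of the same negligible order, the ratio estimator built from the local-linear estimator with equivalent kernel $\check K$, and the coefficients $A_j(\pm r)$ are constructed precisely so that $\int u^2\check K=0$, keeping the asymptotic bias of $\check\rho_z$ of order $b^3$; moreover, by the discussion after \cref{A:K} and after \cref{Ass:diff}, $\check K$ again satisfies \cref{A:K} and $\check\Gamma_z$ satisfies \ref{S4}--\ref{derive}. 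Hence the previous two paragraphs apply verbatim with $(K,\vartheta_z,\tilde\Gamma_z,\bar{\mf \Xi}^{\B})$ replaced by $(\check K,\check\vartheta_z,\check\Gamma_z,\check{\mf \Xi}^{\B})$, producing $(\check{\mf Z}_i)_{i=1}^{2\nb}$ and the claimed $o(1)$ rate.

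The main obstacle is verifying the hypotheses of the Gaussian approximation theorem for the difference-based innovations $\Xi_{z,j}$. Unlike the plug-in residuals, $\Xi_{z,j}$ is a centered bilinear form in differenced locally stationary processes, hence a genuinely second-order object, and one must show that it inherits local stationarity and geometric physical dependence and --- most delicately --- that its long-run variance is bounded away from zero uniformly over the possibly diverging index set $\B$, all while tracking how the bandwidths $b_z$, the lag $h$, and the $O(n^{\phi})$ change points propagate through the sparse block structure \eqref{XiB}. Reconciling this sparsity --- inherited from the compact support of $K$ --- with the diverging dimension $(n-2\nb+1)|\B|$ and the requirement that $(\tilde{\mf Z}_i)$ reproduce the autocovariances of $(\bar{\mf \Xi}^{\B}_i)$ is where essentially all the difficulty lies; once those structural facts are in place, the bandwidth conditions are merely the bookkeeping that forces every term in the chain to vanish.
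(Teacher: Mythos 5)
Your proposal is correct and follows essentially the same route as the paper: an explicit-rate stochastic expansion of $\tilde\rho_z-\rho_z$ onto $\vartheta_z$ (the paper's Proposition C.1 together with a maximal inequality giving the $(|\B|/b)^{1/q}$ factor), discretization onto the grid so the statistic becomes $|\sum_j\bar{\mf\Xi}^{\B}_j/\sqrt{nb}|_\infty$, the high-dimensional Gaussian approximation for the sparse vector sequence (the paper invokes Corollary 2 of \cite{zhang2018} for the $(nb)^{-(1-11\iota)/8}$ rate, with Lemmas C.1 and C.3 of \cite{Dette2021ConfidenceSF} for the comparison and variance steps), and the kernel substitution $K\mapsto\check K$ for the variance-reduced version. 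The decomposition of $\tilde\theta_n$ into the three error sources matches the paper's accounting.
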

The bandwidth conditions can be fulfilled with $|\B| = O(n^{\ell})$ for some $\ell >0$, namely when $|\B|$ is of  polynomial order of $n$. 
Notably, \cref{nonasynetwork} holds also for finite $|\B|$, bridging the gap of distributional properties between finite and diverging lags of time series. 
  \par Recall that $w$ is the block size in Step 3 of \cref{alg:jointreduce}, $m_z$ and $\eta$ are the smoothing parameters for the estimator of $\check \Gamma_z(\cdot)$, see the definitions in \cref{ap:est}. 
 Let $m = \max_{z\in\B} m_z$, and $\tilde h = \lfloor M \log n \rfloor$ for some sufficiently large constant $M$ related to $\chi$ in Assumption \ref{Ass:error}. 
We use $a \vee b$ to denote $\max\{a, b\}$.  Let $\vartheta_n = \frac{\log^2 n}{w} + \frac{w}{nb} + \sqrt{\frac{w}{nb}}(n|\B|)^{4/q}$, $g_n = |\B|^{1/q}\big(\sqrt{\frac{m}{n \eta^{2}}}+m^{-1}+\eta  +(\frac{m}{nb})^{1/2}(\frac{mb}{n})^{-1/(2q)}+ (nb)^{-1/2}b^{-1/q}\big) + w^{3/2}/n + n^{\phi} h/\sqrt{w}$.  
 Let $\check{\mathcal C}_n = \{\mf x(\cdot) = (x_z(\cdot), z \in \B)^{\T} \in [b,1-b]^{|\B|}: \sqrt{nb_z}|x_z(t) - \check \rho_{z}(t)| \leq \check r_{\bt}  \hat{\check \Gamma}_{z}(t), \forall t\in [b,1-b], z \in \B\}$.  The following \cref{boot:reduce} establishes the control of FWER and the theoretical improvement in the probability of recovering time-varying networks compared with the algorithm without variance reduction (\cref{alg:jointnetwork} of \cref{sec:diffscb}), where we obtain the asymptotic correctness and uniform variance reduction effect of RED-SCBs as a by-product.
\begin{theorem}\label{boot:reduce}
    Under the conditions of \cref{nonasynetwork} and the bandwidth condition
    \begin{align}
        \vartheta_n^{1/3}\left( 1 \vee \log (n|\B|/\vartheta_n) \right)^{2/3} +\big(g_n (n|\B|)^{1/q}\big)^{q/(q+2)} \to 0.\label{eq:rate}
    \end{align} We have the following results:\par
    (i) (Type I error control.) As $n$ and $B$ go to infinity, 
    \begin{align}
         P_{H_0}(\{\check T_z(t)
    \leq \check r_{\bt}, z\in \B, t\in[b,1-b]\}| \F_n) =  P(\{\rho_{z}(\cdot), z \in \B\} \in  \check{\mathcal C}_n | \F_n) \overset{p}{\to} 1-\alpha.
    \end{align} \par
    (ii) (The improved recovery probability.)   Under Assumptions \ref{ass:alter} and  \ref{ass:rho}, for a pre-specified significance level $0 < \alpha <1 $, $|\B| = O(\log n)$,  conditional on data as $n \to \infty$, $B \to \infty$, for any $\delta > 0$ and $r \in (-1,1)$, the probability of network recovery is improved in the sense that for sufficiently small $\alpha >0 $, 
   \begin{align}
\lim_{n \to \infty} \lim_{B \to \infty}\int_0^1 P(\check N(t) | \F_n) dt  
\geq \lim_{n \to \infty} \lim_{B \to \infty} \int_0^1 P(\tilde N(t)| \F_n) dt \geq 1 - \alpha,
\label{eq:recovery}
   \end{align}
 with probability approaching $1$, where $\tilde N(t)$ is the counterpart of $\check N(t)$ without variance reduction, see \eqref{eq:define_recover}.
    \end{theorem}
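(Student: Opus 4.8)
The plan is to pass everything through the Gaussian approximation of \cref{nonasynetwork} and then deal separately with the two genuinely new ingredients: consistency of the non-standard block wild bootstrap in \cref{alg:jointreduce} (for part (i)), and a sharp comparison of the tails of the two high-dimensional Gaussian maxima attached to $\check\rho$ and to $\tilde\rho$ (for the improvement in part (ii)). All probability statements below are conditional on $\F_n$ and asserted ``with probability approaching one''.

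\emph{Part (i).} By the second assertion of \cref{nonasynetwork}, $\max_{z\in\B}\sup_{t\in\TT}\sqrt{nb_z}\,|\check\rho_z(t)-\rho_z(t)|/\check\Gamma_z(t)$ is, up to an $o(1)$ Kolmogorov error, equal in law to $\big|(nb)^{-1/2}\sum_{i=1}^{2\nb}\check{\mf Z}_i\big|_\infty$, with $(\check{\mf Z}_i)$ Gaussian carrying the autocovariance of $(\check{\mf \Xi}^{\B}_i)$. It then suffices to show that the conditional law of the bootstrap statistic $\check Z_{\bt}$ given $\F_n$ matches the same maximum, which I would do by controlling two contributions. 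First, a plug-in error: replacing the population studentizers $\check\Gamma_z(t)$ and the population blocks built from $\Xi_{z,j}$ by $\hat{\check\Gamma}_z(t)$ and $\hat{\check{\mf S}}^{\B}_{l,j}$ perturbs the covariance of the approximating Gaussian, entrywise and uniformly, at rate $g_n$ (the estimation error of $\hat{\check\Gamma}_z$ and of the long-run covariances used in \cref{ap:est}), and the Gaussian comparison inequality of \cite{Dette2021ConfidenceSF} (of Chernozhukov--Chetverikov--Kato type) turns a covariance perturbation of this size into a Kolmogorov-distance error of order $\big(g_n(n|\B|)^{1/q}\big)^{q/(q+2)}$. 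Second, a block-approximation error: the overlapping window-$w$ blocks $\hat{\check{\mf S}}^{\B}_{l,j}$ reproduce the target autocovariance only up to bias $\log^2 n/w$ and sampling fluctuation $w/(nb)$ (together with the truncation term $\sqrt{w/(nb)}(n|\B|)^{4/q}$), which is $\vartheta_n$. The rate condition \eqref{eq:rate} forces both to vanish, so $\sup_x\big|\PP(\check Z_{\bt}\le x\mid\F_n)-\PP(|(nb)^{-1/2}\sum_i\check{\mf Z}_i|_\infty\le x)\big|=\op(1)$; combining with an anti-concentration (Nazarov-type) bound for $|(nb)^{-1/2}\sum_i\check{\mf Z}_i|_\infty$ gives $\check r_{\bt}\overset{p}{\to}$ the true $(1-\alpha)$-quantile, and duality between the SCBs and the hypotheses \eqref{eq:test_t} yields $P(\{\rho_z\}_{z\in\B}\in\check{\mathcal C}_n\mid\F_n)=P_{H_0}(\check T_z(t)\le\check r_{\bt},\ z\in\B,\ t\in[b,1-b]\mid\F_n)\overset{p}{\to}1-\alpha$.

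\emph{Part (ii), the $1-\alpha$ floor.} Write $\check N(t)=A^{\check}(t)\cap B^{\check}(t)$, where $A^{\check}(t)=\{\check T_z(t)\le\check r_{\bt}:z\in\B\cap\bar{\tilde\B}(t)\}$ is the no-false-positive event and $B^{\check}(t)=\{\check T_z(t)>\check r_{\bt}:z\in\tilde\B(t)\}$ the no-false-negative event. On $\mathcal G_n=\{t:|\rho_z(t)-g_z(t)|\ge\eta_n,\ z\in\tilde\B(t)\}$, which by \cref{ass:alter} has Lebesgue measure $\to1$, the triangle inequality and the uniform rate $\max_{z}\sup_{t}\sqrt{nb}\,|\check\rho_z(t)-\rho_z(t)|=\Op\big(\sqrt{\log(n|\B|)}\big)$ (a by-product of \cref{nonasynetwork}, since the studentizers are bounded away from $0$ and $\infty$) force $\check T_z(t)\gtrsim\sqrt{nb}\,\eta_n/\hat{\check\Gamma}_z(t)\asymp\lambda_n\log(n|\B|)$ for $z\in\tilde\B(t)$, whereas $\check r_{\bt}=\Op\big(\sqrt{\log(n|\B|)}\big)$; since $\lambda_n\to\infty$ we get $\check T_z(t)\gg\check r_{\bt}$, hence $P(B^{\check}(t)\mid\F_n)\to1$ uniformly on $\mathcal G_n$ and $\int_0^1 P(B^{\check}(t)\mid\F_n)\,dt=1-\op(1)$; the same holds for $B^{\tilde}$. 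From part (i), $P(A^{\check}(t)\mid\F_n)\ge P(\{\rho_z\}\in\check{\mathcal C}_n\mid\F_n)=1-\alpha+\op(1)$ for every $t$, so $P(\check N(t))\ge P(A^{\check}(t))-P(\overline{B^{\check}(t)})$ integrates to $\int_0^1 P(\check N(t)\mid\F_n)\,dt\ge 1-\alpha-\op(1)$, and likewise for $\tilde N$, which is the last inequality in \eqref{eq:recovery}.

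\emph{Part (ii), the improvement; the main obstacle.} The improvement rests on a \emph{uniform width reduction} of the RED-SCBs. At the population level, since $\check\vartheta_z$ is built from the equivalent kernel $\check K$ whose weights satisfy $A_0(\pm r)+A_1(\pm r)+A_2(\pm r)=1$ while the spreading by $\omega_z(t)=\delta(t)b_z$ with $\delta>0$ strictly decorrelates the averaged copies of $\hat\beta_z$, one gets $\int(\check K(u))^2\,du\le\int(K(u))^2\,du$, hence $\check\Gamma_z^2(t)\le\tilde\Gamma_z^2(t)$ uniformly in $(z,t)$, strictly when $\delta>0$. The delicate step is showing this survives studentization and bootstrap recalibration, i.e. that w.p. $\to1$ the RED half-width $\check r_{\bt}\hat{\check\Gamma}_z(t)/\sqrt{nb_z}$ is uniformly no larger than $\tilde r_{\bt}\hat{\tilde\Gamma}_z(t)/\sqrt{nb_z}$: this requires an extreme-value / Hüsler--Reiss analysis (\cite{HUSLER198891}) of the maxima of the high-dimensional Gaussian vectors $(\check{\mf Z}_i)$ and $(\tilde{\mf Z}_i)$, which carry the bespoke overlapping-block covariances of $(\check{\mf \Xi}^{\B}_i)$ and $(\bar{\mf \Xi}^{\B}_i)$, to conclude their leading term is kernel-free (so $\check r_{\bt}/\tilde r_{\bt}\to1$) and the width comparison is governed entirely by the constant ratio $\int\check K^2/\int K^2<1$. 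On the event that the width reduction holds, each true edge detected by the plain band is also detected by the narrower RED band (the shift between band centers $\check\rho_z-\tilde\rho_z$ being negligible against the signal gap $\eta_n$ on $\mathcal G_n$), so $\int_0^1 P(B^{\check}(t)\mid\F_n)\,dt\ge\int_0^1 P(B^{\tilde}(t)\mid\F_n)\,dt-\op(1)$, while the no-false-positive probability at each slice is asymptotically unchanged — for $|\B|=O(\log n)$ the studentized slice maximum is of strictly smaller order ($\Op(\sqrt{\log|\B|})$) than the threshold $\check r_{\bt}\asymp\sqrt{\log(1/b)}$, so both $P(A^{\check}(t)\mid\F_n)$ and $P(A^{\tilde}(t)\mid\F_n)$ tend to the common value $1$. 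To turn the two ``$\ge$ up to $\op(1)$'' statements into the clean inequality in \eqref{eq:recovery}, I would use $P(A\cap B)=P(A)+P(B)-1+P(\bar A\cap\bar B)$ together with the Gaussian correlation inequality of \cite{royen2014simple}, which lower-bounds $P(A^{\check}(t)\cap B^{\check}(t))$ by a product-type quantity and absorbs the cross-term, and integrate; since the false-positive degradation is in any case bounded by $\alpha$, it is dominated once $\alpha$ is small. The step I expect to be hardest is precisely this tail/quantile comparison for maxima of high-dimensional Gaussian vectors with the overlapping-block covariance, a phenomenon with no univariate-stationary precedent and the reason \cite{HUSLER198891} and \cite{royen2014simple} are invoked.
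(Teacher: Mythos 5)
Your proposal is correct and follows essentially the same route as the paper: part (i) by Gaussian approximation plus bootstrap covariance comparison with exactly the two error sources $\vartheta_n$ and $g_n$; the $1-\alpha$ floor by signal dominance under \cref{ass:alter}; and the improvement by combining $|\check r_{\bt}/\tilde r_{\bt}-1|=\op(1)$ — which the paper establishes precisely as you anticipate, via a H\"usler--Reiss-type lower bound and a Royen-type (Gaussian correlation inequality) upper bound showing both quantiles are $\sim\sqrt{2\log b^{-1}}$ with a kernel-free leading term — with the uniform variance ratio $\hat{\check \Gamma}_{z}(t)/\hat{\tilde \Gamma}_{z}(t)\overset{p}{\to}\sqrt{\check\kappa/\kappa}<1$, verified through the explicit formula for $\int \check K^2$ and the positivity results of \cite{cheng2007reducing}. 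The only cosmetic difference is that the paper uses the correlation inequality to pin the bootstrap quantile from above rather than to control the intersection of the no-false-positive and no-false-negative events, which it closes by elementary calculation once the uniform width reduction is in hand.
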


\begin{remark}
 \cref{boot:reduce} admits $|\B|$ to be either fixed or diverging due to \cref{nonasynetwork}.
The conditions of  \cref{boot:reduce} can be satisfied for a sufficiently large $q$,  if $w \asymp \lfloor n^{2/5} \rfloor$, $\eta \asymp  n^{-1/7}$, $b \asymp  n^{-1/5}$, $m \asymp  \lfloor n^{2/7} \rfloor$. 
\end{remark}

As the bootstrap iteration $B$ and the sample size go to infinity, \cref{boot:reduce} shows that conditional on data the RED-SCBs achieve the nominal level asymptotically and the FWER of \cref{alg:jointreduce} is asymptotically controlled.
The rate of \eqref{eq:rate} consists of two parts:  the first term accounts for the convergence rate of the bootstrap procedure if all the errors are observed,  
while the second term reflects the convergence rates of  $\check \rho_z(\cdot)$ and $\hat{ \check \Gamma}_z(\cdot)$. 
In the SCBs $\check{\mathcal C}_n$,  $\check r_{\bt}$  accounts for the randomness and dependence of $\sqrt{nb_z}|\check \vartheta_z(t)|/\hat{\check \Gamma}_z(t)$ among different lags, dimensions and over time, while $\hat{\check \Gamma}_z(t)$  explains idiosyncratic dispersion of $\sqrt{nb_z}|\check \vartheta_z(t)|$.

The result (ii) ensures the high probability of recovering the time-varying network structure as $\alpha$ can be arbitrarily small as well as the improved probability of recovering the networks. The equality of $1-\alpha$ will be achieved when $|\tilde \B(t)| = 0$ for all $t \in [b, 1-b]$. When no variance reduction technique is used (see \cref{alg:jointnetwork} of \cref{sec:diffscb}),  the bootstrap quantile and estimated variance of the correlations are denoted as $\tilde r_{\bt}$ and $\hat {\tilde \Gamma}_z(t)$ as counterparts of $\check r_{\bt}$ and $\hat {\check \Gamma}_z(t)$.  The improved recovery probability (ii) is shown by 
    \begin{align}
        |\check r_{\bt}/ \tilde r_{\bt} - 1| = \op(1), \quad  \hat{\check \Gamma}_{z}(t)/\hat{\tilde \Gamma}_{z}(t) \overset{p}{\to} \sqrt{\check 
 \kappa/\kappa} < 1, \label{eq:uniformreduce}
    \end{align}
    uniformly for $t \in [b, 1-b]$, 
    where  $\kappa = \int K^2(t) dt$, $ \check \kappa =  \int \check K^2(t)$. According to \cref{boot:reduce} and \cref{boot:net}, the widths of the SCB for the $k_{th}$ order cross-correlation function of $Y_i$ and $Y_l$  at time $t$ of \cref{alg:jointreduce}  and \cref{alg:jointnetwork} equal $2 \check r_{\bt}  (nb_{z})^{-1/2} \hat{\check \Gamma}_{z}(t)$ and 
$2 \tilde r_{\bt}  (nb_{z})^{-1/2} \hat{\tilde \Gamma}_{z}(t)$, respectively, where $z = (i,l,k) \in \B$.   The equation \eqref{eq:uniformreduce} yields the uniform variance reduction effect with probability approaching $1$, which leads to uniformly narrower SCBs and higher network recovery probability of \eqref{eq:recovery}.  With the kernel function satisfying Assumption \ref{A:K}, when $r= 1/\sqrt{2}$ and $\delta \geq 2/(\sqrt{2} - 1)$, we attain the same reduction proportion for variances as \cite{cheng2007reducing} but in a uniform sense, i.e., $\hat{\check \Gamma}_{z}(t) \approx 5\hat{\tilde \Gamma}_{z}(t)/16 $.
Additionally, numerical studies in \cref{sec:sim} show that such asymptotic improvement projects to the finite samples.

\section{Implementation}\label{sec:impl}

 In this section we discuss the selection of smoothing parameters $w, \eta$, $m_z, b_z, z=(i,l,k) \in \B$ to implement \cref{alg:jointreduce}. For the selection of $b_z, z \in \B $,  we employ the Generalized Cross Validation (GCV) proposed by \cite{craven1978smoothing}.
We select $b_z$ by minimising
\begin{equation}
    \operatorname{GCV}(b)=\frac{n^{-1}|\mathbf{Y}_z-\hat{\mathbf{Y}}_z|^{2}}{[1-\operatorname{tr}\{\mf Q_z(b)\} / n]^{2}},\quad \hat{\mf Y}_z(b) = \mf Q_z(b)\mf Y_z, \label{GCV_b}
   \end{equation}
where the  square matrix
$\mf Q_z(b)$ depends on $b$. 
 For $k \neq 0$, we use $\mf Y_z = (\tilde y_{1,k}^i\tilde y_{1,h}^l , \cdots, \tilde y_{n,k}^i\tilde y_{n,h}^l )^{\T}$ and $\hat{\mf Y}_z=(\hat {\beta}_{z} (t_1),...,\hat {\beta}_{z} (t_n))^\top$. If $k = 0$, we consider  $\mf Y_z = (\tilde y_{1,h}^i\tilde y_{1,h}^l , \cdots, \tilde y_{n,h}^i\tilde y_{n,h}^l )^{\T}$ and $\hat{\mf Y}_z=(\hat {\beta}_{h}^{i,l} (t_1),...,\hat {\beta}_{h}^{i,l} (t_n))^\top$. 
We use $b^{i,l}_k$ for the estimation of  $ \rho_{k}^{i,l}(\cdot)$, i.e., using it in the formulae of $\hat \beta_k^{i,l}(\cdot)$,  $\hat \beta_h^l(\cdot)$, $\hat \beta_h^i(\cdot)$ and $\hat \beta_h^{i,l}(\cdot)$ of $\check \rho_{k}^{i,l}(\cdot)$.\par

 To select $w$ and $\eta$ in \cref{alg:jointreduce}, we recommend using the extended minimum volatility (MV) method proposed in Chapter 9 of \cite{politis1999subsampling}, which is robust under complex dependence structures and independent of parametric assumptions. To be concrete, 
we first propose a grid of possible block sizes $\{w_{1}, w_{2},\cdots, w_{M_1}\}$ and bandwidths $\{\eta_1, \eta_2,\cdots, \eta_{M_2}\}$. Denote the sample variance of the bootstrap
statistics by $s^2_{w_{j_1},\eta_{j_2}}$ , i.e., 
\begin{align}
 s^2_{w_{j_1},\eta_{j_2}} =  \sum_{r=w_{j_1}}^{n-w_{j_1}}\sum_{l=0}^{n - 2\nb} \hat{\check {\mf  S}}_{l,r,w_{j_1},\eta_{j_2}}^{\B, \top} \hat{\check{\mf  S}}_{l,r, w_{j_1},\eta_{j_2}}^{\B},\quad 1 \leq j_1 \leq M_1, 1 \leq j_2 \leq M_2, \label{eq:mv}
\end{align}
where $\hat{\check{\mf  S}}_{l, r,w_{j_1},\eta_{j_2}}^{\B}$ is as defined in the Step 3 of \cref{alg:jointreduce}  using  $w_{j_1}$ in the block sums,  $m_z = \lf n^{2/7} \rf$ and $\eta_{j_2}$ in $\hat{\check \Gamma}_z(t)$, $z \in \B$, see \eqref{eq:Gammacrossreduce}. Then we select $(j_1, j_2)$ which minimizes the following criterion,
   \begin{align}
    \mathrm{MV}(j,j^{\prime}):= \mathrm{SD}\left\{\cup_{r=-1}^{1}\{s^2_{w_{j}, \eta_{j^{\prime}+r}}\} \cup \cup_{r=-1}^{1}\{s^2_{w_{j+r}, \eta_{j^{\prime}}}\}\right\},\nonumber
    \end{align} where SD stands for the  sample standard deviation. Given $z=(i,l,k)\in \B$ and the selected $(w_{j_1},\eta_{j_2})$,
 we  could refine $m_z$ by the MV method, first proposing a grid of block sizes $\{m_{z,1}, \cdots, m_{z, M_3}\}$ and then  selecting the $j_3$ which minimizes 
   \begin{align}
       \mathrm{SD}(j) =  \int_{0}^1 \mathrm{SD} \left\{ 
        \hat{\check \Gamma}^2_{z, m_{z,(r + j)}, \eta_{j_2}}(t), r = -1,0,1 \right\}dt,
    \end{align}
    where $\eta_{j_2}$ is the smoothing parameter we select in the last step.

\section{Simulation}\label{sec:sim}
 We consider the following multivariate time series models, $\mf Y_j = \bs \mu_j + \mf G(t_j, \FF_j)$, where $\mf G(t, \FF_j) = ( G_1(t,  \FF_j), G_2(t, \FF_j), G_3(t,  \FF_j))^{\T}$, $\bs \mu_j := \bs \mu(t_j) :=(\mu_1(t_j), \mu_2(t_j), \mu_3(t_j))^{\top}$, $1 \leq j \leq n$. Let $(\xi_{1j})$, $(\xi_{2j})$, $(\xi_{3j})$ be $i.i.d.$ standard Gaussian random variables, $\bs \xi_j=(\xi_{1j},\xi_{2j},\xi_{3j})^\top$ and $\FF_j=(\cdots, \bs\xi_j).$ 
 For the error process, we consider 
 \begin{align}
 \mf G(t, \FF_j) =\begin{pmatrix}
     0.075 &0,&0\\
     0& 0.15(0.9 + 0.1\sin(2\pi t) )&0\\
     0&(0.9 + 0.1t)&0.1
 \end{pmatrix} \mf G(t, \FF_{j-1}) + \bs \xi_j.
 \end{align}
 The mean function  $\bs \mu_j$ in the simulation is 
\begin{align}
      \mu_1(t) &= 4 - 0.5\sin(4t) + 0.5t ,\\ 
\mu_2(t)& =(1 - (t-0.5)^2)
\mf 1( 0 \leq t < 0.4 ~\text{or}~0.5 \leq t < 0.6) \\ &+ (3-0.5\sin (4t)) 
\mf 1(  0.4 \leq t < 0.5 ~\text{or}~ 0.6 \leq t \leq 1),\\
\mu_3(t) & = 0.3 \mf 1(0 \leq t < 0.35) +0.7 \mf 1(0.35 \leq t <0.55) + 0.2 \mf 1(0.55 \leq t \leq 1).
 \end{align}
In the simulation, we examine (i) the empirical coverage rates of the SCBs for the lag-$1$ cross-correlation functions; (ii) the simulated recovery rates of the time-varying correlation networks induced by the multiple testing problems \eqref{eq:test_t} with $g_{ij1}(t) \equiv 0.3+0.3t$, $1 \leq i \neq j \leq 3$. Notice that the  cross-correlation functions considered are asymmetric, i.e.,$\gamma^{2,3}_1(t) \neq \gamma^{3,2}_1(t)$.\par 
 We implement \cref{alg:jointreduce} and \cref{alg:jointnetwork} using the selection scheme introduced in \cref{sec:impl} and \cref{sec:diffscb}, respectively. For RED-SCBs, we select $r = 1/\sqrt{2}$ as recommended in \cite{cheng2007reducing} and $\delta = 1.3$ since there may exist many change points in trends. We first investigate the sensitivity of the proposed algorithms against the perturbation of bandwidths $b^{i,l}_k$. 
In \cref{tb:bandwidth} the lines of $b = \pm 10\%$ report the average widths, simulated coverage rates, and simulated network recovery rates when using $1 \pm 10\%$ of the GCV selected bandwidths. The results of \cref{tb:bandwidth} show that both \cref{alg:jointreduce} and \cref{alg:jointnetwork} are not sensitive to the perturbation of the bandwidths and demonstrate that both algorithms yield asymptotic correct SCBs since the widths of the SCBs decrease with sample size. The variance reduction technique works very well for the average widths of RED-SCBs are about $10\%$ narrower than the original ones on average when their simulated coverage rates are all close to the nominal levels. Moreover, we conduct two-sample t-tests at the nominal level of $0.1\%$ and find the variance reduction effect is significant. More importantly, equipped with RED-SCBs the simulated recovery rates of \cref{alg:jointreduce} outperform those using \cref{alg:jointnetwork} and 
rise close to $1$ as the sample size increases.
 \begin{table}[ht]
\centering
\small
\begin{tabular}{rrrrrrrrrrrrrr}
  \hline
 & &\multicolumn{6}{c}{\cref{alg:jointreduce}}
 &\multicolumn{6}{c}{\cref{alg:jointnetwork}}\\ 
  \hline
 & &\multicolumn{2}{c}{width} &\multicolumn{2}{c}{coverage}&\multicolumn{2}{c}{recovery }&\multicolumn{2}{c}{width} &\multicolumn{2}{c}{coverage}&\multicolumn{2}{c}{recovery }\\ 
 $n$ & $b$   & 90\%         & 95\%       & 90\%        & 95\%  & 90\%         & 95\% 
 & 90\%         & 95\%       & 90\%        & 95\%   & 90\%         & 95\%  \\
  \hline
&-10\% & 0.76 & 0.86 & 87.0 & 94.4 & 96.3 & 94.0 & 0.68 & 0.77 & 86.2 & 93.9 & 93.5 & 90.5 \\ 
500 & 0 & 0.70 & 0.79 & 90.9 & 96.0 & 97.8 & 96.3 & 0.64 & 0.73 & 87.2 & 93.6 & 95.4 & 93.1 \\ 
 & 10\% & 0.67 & 0.76 & 92.0 & 96.6 & 99.2 & 98.5 & 0.60 & 0.69 & 88.9 & 95.1 & 97.0 & 95.2 \\ 
   \hline
&-10\% & 0.63 & 0.71 & 87.9 & 94.8 & 98.9 & 97.7 & 0.56 & 0.63 & 87.6 & 94.4 & 96.7 & 94.5 \\ 
800  &0 & 0.59 & 0.67 & 89.2 & 95.4 & 99.3 & 98.5 & 0.53 & 0.60 & 87.1 & 94.6 & 98.1 & 96.7 \\ 
  &10\% & 0.56 & 0.63 & 89.7 & 95.9 & 99.7 & 99.3 & 0.50 & 0.57 & 88.8 & 94.6 & 98.9 & 97.9 \\ 
  \hline 
&-10\% & 0.58 & 0.66 & 86.6 & 94.6 & 99.3 & 98.6 & 0.51 & 0.58 & 88.5 & 94.4 & 97.9 & 96.2 \\ 
  1000 &0 & 0.55 & 0.62 & 90.0 & 95.3 & 99.6 & 99.1 & 0.48 & 0.55 & 87.3 & 94.2 & 98.8 & 97.6 \\ 
  &10\% & 0.51 & 0.58 & 90.9 & 96.6 & 99.8 & 99.6 & 0.46 & 0.52 & 89.1 & 95.4 & 99.2 & 98.5 \\ 
 \hline
\end{tabular}
\caption{Average widths, simulated coverage rates (in $\%$) and simulated recovery rates  (in $\%$) of \cref{alg:jointreduce} and \cref{alg:jointnetwork} at the nominal levels of $90\%$ and $95\%$ for $\B = \{(i,j,k), i=1,2,3, j=1,2,3, i\neq j , k=1\}$ with sample sizes $n = 500, 800, 1000$. }
\label{tb:bandwidth}
\end{table}

To further illustrate the uniform variance effect of RED-SCBs based methods, we plot the time-varying simulated false negative rates (FNR) of networks induced by $g_{ij1}(t) \equiv 0.3+0.3t$, $1 \leq i \neq j \leq 3$, which is defined by 
\begin{align}
    \mathrm{FNR}(t) = \frac{\sum_{z \in \B} \mf 1(\rho_z(t)\neq  g_{ij1}(t),  T_z(t) \leq r_{\bt})}{\sum_{z \in \B} \mf 1(\rho_z(t)\neq  g_{ij1}(t))},
\end{align}
where $T_z(t) = \tilde T_z(t)$, $r_{\bt} = \tilde r_{\bt}$ for the difference-based SCBs, and $T_z(t) = \check T_z(t)$, $r_{\bt} = \check r_{\bt}$ for the RED-SCBs.
\cref{fig:tvfn} shows that RED-SCBs reduce FNR uniformly in time and as the samples size increases, the power of both SCB-based methods for the inference of time-varying networks grows close to $1$ uniformly for each time point.
\begin{figure}
    \centering
    \includegraphics[width= 0.45\textwidth]{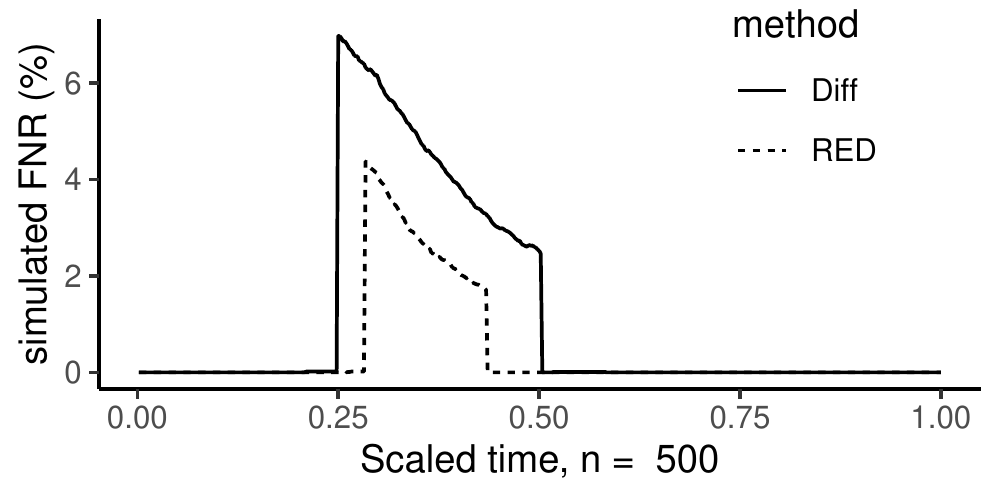} \includegraphics[width= 0.45\textwidth]{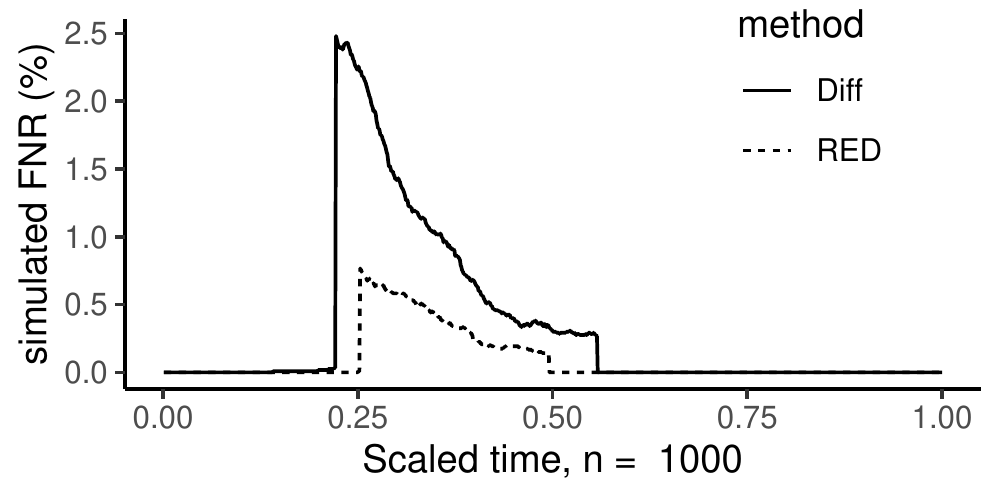}
    \caption{ Solid lines and dotted lines denote time-varying simulated false negative rates of \cref{alg:jointreduce}(RED) and \cref{alg:jointnetwork}(Diff), respectively. Left panel: $n=500$. Right panel: $n=1000$.}
    \label{fig:tvfn}
\end{figure}

\section{Data anaylsis}\label{sec:data}
Since the financial crisis of 2007-2009, it has been widely recognized that network structures and connections are 
central to risk measurement, for example, the interconnectedness among assets, credit, and intersectoral input-output linkages are
important for quantifying risks in financial portfolios, systemic and macroeconomic risks, respectively (see \cite{puliga2014credit}, \cite{DIEBOLD2014119}, \cite{acemoglu2012network} and reviews therein).  Among others, \cite{brunetti2019interconnectedness} classifies the financial network as  `physical network' and `correlation network'. The former is based on bank transactions and agent choices, while the latter,  which is also the focus of this paper, is built on exploring correlation-related interactions among asset returns. Moreover, correlation networks are essentially time-varying  and evolving which have been frequently investigated for the forecast of financial crises, see \cite{DIEBOLD2014119}.
In this section, we analyze the time-varying correlation networks of Daily WRDS World Indices, which are market-capitalization weighted
indices with dividends at daily frequency, see \url{https://wrds-www.wharton.upenn.edu/pages/get-data/world-indices-wrds/} for details. Specifically, we investigate the weekly average of Daily Country Return with Dividends (PORTRET) from June 1st, 2006 to June 1st, 2022. There are 819 time points for each country. We denote the weekly average of the absolute value of  PORTRET by $S_i$ and the weekly average of the squared logarithm of PORTRET by $R_i$, which are frequently employed as measures of volatility and risk.  First, we examine the independent assumption for each country by testing for the autocorrelations of $R_i$ and $S_i$ of lags 1,2,3 simultaneously and find RED-SCBs reject the null hypotheses of zero serial correlations of $R_i$ of China and Germany, and $S_i$ of Germany, France, and Japan at the significance level of $0.5\%$, respectively. Hence, the independent assumption of classical methods can be restrictive for these time series of indices. To infer the underlying time-varying network structure from this data set that has complex dynamics and trends, we employ \cref{alg:jointreduce} which is based on RED-SCBs. For better visualization of the time-varying nature of the connections, we display in the left panels of  \cref{fig:corr_net} and \cref{fig:corr_net_R} the heatmaps of confidence levels (i.e., the smallest confidence level that the SCBs cover $g_{ijk}(t)\equiv 0$, $(i,j,k) \in \B$, $t \in [b, 1-b]$). 
Notice that the correlation networks can be recovered from the heatmaps  via linking two nodes where the corresponding confidence levels are greater than a pre-specified threshold, see the right panels of \cref{fig:corr_net} and \cref{fig:corr_net_R}.   For the sake of simplicity, we use CHN, HK, DE, GB, FR, and JP  short for China, Hong Kong, Germany, the United Kingdom, France, and Japan. 
For brevity, \cref{fig:corr_net} and \cref{fig:corr_net_R} only display the time-varying networks with selected time points of June 1st of each year (the years with no connections are omitted), while \cref{fig:my_label} shows the sub-graphs at important time points based on $S_i$. Since our inference results hold simultaneously over time, other time points of interest can also be visualized similarly. 
As shown in  \cref{fig:corr_net}, around 06/01/2012 and 06/01/2018 when Hong Kong witnessed stock market falls, there are major changes in the connections between Hong Kong and the rest of the world in terms of absolute market-capitalization weighted indices. By contrast, Japan maintained significant connections with Europe countries (GB, FR, and DE) from 2010 to 2018.\par
\cref{fig:corr_net_R} illustrates the time-varying cross-correlation networks constructed from $R_i$. The aforementioned stock market crashes of Hong Kong in 2012 and 2018 have been identified from these volatility networks of 2011 and 2017 when there are drastic decreases in confidence in the cross-correlations between Hong Kong and the rest of the world. In the year of 2012, a series of financial crises including the European debt crisis occurred, Standard$\&$Poor’s downgraded France and eight other eurozone countries, and  connections among eurozone countries weakened in general. However, during the years of recovering from a series of financial crises, the interconnectedness between Japan and the United Kingdom remained relatively strong due to the favoring policies and relations, including the agreement to jointly develop weapons systems in 2012, led by Prime Minister David Cameron and the Brexit agreement on trading in 2016.
Between 2012 and 2016 the confidence levels of connections between China and the United Kingdom or Japan also fell violently, which might be driven by changing diplomatic and trading relations among those pairs of countries.
\begin{figure}
    \centering
     \includegraphics[width=0.45\linewidth]{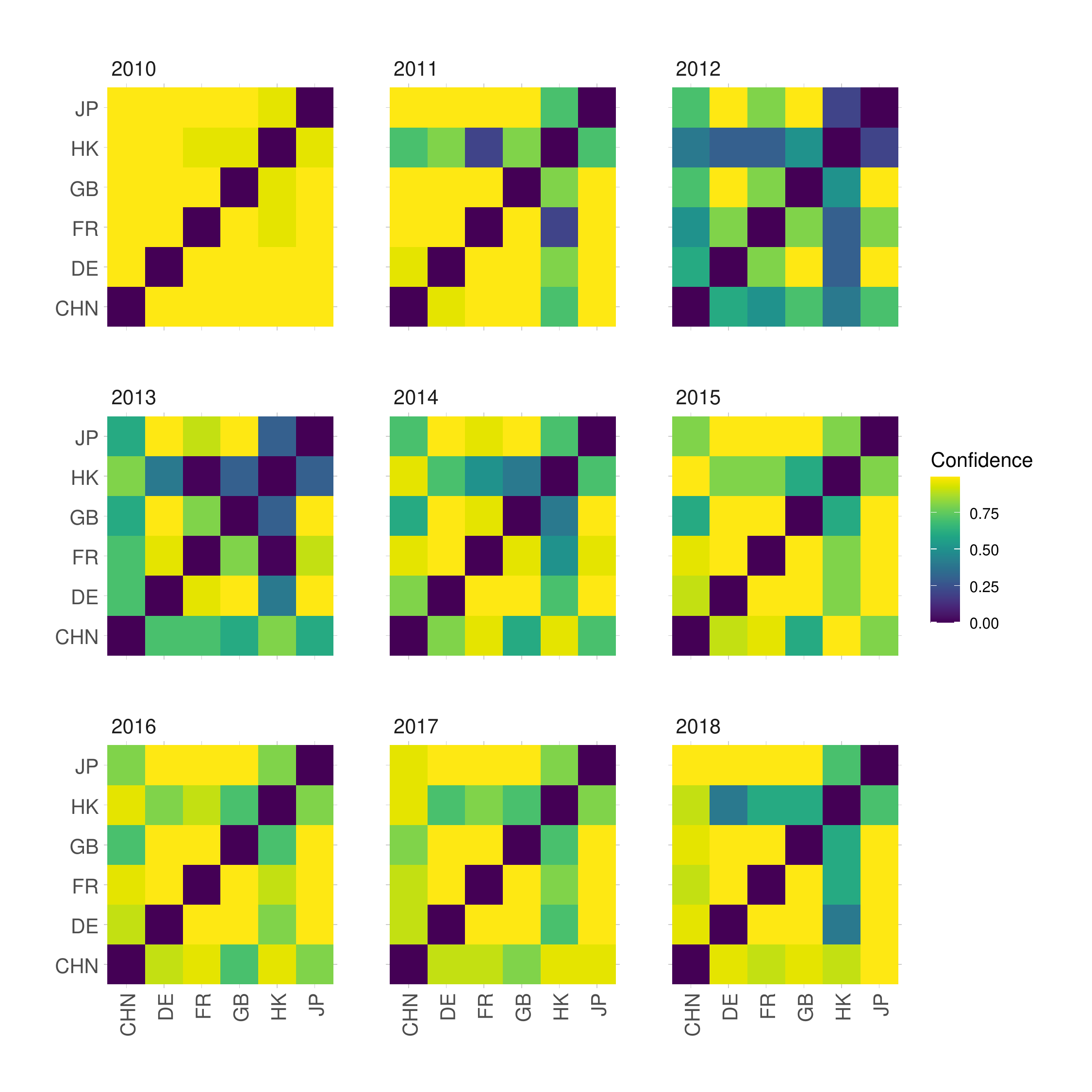}
    \includegraphics[width=0.45\linewidth]{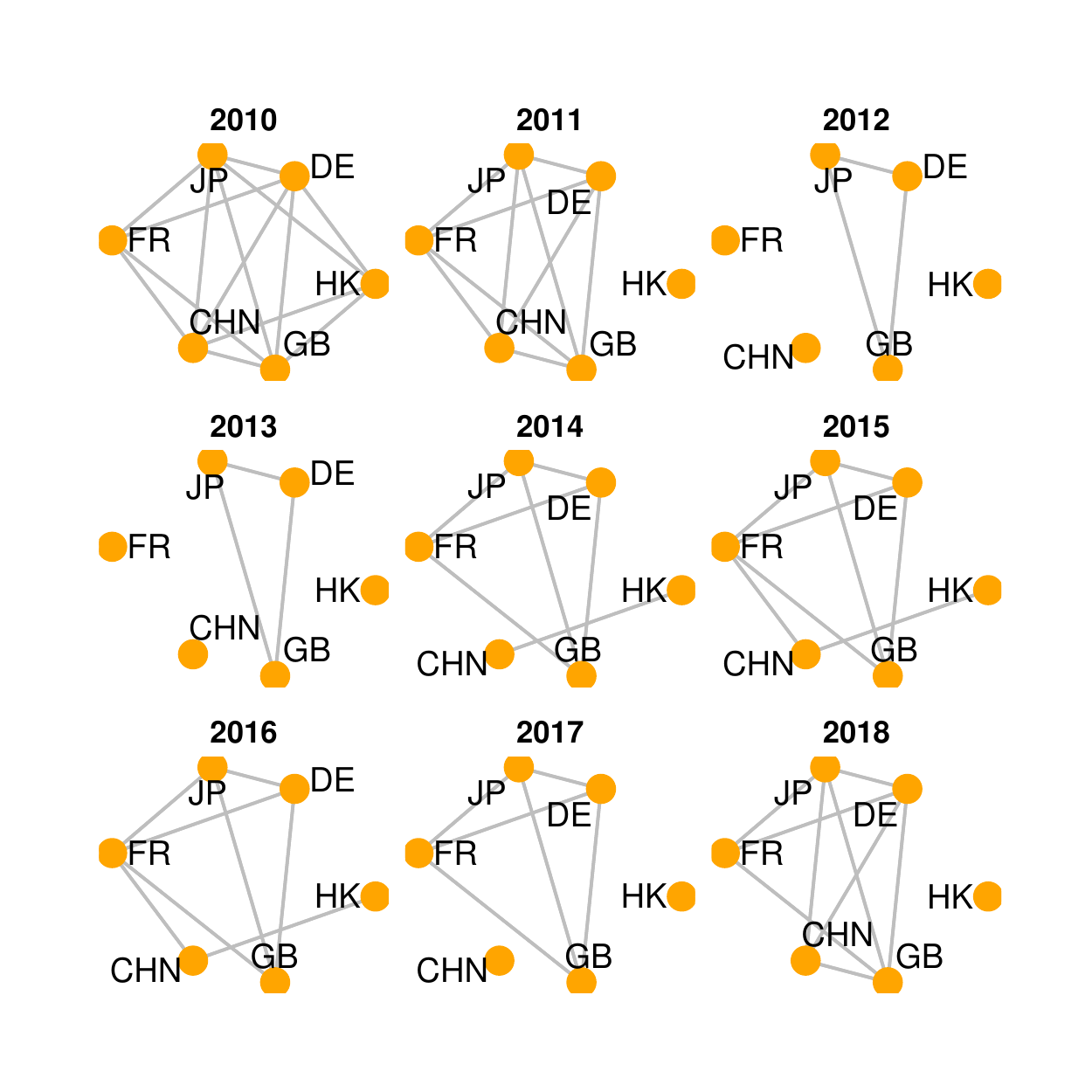}
    \caption{Left panel: the heatmap of confidence levels of cross-correlations of $S_i$ from 2010 to 2018 using  \cref{alg:jointreduce}.  Right panel: the snapshots of induced networks with threshold $0.9$.}
    \label{fig:corr_net}
\end{figure}

\begin{figure}
    \centering
     \includegraphics[width=0.45\linewidth]{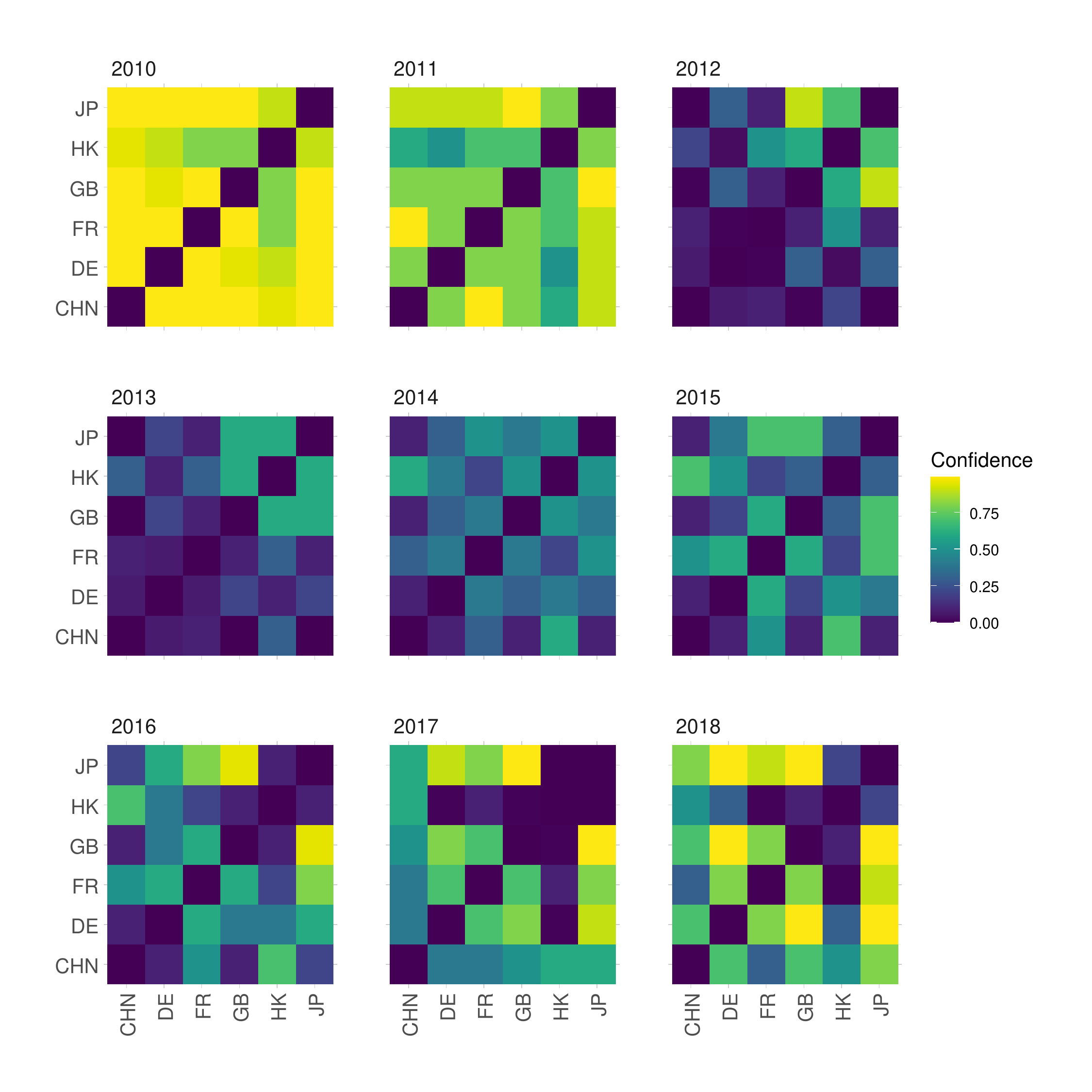}
    \includegraphics[width=0.45\linewidth]{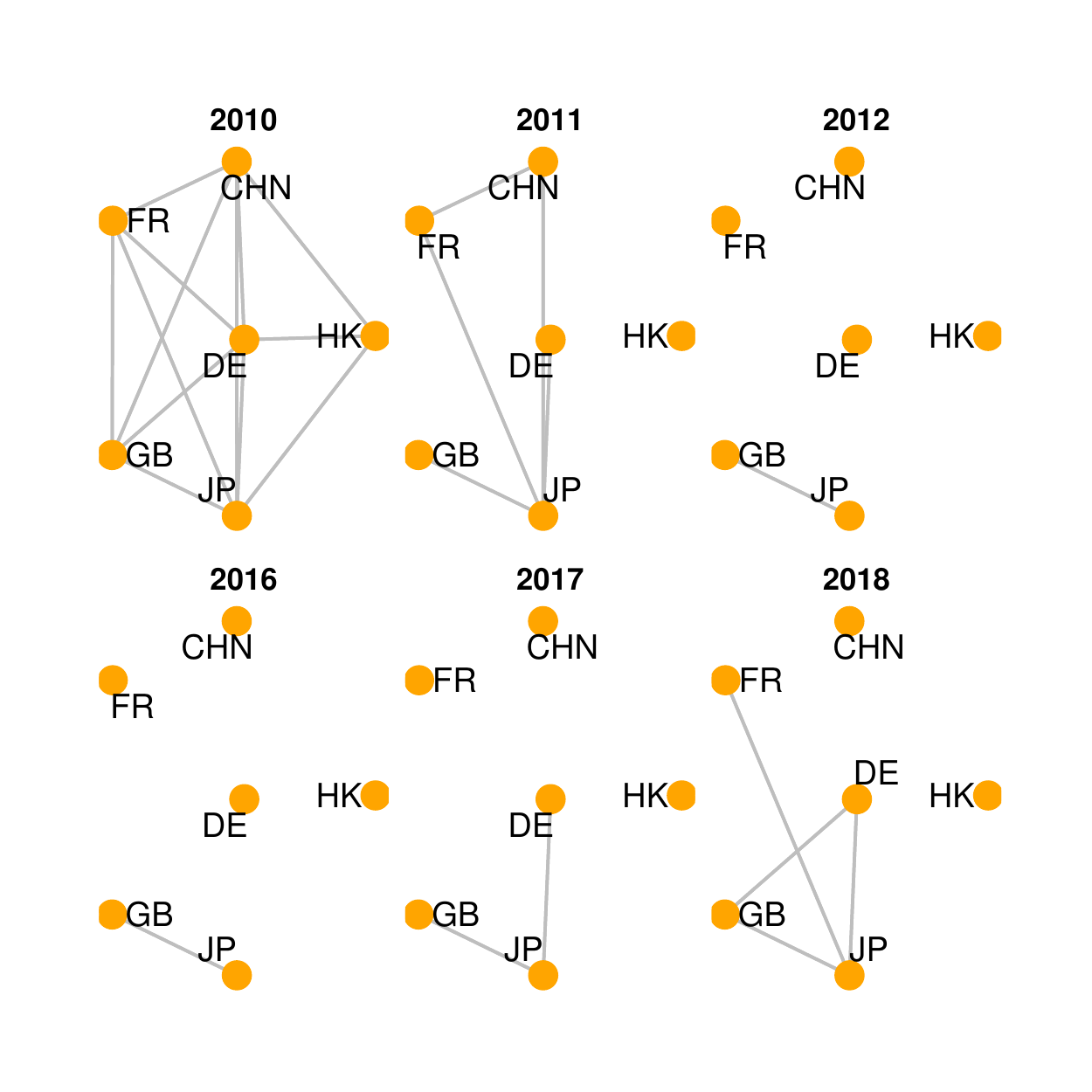}
    \caption{Left panel: the heatmap of confidence levels of cross-correlations of $R_i$ from 2010 to 2018 using  \cref{alg:jointreduce}.  Right panel: the snapshots of induced networks with threshold $0.85$, where networks with no connections are omitted.}
    \label{fig:corr_net_R}
\end{figure}

 \section{Conclusion and future work}\label{sec:future}
 In this paper, we propose a unified framework for inferring time-varying cross-correlation networks from multivariate time series vectors with complex trends, where the dimension of the vectors and the number of correlation-based measures can be both fixed and divergent. We leverage the difference-based estimators to circumvent the pre-estimation of unknown change points in the piecewise smooth trends and develop a bootstrap-assisted approach to infer the time-varying network structures under time series nonstationarity. 
As a by-product, we bridge the gap of correlation inference between fixed and diverging lags in the time series literature.  
We improve the probability of recovering time-varying networks by extending the variance reduction technique for univariate time series nonparametric estimation to the uniform width reduction of SCBs for non-stationary and nonlinear time series. We provide theoretical justifications for our proposed bootstrap-assisted algorithms and evaluate the finite sample performance by simulations studies and the analysis of a financial data set.

The interconnectedness in the dynamic networks can be gauged by not only cross-correlations but also other measures such as Granger-causality, variance decomposition, and cross-quantilogram, see \cite{ANDRIES2022100963} for a thorough review.  Statistical properties of other measures for the inference of large-scale networks constructed from general and possibly nonlinear and non-stationary processes have not been thoroughly investigated yet. Besides FWER, the false discovery rate (FDR) is also a popular and powerful tool for multiple testing and building networks.  We leave constructing networks through other measures or via other criteria including FDR as rewarding future work. 




\section*{Acknowledgement} Weichi Wu is the corresponding author and gratefully acknowledges NSFC (12271287).

\bigskip
\begin{center}
{\large\bf SUPPLEMENTARY MATERIAL}
\end{center}

\begin{description}

\item[Title:] Supplement to ``Time-varying correlation network analysis of non-stationary multivariate time series with complex trends"\par
The algorithm of using plug-in estimators when the trend functions are smooth, its theoretical properties, proofs of Theorem 3.1 and auxiliary results  are included in the supplement.
\end{description}
\appendix 
\section{Detailed formulae for Algorithms}\label{ap:est}
This section gives the exact formulae of the  estimators used in \cref{alg:jointreduce} and \cref{alg:jointnetwork}. Recall the definitions in  \cref{boot:reduce}, $\kappa = \int K^2(t) dt$, $ \check \kappa =  \int \check K^2(t) dt $. 
\subsection{Estimators for \texorpdfstring{\cref{alg:jointreduce}}{Algorithm 2} }
Recall that the definitions of $\check \beta_z(t)$, $\check \gamma^i_{0}(t)$, $\check \sigma_{i,l}(t)$,  $\check \rho_z(t)$ and $\check K(t)$, $z = (i,l,k) \in \B$, $1 \leq i, l \leq p$,
 in \cref{sec:reduction}. Define the variance-reduced residuals  $\hat{\check e}^{i,l}_{j,k}$ as $ \tilde y_{j,k}^i  \tilde y_{j, h}^i - \check \beta_{k}^{i,l}(t_j)$.
Then, the variance-reduced estimators of $ \Xi_{z,j}$ and $ \check \Gamma_{z}(t)$  are 
\begin{align}
    \check  \Xi_{z, j} = (\tilde y^i_{j, h} \tilde y^l_{j,h}/2 - \tilde y^i_{j,k} \tilde y^l_{j, h})/\check \sigma_{i,l}(t)-4^{-1}\check \rho^{i,l}_{k}(t_j) \left(\tilde y_{j,h}^{i,2}/\check\gamma^i_{0}(t_j) + \tilde y_{j,h}^{l,2}/\check \gamma^l_{0}(t_j) \right), \label{eq:Xicrossreduce}
\end{align}
and \begin{align}
    \hat{\check \Gamma}_z^{2}(t) = \frac{\check \kappa}{m} \sum_{i=1}^{n}\check \Delta^{2}_{z,i}\omega(t,i),\quad \omega(t,i) = K_{\eta}(t-t_i)/\sum_{j=1}^n K_{\eta}(t_j -t),\label{eq:Gammacrossreduce}
\end{align}
where $\check \Delta_{z,s} = \sum_{j=s}^{s+m -1} \hat{\check \Xi}_{z,j}$, and $\hat{\check \Xi}_{z,j} = (\hat{\check e}^{i,l}_{j,h}/2 - \hat{\check e}^{i,l}_{j,k})/\check \sigma_{i,l}(t)-4^{-1}\check \rho^{i,l}_{k}(t) \big(\hat{\check e}^i_{j,h}/\check\gamma^i_0(t_j) + \hat{\check e}^l_{j,h}/\check \gamma^l_0(t_j) \big)$. 
We then define $\hat{\check{\mf \Xi}}_j^{\B} $, $1 \leq j \leq 2\nb$, as
\begin{align}
   \hat{\check {\mf \Xi}}_{j}^{\B} = (\hat{\check {\mf \Xi}}_{j,\nb}^{\B, \T}, \cdots, \hat{\check {\mf \Xi}}_{n - 2\nb +j, n - \nb}^{\B, \T})^{\T},~\text{with}~\hat{\check{\mf \Xi}}_{j, s}^{\B} = (c_z K_{b_z}(t_j - t_s)\check \Xi_{z, j}/ \hat{\check \Gamma}_z(t_s), z\in \B)^{\T}.
   \label{eq:checkXi}
\end{align}

\section{Time-varying cross-correlation analysis via difference-based SCBs}\label{sec:diffscb}
In this section, we list the algorithm based on difference-based SCBs without variance reduction, and provide the estimators therein and its theoretical properties.

 \begin{algorithm}
    \caption{Time-varying cross-correlation analysis 
    }
    \begin{algorithmic} [1]
        \State Compute $\tilde \rho^{i,l}_k(t)$, $(i,l,k) \in \B$ defined in \eqref{eq:rho_network}. 
        \State Compute the $|\B|$-dimensional vectors $\hat{\bar{\mf \Xi}}_{j,s}^{\B}$ using \eqref{eq:hatXi}, $1 \leq j,s \leq n$.
        \State For a window size $w$, compute $\hat{\tilde{\mf S}}_{l,j}^{\B} = \sum_{s = j-w+1}^{j} \hat{\bar{\mf \Xi}}_{s+l, \nb + l}^{\B}-\sum_{s = j+1}^{j+w} \hat{\bar{\mf \Xi}}_{s+l, \nb + l}^{\B}$, where $l = 0,\cdots, n - 2\nb$.
        \For{$r = 1, \cdots, B$}
        \State Generate independent standard normal random variables $R_j^{(r)}$, $j=1,\cdots,n$. 
         \State Recall that $b = \max_{(i,l,k)\in \B} b^{i,l}_k$. Calculate 
          $$\tilde Z_{\bt}^{(r)} =\frac{\underset{0 \leq l \leq n- 2\nb}{\max}  \left| \sum_{j= w}^{ 2\nb - w} \hat{\tilde{\mf S}}_{l, j}^{\B}  R^{(r)}_{l+j}\right|_{\infty}}{\sqrt{2 w\nb }}.$$
           
    \EndFor
    \State Let $\tilde  r_{\bt}$ denote the $(1-\alpha)$-quantile of the bootstrap samples $\tilde Z_{\bt}^{(1)}, \cdots, \tilde Z_{\bt}^{(B)}$.
    \State Connect $i$ and $l$ at time $t \in [b,1-b]$ if  $|g_{ilk}(t)-\tilde \rho^{i,l}_k(t)| > \tilde r_{\bt}  (nb^{i,l}_k)^{-1/2} \hat{\tilde \Gamma}^{i,l}_k(t)$  for some $k$ such that $(i,l, k) \in \B $.

    \end{algorithmic}
    \label{alg:jointnetwork}
\end{algorithm}

\subsection{Estimators}
 Recall that $\tilde y^i_{j,k} = Y_{j,i} - Y_{j-k, i}$. For the estimation of $\Xi_{z, j}, z =(i,l,k) \in \B$ in \eqref{eq:Xidiscrete}, we use the following difference-based  estimator:
\begin{align}
    \tilde  \Xi_{z, j} = (\tilde y^i_{j, h} \tilde y^l_{j,h}/2 - \tilde y^i_{j,k} \tilde y^l_{j, h})/\tilde \sigma_{i,l}(t_j)-4^{-1}\tilde \rho^{i,l}_{k}(t_j) \left(\tilde y_{j,h}^{i,2}/\tilde \gamma^i_{0}(t_j) + \tilde y_{j,h}^{l,2}/\tilde \gamma^l_{0}(t_j) \right). \label{eq:Xicross}
\end{align}
Define the residuals of the local linear regression \eqref{eq:loclin}  as $\hat{\tilde e}^{i,l}_{j,k} = \tilde y_{j,k}^i  \tilde y_{j, h}^l - \hat \beta_{k}^{i,l}(t_j)$. For the estimation of ${\tilde \Gamma}_{z}^{2}(\cdot)$, we use 
\begin{align}
    \hat{\tilde \Gamma}_{z}^{2}(t) =  \frac{\kappa}{m}\sum_{i=1}^{n}\tilde \Delta^{2}_{z,i}\omega(t,i),\quad \omega(t,i) = K_{\eta}(t-t_i)/\sum_{j=1}^n K_{\eta}(t_j -t),\label{eq:Gammacross}
\end{align}
where $\tilde \Delta_{z,s} = \sum_{j=s}^{s+m -1} \hat \Xi_{z, j}$, and $\hat \Xi_{z, j} = (\hat{\tilde e}^{i,l}_{j,h}/2 - \hat{\tilde e}^{i,l}_{j,k})/\tilde \sigma_{i,l}(t)-4^{-1}\tilde \rho^{i,l}_{k}(t) \big(\hat{\tilde e}^i_{j,h}/\tilde \gamma^i_0(t_j) + \hat{\tilde e}^l_{j,h}/\tilde \gamma^l_0(t_j) \big)$. For $1 \leq j \leq 2\nb$, $\hat{\bar{\mf \Xi}}_j^{\B} $ is the estimator of $\bar{\mf \Xi}_j^{\B}$ (i.e., \eqref{eq:Xidiscrete}) using estimators  $\tilde  \Xi_{z, j}$ and $ \hat{\tilde \Gamma}_{z}^{2}(t)$ for $ \Xi_{z, j}$ and ${\tilde \Gamma}_{z}^{2}(t)$, i.e., 
\begin{align}
   \quad     \hat{\bar {\mf \Xi}}_{j}^{\B} = (\hat{\bar {\mf \Xi}}_{j,\nb}^{\B, \T}, \cdots, \hat{\bar {\mf \Xi}}_{n - 2\nb +j, n - \nb}^{\B, \T})^{\T},~\text{with}~\hat{\bar{\mf \Xi}}_{j,s}^{\B} = (c_z K_{b_z}(t_j - t_s)\tilde \Xi_{z, j}/ \hat{\tilde \Gamma}_z(t_s), z\in \B)^{\T}.
   \label{eq:hatXi}
\end{align}
\subsection{Theoretical properties}
Write $\sqrt{nb_z} |g_{z}(t)-\tilde \rho_z(t)|/\hat{\tilde \Gamma}_z(t) $ as $\tilde T_z(t)$, $t \in [0,1]$. Then, the FWER (conditional on data) can be directly expressed as $1-P_{H_0}(\{\tilde T_z(t)
    \leq \tilde r_{\bt}, z=(i,l,k)\in \B, t\in[b,1-b]\}| \F_n)$, where $H_0$ denotes the null hypotheses of \eqref{eq:test_t}.  The recovery probability of {\it time-varying} networks conditional on data is 
\begin{align}
  \quad \int_0^1 P(\tilde N(t)| \F_n) dt,\label{eq:define_recover}
\end{align}
where $\tilde N(t) = \{\tilde T_z(t)
    \leq \tilde r_{\bt}, z\in \B \cap \bar{\tilde \B}(t)\} \cap  \{\tilde T_z(t)
    > \tilde r_{\bt}, z \in \tilde \B(t)\}$.
Define
$
\tilde{\mathcal C}_n = \{\mf x(\cdot) = (x_z(\cdot), z\in \B)^{\T}\in [b,1-b]^{|\B|}:\sqrt{nb_z} | x_z(t) -\tilde \rho_z(t) | \leq \tilde r_{\bt}   \hat{\tilde \Gamma}_z(t), \forall t\in [b,1-b], z \in \B\}$. Recall that $\alpha \in (0,1)$ is the prespecified significance level.\par The following theorem ensures the FWER control and the recovery probability of \cref{alg:jointnetwork}. The proof is deferred to \cref{sec:proof}.

\begin{theorem}\label{boot:net} 
    Assume the conditions of \cref{nonasynetwork} and 
    \begin{align}
        \vartheta_n^{1/3}\left( 1 \vee \log (n|\B|/\vartheta_n) \right)^{2/3} +\big(g_n (n|\B|)^{1/q}\big)^{q/(q+2)} \to 0.
    \end{align}
(i) (Type I error control.) Conditional on data as $n$ and $B$ go to infinity,  we have
    \begin{align}
       P_{H_0}(\{\tilde T_z(t)
    \leq \tilde r_{\bt},z\in \B, t\in[b,1-b]\}| \F_n) =  P(\{\rho_z(\cdot), z \in \B\} \in  \tilde{\mathcal C}_n | \F_n) \overset{p}{\to} 1-\alpha. \label{eq:lim}
    \end{align}
(ii) (The lower bound of the recovery probability.) Under \cref{ass:alter}, we have $$\underset{n \to \infty}{\lim} \underset{B \to \infty}{\lim} \int_0^{1} P(\tilde N(t)| \F_n) dt\geq 1-\alpha,$$ with probability approaching $1$ for any arbitrarily small $\alpha>0$,
    where $\tilde N(t)$ is the event for correctly recovering the network at time $t$, see \eqref{eq:define_recover}. 
\end{theorem}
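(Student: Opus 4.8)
\noindent\emph{Proof strategy.} The plan is to derive both parts from the Gaussian approximation of \cref{nonasynetwork} (with $K$ in place of $\check K$, i.e.\ the first display of that theorem) combined with a consistency result for the non-standard block wild bootstrap of \cref{alg:jointnetwork}, and then to treat the recovery probability by a signal-strength argument. For part (i), observe that under $H_0$ we have $g_z(t)=\rho_z(t)$, so the event $\{\tilde T_z(t)\le\tilde r_{\bt},\ z\in\B,\ t\in[b,1-b]\}$ is exactly $\{\{\rho_z(\cdot),\,z\in\B\}\in\tilde{\mathcal C}_n\}$, and it suffices to show that, conditionally on $\F_n$, $P\big(\sup_{z\in\B}\sup_{t\in[b,1-b]}\sqrt{nb_z}\,|\tilde\rho_z(t)-\rho_z(t)|/\hat{\tilde\Gamma}_z(t)\le\tilde r_{\bt}\mid\F_n\big)\overset{p}{\to}1-\alpha$. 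First I would replace $\hat{\tilde\Gamma}_z$ by the deterministic $\tilde\Gamma_z$ using the uniform consistency $\max_{z\in\B}\sup_t|\hat{\tilde\Gamma}_z(t)/\tilde\Gamma_z(t)-1|=\op(1)$ (delivered by the rate $g_n$ together with \ref{S4}), and replace $\sup_{t\in[b,1-b]}$ by a maximum over the grid $\{t_s=s/n\}$ at the cost of an $O(1/n)$ term, using the Lipschitz regularity in \ref{derive} and \ref{A:gamma}; \cref{nonasynetwork} then matches the resulting statistic, up to $\tilde\theta_n=o(1)$, with $|(nb)^{-1/2}\sum_{i=1}^{2\nb}\tilde{\mf Z}_i|_\infty$, where $(\tilde{\mf Z}_i)$ share the autocovariance of the compressed blocks $(\bar{\mf\Xi}^{\B}_i)_{i=1}^{2\nb}$.

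\noindent The core of part (i), and the main obstacle, is to show that the bootstrap statistic $\tilde Z_{\bt}^{(r)}$ of \cref{alg:jointnetwork}, conditionally on $\F_n$, is asymptotically distributed as $|(nb)^{-1/2}\sum_{i=1}^{2\nb}\tilde{\mf Z}_i|_\infty$. I would first handle the oracle version in which the true errors $\Xi_{z,j}$ replace the residual-based $\tilde\Xi_{z,j}$ in the block sums $\hat{\tilde{\mf S}}_{l,j}^{\B}$: the defining feature is that the summands sharing the index $l+j$ carry a common multiplier $R_{l+j}$, which is precisely what makes the conditional covariance of the Gaussian vector $(2w\nb)^{-1/2}\sum_j\hat{\tilde{\mf S}}_{l,j}^{\B}R_{l+j}$ (indexed by $l$) reproduce the block autocovariance of $(nb)^{-1/2}\sum_i\bar{\mf\Xi}_i^{\B}$; a Gaussian comparison bound for maxima of high-dimensional Gaussian vectors — valid because \cref{ass:rho} rules out perfect correlation among the kernel-weighted errors — then places the oracle conditional law within $\vartheta_n^{1/3}(1\vee\log(n|\B|/\vartheta_n))^{2/3}$ of the target. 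Next I would bound the plug-in perturbation $\max_{l,j}\big|\hat{\tilde{\mf S}}_{l,j}^{\B}-\hat{\tilde{\mf S}}_{l,j}^{\B,\,\mathrm{oracle}}\big|$ arising from $\hat\beta$, $\tilde\rho$ and $\hat{\tilde\Gamma}$ inside $\tilde\Xi_{z,j}$ in \eqref{eq:hatXi}, whose magnitude is governed by $g_n$, and propagate it through a further comparison/anti-concentration step; this is what produces the $\big(g_n(n|\B|)^{1/q}\big)^{q/(q+2)}$ term in \eqref{eq:rate}. Finally, anti-concentration (Nazarov's inequality, using $\min_{z\in\B}\inf_t\tilde\Gamma_z^2(t)>0$ from \ref{S4}) converts closeness of the two distributions into $P(\cdot\le\tilde r_{\bt}\mid\F_n)\overset{p}{\to}1-\alpha$. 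The genuinely delicate point here is that the sparsity caused by the compactly supported kernel forces the rearrangement in \eqref{eq:hatXi} into a $(n-2\nb+1)|\B|$-dimensional vector on which all comparison and anti-concentration inequalities must be applied uniformly over a possibly diverging $\B$, so the standard multiplier-bootstrap arguments do not transfer directly.

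\noindent For part (ii), write $\tilde N(t)=A(t)\cap B(t)$ with $A(t)=\{\tilde T_z(t)\le\tilde r_{\bt},\ z\in\B\cap\bar{\tilde\B}(t)\}$ and $B(t)=\{\tilde T_z(t)>\tilde r_{\bt},\ z\in\tilde\B(t)\}$, so that $P(\tilde N(t)\mid\F_n)\ge P(A(t)\mid\F_n)-P(\bar B(t)\mid\F_n)$. On $\bar{\tilde\B}(t)$ one has $g_z(t)=\rho_z(t)$, hence $A(t)\supseteq\{\sqrt{nb_z}\,|\tilde\rho_z(t)-\rho_z(t)|/\hat{\tilde\Gamma}_z(t)\le\tilde r_{\bt},\ \forall z\in\B\}\supseteq\{\{\rho_z(\cdot),\,z\in\B\}\in\tilde{\mathcal C}_n\}$, and the data-conditional probability of this last event tends in probability to $1-\alpha$ by the very argument used in part (i) — which concerns the deviation $\tilde\rho_z-\rho_z$ and therefore does not rely on $H_0$ — so $\int_0^1 P(A(t)\mid\F_n)\,dt\ge1-\alpha+\op(1)$. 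For $\bar B(t)$ I would invoke \cref{ass:alter}: on the set $\mathcal G_n=\{t:|\rho_z(t)-g_z(t)|\ge\eta_n,\ z\in\tilde\B(t)\}$, which has $\lambda(\mathcal G_n)\to1$, for $z\in\tilde\B(t)$ we have $\tilde T_z(t)\ge\sqrt{nb_z}\big(|\rho_z(t)-g_z(t)|-|\tilde\rho_z(t)-\rho_z(t)|\big)/\hat{\tilde\Gamma}_z(t)$; since $\sqrt{nb}\,\eta_n=\lambda_n\log(n|\B|)$ by construction and $\sqrt{b_z/b}\ge\sqrt{c^*}$ by \cref{Ass:ck}, while $\max_{z,t}\sqrt{nb_z}\,|\tilde\rho_z(t)-\rho_z(t)|=\Op(\sqrt{\log(n|\B|)})$ and $\tilde r_{\bt}=\Op(\sqrt{\log(n|\B|)})$ (both from \cref{nonasynetwork} and standard maximal bounds for Gaussian vectors of dimension $\asymp n|\B|$), and $\hat{\tilde\Gamma}_z$ is bounded away from $0$ and from $\infty$ uniformly in $z,t$ by \ref{S4} and the smoothness assumptions, we obtain $\tilde T_z(t)/\tilde r_{\bt}\to\infty$ uniformly over $t\in\mathcal G_n$ with probability tending to one; hence $P(\bar B(t)\mid\F_n)\to0$ for $t\in\mathcal G_n$, and $\int_{\mathcal G_n^c}P(\bar B(t)\mid\F_n)\,dt\le\lambda(\mathcal G_n^c)\to0$, so $\int_0^1 P(\bar B(t)\mid\F_n)\,dt=\op(1)$. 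Combining the two bounds and letting $B\to\infty$ and then $n\to\infty$ gives $\int_0^1 P(\tilde N(t)\mid\F_n)\,dt\ge1-\alpha$ with probability approaching $1$, for every $\alpha\in(0,1)$, which is precisely \eqref{eq:define_recover}'s guarantee asserted in the theorem.
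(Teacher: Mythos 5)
Your proposal is correct and follows essentially the same route as the paper's proof: part (i) reduces via \cref{nonasynetwork} to matching the conditional law of the bootstrap statistic with the Gaussian proxy, done by first treating the oracle block sums (whose shared multipliers reproduce the block autocovariance, yielding the $\vartheta_n^{1/3}(1\vee\log(n|\B|/\vartheta_n))^{2/3}$ term) and then bounding the plug-in perturbation governed by $g_n$, while part (ii) combines the coverage statement with the signal-strength bound of \cref{ass:alter} exactly as in \eqref{eq:fixalter}. The only slight inaccuracy is your appeal to \cref{ass:rho} in part (i): that assumption is not among the hypotheses of this theorem and is not needed for the Gaussian comparison step, which relies only on the variance lower bound in \ref{S4}; the paper reserves \cref{ass:rho} for the tail expansion in \cref{boot:reduce}(ii).
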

\subsection{Implementation}
 The selection of smoothing  parameters for \cref{alg:jointnetwork} follows similarly to that of \cref{alg:jointreduce}. 
In particular, we use the same scheme for the selection of $b_z$, $m_z$,  $z = (i,l,k)\in \B$ and for $w$ and $\eta$, while the sample variance $s^2_{w_{i},\eta_j}$ of \eqref{eq:mv}  should be modified by replacing $\hat{\check {\mf S}}_{l,j}^{\B}$ with the variance-reduced estimator $\hat{\tilde
{\mf  S}}^{\B}_{l,j}$ of the Step 3 of \cref{alg:jointnetwork}.

\section{Proofs}\label{sec:proof}
For the clarity of proof, we may use $(i,l,k)$ for $z$ in the index of correlations, bandwidths and estimators. Recall that $b = \max_{(i,l,k) \in \B} b^{i,l}_k$. For any two random sequences $a_n$ and $b_n$, let $a_n \sim_1 b_n$ denote $\lim_{n \to \infty} a_n/b_n \to 1$ with probability approaching $1$, 
     $a_n \lesssim_1 b_n$ denote $\limsup_{n \to \infty} a_n/b_n  \leq 1$ with probability approaching $1$.  
 We use $a \vee b$ to denote $\max\{a, b\}$. Let $\Theta(a, b) = a\sqrt{1 \vee \log(b/a)}$, where $a,b$ are positive constants. Let  $\TT = [b, 1-b]$.
 
 We present the necessary propositions used in the main paper and the proofs of the theorems, whose proofs can be found in  \cref{sec:aux} of the online supplement. Define the $\FF_n$ measurable event $$B^{\prime}_n = 
    \left\{ \max_{(i,l, k) \in \B} \sup_{t \in \TT} \left|\tilde \gamma^{i,l}_k(t) - \gamma^{i,l}_k(t) \right|  > f_n q_n^{\prime}\right\},$$ where $f_n = (nb)^{-1/2}(|\B|/b)^{1/q}$, $q_n^{\prime}$ is a positive sequence such that $q_n^{\prime} \to \infty$, $f_nq_n^{\prime} \to 0$.
\begin{proposition}[Asymptotic behavior of estimated cross-correlation curves]\label{prop:rhok}
Let $\TT = [b, 1-b]$.
Under Assumptions \ref{Ass:error}, \ref{A:K}, \ref{Ass:ck}, \ref{Ass:diff},  and the
bandwidth conditions $nb^4 \to \infty$, $nb^7 \to 0$, $n^{2\phi - 1}b^{-1}h \to 0$, we have the following results:
\begin{enumerate}[label=(\roman*)]
\item Recall  $\vartheta^{i,l}_k(t) = \frac{1}{nb_{k}^{i,l}}\sum_{j=1}^n K_{b_{k}^{i,l}}(t_i-t)\Xi^{i,l}_{j,k}$, where $\Xi^{i,l}_{j,k}$ is as defined in \eqref{eq:Xidiscrete}. Then, we have 
$$
   \max_{(i,l,k) \in \B} \left\|\sup _{t \in \mathcal{T}}\left| \tilde{\rho}^{i,l}_{k}(t)-\rho^{i,l}_{k}(t)- \vartheta^{i,l}_k(t)\right| \mf 1(\bar B^{\prime}_n) \right\|_q= O\left(n^{\phi-1}b^{-1-1/q}h+ b^{3-1/q} + n^{-1/2}b^{-1/q}h\right).
    $$
\item Recall that $\check \vartheta^{i,l}_k(t) := \frac{1}{nb_{k}^{i,l}}\sum_{j=1}^n \check K_{b_{k}^{i,l}}(t_i-t)\Xi^{i,l}_{j,k}$. Then, we have 
\begin{align}
\max_{(i,l,k) \in \B}  \left\|\sup _{t \in \mathcal{T}}\left|\check{\rho}_{k}^{i,l}(t)-\rho^{i,l}_{k}(t)-\check \vartheta^{i,l}_k(t)\right|\mf 1(\bar B^{\prime}_n)\right\|_q=O\left(n^{\phi-1}b^{-1-1/q}h+ b^{3-1/q}  + n^{-1/2}b^{-1/q}h\right).
\end{align}

\end{enumerate}

\end{proposition}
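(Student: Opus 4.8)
The plan is to carry out a two-stage stochastic expansion: first reduce the estimated cross-correlation curve to an explicit ratio of difference-based local linear estimators, then linearize that ratio around its population target and collect the remainder terms into three sources of error---the bias from the piecewise smooth trend, the kernel bias of order $b^2$ relayed through the nonparametric estimator, and the residual from replacing $\tilde y$-products by centered $\tilde\epsilon$-products. Concretely, recall $\tilde\rho^{i,l}_k(t) = \tilde\gamma^{i,l}_k(t)/\tilde\sigma_{i,l}(t)$ with $\tilde\gamma^{i,l}_k(t) = \hat\beta^{i,l}_h(t)/2 - \hat\beta^{i,l}_k(t)$ and $\tilde\sigma_{i,l}(t) = \sqrt{\tilde\gamma^i_0(t)\tilde\gamma^l_0(t)}$. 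On the event $\bar B'_n$ the denominators are bounded away from zero (using \ref{S4} and the definition of $f_n q'_n$), so I may Taylor-expand $x\mapsto 1/\sqrt{x}$ and $x\mapsto 1/x$ to first order, which produces exactly the two-term structure of $\Xi^{i,l}_{j,k}$ in \eqref{eq:Xidiscrete}: the first term tracks $(\tilde\gamma^{i,l}_k(t)-\gamma^{i,l}_k(t))/\sigma_{i,l}(t)$ and the second, with the $4^{-1}\rho^{i,l}_k(t_j)$ factor, tracks the linearized variance correction $(\tilde\sigma_{i,l}(t)-\sigma_{i,l}(t))/\tilde\sigma_{i,l}(t)$.

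Next I would invoke the uniform stochastic expansion of $\hat\beta^{i,l}_k(t)$ around $\beta^{i,l}_k(t)$. This is where I would lean on \eqref{eq:sto_expansion} and, more sharply, on the $\mathcal L_q$-rate version encapsulated in \cref{cor:suprhok}/\cref{lm:loclin} of the supplement: the leading term is $(nb^{i,l}_k)^{-1}\sum_j K_{b^{i,l}_k}(t_j-t)\tilde e^{i,l}_{j,k}$, the kernel-bias term is $O(b^2\gamma''_z)$ which, after the ratio linearization, contributes at order $b^{3-1/q}$ once multiplied by $(|\B|/b)^{1/q}$ from the uniform-over-$\B$ maximum (this is where the $1/q$ exponents enter, via a maximal inequality / union bound over $|\B|$ curves each supremized over $t\in\TT$), and the trend-contamination term is controlled by the piecewise-Lipschitz bound on $\mu_{i,l}$ combined with the jump-size bound \ref{A:jump} and $d = O(n^\phi)$, giving the $n^{\phi-1}b^{-1-1/q}h$ and $n^{-1/2}b^{-1/q}h$ contributions (the factor $h$ coming from the $\tilde y^i_{j,h}$ legs of the products). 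Combining $\gamma^{i,l}_0$, $\gamma^i_0$, $\gamma^l_0$ expansions and using $\beta^{i,l}_h(t)\approx 2\gamma^{i,l}_0(t)$ from \eqref{eq:beta_gamma} ties the ratio expansion to $\vartheta^{i,l}_k(t)$, and part (ii) follows verbatim with $K$ replaced by $\check K$, which satisfies \ref{A:K} by the remark following \cref{A:K}, so no new argument is needed there.

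The main obstacle I anticipate is controlling the remainder of the ratio linearization \emph{uniformly over $t\in\TT$ and over all $z\in\B$ simultaneously}, rather than pointwise: the naive second-order Taylor remainder involves squares of the numerator/denominator fluctuations, and to show that $\max_{z}\sup_t$ of these squared fluctuations is negligible relative to the claimed rate requires a uniform $\mathcal L_q$ bound on $\sup_t|\hat\beta_z(t)-\beta_z(t)|$ with the correct $b^{-1/q}$ dependence, which is precisely why the event $\bar B'_n$ and the sequence $q'_n$ are introduced---restricting to $\bar B'_n$ lets me replace the genuinely random denominator by a deterministic lower bound and absorb the rest into the stated $O(\cdot)$. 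A secondary technical point is bookkeeping the $h$-dependence carefully: every product term carries one $\tilde\epsilon^{\cdot}_{j,h}$ factor whose difference structure introduces the $\gamma_{k-h}$, $\gamma_{-h}$ residuals of \eqref{eq:beta_gamma}, and one must check (using \ref{A:G_delta}, i.e., geometric decay of the physical dependence measure, with $\tilde h = \lfloor M\log n\rfloor$) that these are genuinely negligible for $k\le h-\tilde h$; this is routine but must be tracked to justify the $\approx$ in \eqref{eq:beta_gamma}. Everything else---the maximal inequality over $|\B|$ indices, the Nagaev/Rosenthal-type bounds for partial sums of locally stationary Bernoulli shifts---is standard given Assumptions \ref{Ass:error}--\ref{Ass:diff} and the cited lemmas.
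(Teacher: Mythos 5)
Your overall route matches the paper's: write $\tilde\rho^{i,l}_k = \tilde\gamma^{i,l}_k/\tilde\sigma_{i,l}$, linearize the ratio (the two terms of $\Xi^{i,l}_{j,k}$ in \eqref{eq:Xidiscrete} arising exactly as you describe, from the numerator and from the linearized $(\tilde\sigma_{i,l}-\sigma_{i,l})/\tilde\sigma_{i,l}$ correction), feed in the uniform $\mathcal L_q$ expansion of the difference-based local linear estimators from \cref{lm:loclin} with its three error sources ($\mf T_n$, the trend-product term over the $O(n^{\phi}h)$ contaminated indices giving $n^{\phi-1}b^{-1}h$, and the cross term handled by summation by parts giving $n^{-1/2}h$), use $\bar B'_n$ to bound the random denominators away from zero, and finish with Proposition B.1 of the Dette--Wu reference to pull the supremum over $t$ inside the $\mathcal L_q$ norm. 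Part (ii) is indeed verbatim with $\check K$.

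There is, however, one concrete gap in your bias accounting. You state the kernel bias is $O(b^2\gamma''_z)$ and claim it "contributes at order $b^{3-1/q}$ once multiplied by $(|\B|/b)^{1/q}$"; the arithmetic does not work ($b^2\cdot b^{-1/q}=b^{2-1/q}$, and the extra $|\B|^{1/q}$ only makes it larger). The actual reason the bias enters at $b^3$ is \cref{A:K}: the kernel satisfies $\int u^2K(u)\,du=0$, so the usual $b^2\mu_2\gamma''/2$ term vanishes identically in the local-linear normal equations \eqref{eq:loc_mat1}, and the leading bias is $O(b^3)$ from the Lipschitz continuity of $\gamma''_z$ in \ref{A:gamma}. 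Without invoking this higher-order-kernel property you would land at $b^{2-1/q}$, which is too large for the downstream Gaussian approximation (one needs $\sqrt{nb}\,b^3\to0$, i.e.\ $nb^7\to0$, as assumed). Relatedly, the $b^{-1/q}$ in the stated rate comes solely from converting $\sup_{t\in\TT}$ into an $\mathcal L_q$ norm over a grid of $O(b^{-1})$ points; the maximum over $\B$ sits outside the norm in the Proposition, and the $(|\B|/b)^{1/q}$ factor only appears later (in \eqref{eq:rhoqnorm}) when the max over $\B$ is also moved inside. These are fixable bookkeeping issues, but as written your argument does not deliver the claimed exponent on $b$.
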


\begin{proposition}\label{prop:Gammacp}
    Under the condition of \cref{prop:rhok}, assuming bandwidth conditions $\eta \to 0$, $n\eta \to \infty$, $m \to \infty$,  $m = O(n^{1/3})$, $|\B|^{1/q}(mb/n)^{-1/(2q)} \sqrt{m/(nb)}\to 0$, $nb^3 \to \infty$, $nb^6 \to 0$. Suppose for $(i,l, k) \in \B$, the twice derivatives of $\tilde \Gamma^{i,l,2}_{k}(t)$ are uniformly bounded on $(0,1)$. Then, we have 
    \begin{align}
      \left\|  \max_{(i,l,k) \in \B} \sup_{t \in \TT} \left|\hat{\tilde \Gamma}^{i,l,2}_{k}(t) - \tilde \Gamma^{i,l,2}_{k}(t) \right|  \right\|_q = O\left(|\B|^{1/q} g_n^{\prime}\right),
    \end{align}
    where $g_n^{\prime} =\sqrt{m/(n \eta^{2})}+1/m+\eta  +\sqrt{m/(nb)}(mb/n)^{-1/(2q)}$.
\end{proposition}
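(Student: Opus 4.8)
\emph{Strategy.} I would write $\hat{\tilde\Gamma}^{i,l,2}_{k}(t)-\tilde\Gamma^{i,l,2}_{k}(t)$ as the sum of an \emph{oracle} error $\frac{\kappa}{m}\sum_{i}(\Delta^{\mathrm{or}}_{z,i})^{2}\omega(t,i)-\tilde\Gamma^{i,l,2}_{k}(t)$, with $\Delta^{\mathrm{or}}_{z,i}=\sum_{j=i}^{i+m-1}\Xi_{z,j}$ and $\Xi_{z,j}$ the oracle quantity of \eqref{eq:Xidiscrete}, and an \emph{estimation} error caused by $\hat\Xi_{z,j}$ (which enters $\hat{\tilde\Gamma}^{2}_{z}$ through $\tilde\Delta_{z,i}=\sum_{j=i}^{i+m-1}\hat\Xi_{z,j}$) being built from the local-linear residuals $\hat{\tilde e}^{i,l}_{j,\cdot}$ and the plug-in curves $\tilde\sigma_{i,l},\tilde\gamma^{i}_{0},\tilde\rho^{i,l}_{k}$ rather than from $\Xi_{z,j}$. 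The oracle error I would handle by a bias--variance decomposition; the estimation error I would control by propagating the uniform rates of \cref{prop:rhok} and the bound defining $B'_n$ through the block sums. The corresponding statement for $\hat{\check\Gamma}^{2}_{z}$ follows verbatim after replacing $K$ by the equivalent kernel $\check K$, which also obeys \cref{A:K}.

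For the oracle bias, note that $\Xi_{z,j}=H_{z}(t_{j},\F_{j})$ with $\delta_{q}(H_{z},\ell)$ decaying geometrically: this is inherited from \cref{Ass:error} because $\tilde e^{i,l}_{j,k}$ is a product of differences of the geometrically weakly dependent $G_{i}$, the required moments being obtained by Cauchy--Schwarz from the $\mathcal L^{2q}$ bounds. Hence inter-block covariances are negligible and $\E[\tfrac{\kappa}{m}(\Delta^{\mathrm{or}}_{z,i})^{2}]=\kappa\,\sigma^{2}(\Xi_{z},t_{i})+O(m^{-1}+m/n)$, the $m^{-1}$ being the autocovariance-truncation bias of the block and $m/n$ the local-stationarity bias across it; by Lemma~C.3 of \cite{Dette2021ConfidenceSF}, $\kappa\,\sigma^{2}(\Xi_{z},t)=\tilde\Gamma^{i,l,2}_{k}(t)$. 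Averaging $\tilde\Gamma^{i,l,2}_{k}(\cdot)$ against the Nadaraya--Watson weights $\omega(t,\cdot)$ of bandwidth $\eta$, and using the bounded second derivative of $\tilde\Gamma^{i,l,2}_{k}$ together with the kernel moment conditions in \cref{A:K}, produces a smoothing bias $O(\eta^{2}+m/n)$, so the total bias is $O(m^{-1}+\eta)$: $\eta^{2}\le\eta$ and $m/n$ is of smaller order than the stochastic term $\sqrt{m/(n\eta^{2})}$ below.

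For the oracle variance, a second-moment computation — using the geometric dependence to bound the inter-block covariances of $(\Delta^{\mathrm{or}}_{z,i})^{2}$, and $q\ge4$ finite moments of $\Xi_{z,j}$ to get $\var((\Delta^{\mathrm{or}}_{z,i})^{2})=O(m^{2})$ — shows that the stochastic part of $\frac{\kappa}{m}\sum_{i}(\Delta^{\mathrm{or}}_{z,i})^{2}\omega(t,i)$ is governed by the $\asymp n\eta/m$ nearly-independent $m$-blocks falling in an $\eta$-window; a moment inequality for weakly dependent sums (as in \cite{Dette2021ConfidenceSF}), a grid-plus-modulus-of-continuity argument over $t\in\TT$, and a union over the possibly diverging $z\in\B$ upgrade this to the $\mathcal L^{q}$ bound $O(|\B|^{1/q}\sqrt{m/(n\eta^{2})})$. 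For the estimation error, write $\hat\Xi_{z,j}=\Xi_{z,j}+e_{z,j}$; by \cref{prop:rhok} (the stochastic expansions), the trend-removal lemma \cref{lm:loclin} (the cross-terms $\tilde\mu\,\tilde\epsilon$ and $\tilde\mu\,\tilde\mu$), and the definition of $B'_n$ (the ratios $\tilde\sigma/\sigma,\tilde\gamma/\gamma$), $e_{z,j}$ is, on $\bar B'_n$, an error of size $\Op((nb)^{-1/2})$ that varies slowly in $j$ on the scale of $nb_{z}$, up to $\Op(b^{3})$- and $\Op(n^{\phi-1}h)$-type terms that are dominated. Expanding $\tilde\Delta^{2}_{z,i}=(\Delta^{\mathrm{or}}_{z,i})^{2}+2\Delta^{\mathrm{or}}_{z,i}\sum_{j}e_{z,j}+(\sum_{j}e_{z,j})^{2}$ and using $\sum_{j=i}^{i+m-1}e_{z,j}\approx m\,e_{z,i}$, the cross term dominates and, after the $\tfrac1m\sum_{i}(\cdot)\,\omega(t,i)$ averaging and the uniformization in $(t,z)$, contributes the last summand $\sqrt{m/(nb)}\,(mb/n)^{-1/(2q)}$ of $g'_n$, the term $(\sum_{j}e_{z,j})^{2}$ being of smaller order. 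Collecting the bias, the stochastic term, and the estimation error gives $O(|\B|^{1/q}g'_n)$.

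\emph{Main obstacle.} The delicate point is the \emph{joint} uniformity — over $t\in\TT$, over the possibly diverging index set $\B$, and in $\mathcal L^{q}$ — of a block-based long-run-variance estimator for the nonlinear, non-stationary second-order process $\Xi_{z,j}$. This requires maximal and moment inequalities for weakly dependent sums of $\Xi_{z,j}$ whose dependence structure and moments must first be established from the product-of-differences form of $\tilde e^{i,l}_{j,k}$, via Cauchy--Schwarz and the $\mathcal L^{2q}$ assumptions, together with a careful accounting of how the $\Op((nb)^{-1/2})$-size but slowly varying estimation error in $\hat\Xi_{z,j}$ is amplified by the factor $m$ inside the block sums before it enters the \emph{squared}, kernel-smoothed statistic — a crude bound on the cross term $2\Delta^{\mathrm{or}}_{z,i}\sum_{j}e_{z,j}$ would be too lossy, so its slow variation in $j$ and the interplay of the bandwidths $b_{z},\eta,m,h$ must be exploited throughout.
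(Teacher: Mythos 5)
Your proposal is correct and follows essentially the same route as the paper: the $|\B|^{1/q}$ union bound reducing the maximum over $\B$ to a single-$z$ bound, a bias--variance analysis of the oracle block-sum long-run-variance estimator giving $1/m+\eta+\sqrt{m/(n\eta^{2})}$, and control of the plug-in error via the product of the $\mathcal{L}_{2q}$ maximal inequality $\bigl\|\max_{s}\bigl|\sum_{j=s}^{s+m-1}\Xi^{i,l}_{j,k}\bigr|\bigr\|_{2q}=O\bigl(m^{1/2}(m/n)^{-1/(2q)}\bigr)$ with the uniform rate of $\hat\beta^{i,l}_{k}-\beta^{i,l}_{k}$, which yields the term $\sqrt{m/(nb)}\,(mb/n)^{-1/(2q)}$. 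The paper merely compresses the single-$z$ analysis into citations of Theorem 5 of \cite{zhou2010simultaneous} and Proposition 1.3 of \cite{dette2019detecting}; your argument reconstructs exactly that content.
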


\subsection{Proof of \texorpdfstring{\cref{boot:net}}{Theorem B.1}}
 Let $\tilde Z_{\bt}$ denote $\tilde Z^{(r)}_{\bt}$ in one iteration of \cref{alg:jointnetwork}. By \cref{nonasynetwork}, it's sufficient to show that 
    \begin{align}
        \sup_{x \in \mathbb R} \left| P(\tilde Z_{\bt} \leq x| \F_n) - P\left(\left| \frac{1}{\sqrt{n b}}\sum_{i=1}^{2 \nb} \tilde{\mf Z}_i \right|_{\infty} \leq x\right)\right|  =\op(1),\label{eq:tildeZcompare}
    \end{align}
    where $(\tilde{\mf Z}_i)_{i=1}^{2\nb} \in \R^{(n - 2\nb+1)|\B|}$ is a sequence of zero-mean Gaussian vectors which share the same autocovariance structure with the vectors $(\bar{\mf \Xi}^{\B}_i)_{i=1}^{2\nb} $ as defined in \cref{nonasynetwork}.
    In the following proof, we omit the index $\B$ in $\hat{\tilde{\mf S}}^{\B}_{l,j}$ for simplicity. Define 
    \begin{align}
        \tilde{\mf Z}^{\diamond}_{a|\B| + c} = \left( (2w\nb)^{-1/2}\sum_{j = w}^{2\nb - w} \hat{\tilde S}_{a-1, j, c} R_{a+j-1}, a=1, \cdots, n-2\nb +1, 1 \leq c \leq |\B| \right),
    \end{align}
    where $\hat{\tilde S}_{l, j, r}$ denote the $r$th element of $\hat {\tilde{\mf S}}_{l, j}$. Let
    \begin{align}
        \tilde{\mf Z}^{\diamond} = \left( \tilde{\mf Z}_1^{\diamond, \T},\cdots, \tilde{\mf Z}_{(n-2\nb +1)|\B|}^{\diamond, \T}\right)^{\T},
    \end{align}
    and it follows that $\tilde Z_{\bt} = |\tilde{\mf Z}^{\diamond}|_{\infty}$. 
    Given the data, the conditional variance  for $k_1 \leq k_2$ is
    \begin{align}
        \sigma^{\tilde{\mf Z}^{\diamond}}_{k_1,k_2, j_1, j_2} &: = (2w\nb )^{-1}\E\left( \sum_{j=w}^{2\nb - w } \hat{\tilde S}_{k_1 -1, j, j_1} R_{k_1-1+j}  \sum_{j=w}^{2\nb - w }\hat{\tilde S}_{k_2 -1, j, j_2} R_{k_2-1+j}|\F_n\right)\\ 
        &= \sum_{j=w}^{2\nb - w -(k_2 - k_1)-1} \hat{\tilde S}_{k_1-1, j+k_2-k_1,j_1}\hat{\tilde S}_{k_2 - 1, j, j_2}/(2w\nb ).
    \end{align}
    Define $\bar \Xi_{i,s,j}^{\B}$ as the $j$th element of $\mf{\bar \Xi}_{i,s}^{\B}$. For simplicity, we omit the index $\B$ in $\bar \Xi_{i,s,j}^{\B}$.
    Define 
    \begin{align}
       \tilde S_{a-1, j, c } = \sum_{i=j-w+1}^{j} \bar{\Xi}_{i + (a-1), \nb + (a-1), c} - \sum_{i=j+1}^{j+w}\bar{\Xi}_{i + (a-1), \nb + (a-1), c},
    \end{align}
    and $\tilde{\mf Z}^{\dagger}$ by substituting $\hat{\tilde {S}}_{(a-1), j, c}$ in $\tilde{\mf Z}^{\diamond}$ by $\tilde{S}_{(a-1), j, c}$.
    \begin{align}
     \sigma^{\tilde{\mf Z}^{\dagger}}_{k_1,k_2, j_1, j_2} &:= (2w\nb )^{-1} \E\left( \sum_{j=w}^{2\nb - w } \tilde S_{k_1 -1, j, j_1} R_{k_1-1+j}  \sum_{j=w}^{2\nb - w } \tilde S_{k_2 -1, j, j_2} R_{k_2-1+j}|\F_n\right)\\ 
       &= \sum_{j=w}^{2\nb - w -(k_2 - k_1)} \tilde S_{k_1 - 1, j+k_2-k_1, j_1} \tilde S_{k_2 - 1, j, j_2}/(2w\nb).
    \end{align}
    Let $\tilde Z_{i,j}$ denote the $j$th element of  $\tilde{\mf Z}_{i}$ . Since $\tilde{\mf Z}_{i}$ has the same covariance structure as $\bar{\mf \Xi}_i^{\B}$, the covariance structure of $\frac{1}{\nb}\sum_{i=1}^{2\nb} \tilde{\mf Z}_{i}$ is 
    \begin{align}
        \sigma^{\tilde{\mf Z}}_{k_1, k_2, j_1, j_2} 
        &:= \E\left(\frac{1}{\nb}\sum_{i=1}^{2\nb }\tilde Z_{i, (k_1- 1)|\B| +j_1 }\sum_{i=1}^{2\nb }\tilde Z_{i, (k_2- 1)|\B| +j_2 }\right) \\ 
        & =\E \left(\frac{1}{\nb} \sum_{i=1}^{2\nb }\bar \Xi_{i+(k_1- 1), \nb +(k_1 -1),j_1}\sum_{i=1}^{2\nb }\bar \Xi_{i+(k_2- 1), \nb +(k_2 -1),j_2}\right).
    \end{align}
    By (5.12)  of \cite{Dette2021ConfidenceSF}, since $(\bar \Xi^{i,l}_{j, k})_{j=1}^n$ are LS and satisfy  geometric metric contraction due to \cref{Ass:error}, we have 
    \begin{align}
      \left\| \max_{k_1, k_2, j_1, j_2}|\sigma^{\tilde {\mf Z}^{\dagger}}_{k_1, k_2, j_1, j_2} - \sigma^{\tilde{\mf Z}}_{k_1, k_2, j_1, j_2}  |\right\|_{q/2} = O\left(\frac{\log^2 n}{w} + \frac{w}{nb} + \sqrt{\frac{w}{nb}}(n\log n)^{4/q}\right).
    \end{align}
    By (5.12) and (5.13) of \cite{Dette2021ConfidenceSF}, we have 
    \begin{align}
      \sup_{x \in \R}\left|P\left(\left. |\tilde{\mf Z}^{\dagger}|\leq x \right|\F_n\right) - P\left(\left| \frac{1}{\sqrt{n b}}\sum_{i=1}^{2 \nb} \tilde{\mf Z}_i \right|_{\infty} \leq x\right)\right| = \Op(\vartheta_n^{1/3}\left\{ 1 \vee \log (n|\B|/\vartheta_n) \right\}^{2/3}),
      \label{eq:dagger}
    \end{align}
    where $\vartheta_n = \frac{\log^2 n}{w} + \frac{w}{nb} + \sqrt{\frac{w}{nb}}(n\log n)^{4/q}$. 
    Let $g^{\prime}_n =|\B|^{1/q} (\sqrt{m/(n \eta^{2})}+1/m+\eta  +\sqrt{m/(nb)}(mb/n)^{-1/2q})$, $f_n = (nb)^{-1/2} b^{-1/q} |\B|^{1/q}$. Recall the estimators $\tilde \rho^{i,l}_k(\cdot)$ and $\hat{\tilde  \Gamma}^{i,l,2}_{k}(\cdot)$ as defined in \eqref{eq:rho_network} and \eqref{eq:Gammacross}, respectively.
    Define the $\FF_n$ measurable events
    \begin{align}
        A_n = \left\{ \max_{(i,l, k) \in \B} \sup_{t \in \TT} \left|\hat{\tilde  \Gamma}^{i,l,2}_{k}(t) - \tilde \Gamma^{i,l,2}_{k}(t) \right|  > g^{\prime}_n q_n\right\}, \quad B_n = 
    \left\{ \max_{(i,l, k) \in \B} \sup_{t \in \TT} \left|\tilde \beta^{i,l}_k(t) - \beta^{i,l}_k(t) \right|  > f_n q_n\right\},\label{eq:ABset}
    \end{align}
    where the positive sequence $q_n$ goes to infinity such that $(g_n^{\prime}+ f_n)q_n \to 0$, $\TT = [b, 1-b]$. Then by \cref{lm:loclin} and \cref{prop:Gammacp}, $P(A_n \cup B_n) = O(q_n^{-q})$.
    Then, for some sufficiently large constant $M$ and due to the conditional normality, we have
    \begin{align}
        \E(|\tilde{\mf Z}^{\diamond} - \tilde{\mf Z}^{\dagger}|^q_{\infty}\mf 1(\bar A_n \cap \bar B_n) |\FF_n) \leq M\left| \frac{\log (n |\B|)}{2w\nb}\max_{\substack{1 \leq r \leq |\B|\\0 \leq l \leq  n-2\nb }}\sum_{j=w}^{2\nb - w}(\hat{\tilde S}_{l,j,r} - \tilde S_{l,j,r})^2 \mf 1(\bar A_n \cap \bar B_n) \right|^{q/2}.\label{eq:square0}
    \end{align}
    
    Recall the definition of  $\hat{\bar{\mf \Xi}}_{i + l, \nb +l}^{\B}$ in \eqref{eq:hatXi}.
 Let $\hat{\bar{\Xi}}_{i + l, \nb +l, r}$ denote the $r$th element of $\hat{\bar{\mf \Xi}}_{i + l, \nb +l}^{\B}$.  Define  $\hat{\tilde S}^{\epsilon}_{a,j,r}$ by substituting $\tilde y^{i}_{j,k}$ in $\hat{\tilde S}_{a,j,r}$ with $\tilde \epsilon^{i}_{j,k}$,  
  $\hat{\tilde S}^{e}_{a,j,r}$ by substituting $\tilde \epsilon^{i}_{j,k}\tilde \epsilon^{l}_{j,h}$ in $\hat{\tilde S}^{\epsilon}_{a,j,r}$ with  $\tilde e^{i,l}_{j,k}$,
   $\hat{\tilde S}^{e, \Gamma}_{a,j,r}$ by substituting $\hat{\tilde \Gamma}^{i,l}_k(t)$ in $\hat{\tilde S}^{e}_{a,j,r}$ with ${\tilde \Gamma}^{i,l}_k(t)$ , and  $\hat{\tilde S}^{e, \Gamma, \rho}_{a,j,r}$ by substituting $\tilde \rho^{i,l}_k(t)$ in $\hat{\tilde S}^{e, \Gamma}_{a,j,r}$ with $ \rho^{i}_{j,k}(t)$. 
    Note that $\hat{\tilde S}^{e, \Gamma}_{a,j,r} = \tilde S_{l,j,r}$. 
    By the continuity of $\beta^{i,l}_k(\cdot)$, via the proof of  \cref{prop:rhok}
    , we have 
    \begin{align}
        &\frac{1}{\sqrt{2w\nb}}\left\| \max_{\substack{1 \leq r \leq |\B|\\0 \leq l \leq  n-2\nb }}\left(\sum_{j=w}^{2\nb - w}(\hat{\tilde S}_{l,j,r} - \tilde S_{l,j,r})^2 \mf 1(\bar A_n \cap \bar B_n)\right)^{1/2}  \right\|_q \\ 
          &\leq  \frac{C(n|\B|)^{1/q}}{\sqrt{2w\nb}}\max_{\substack{1 \leq r \leq |\B|\\0 \leq l \leq  n-2\nb }}\left|\sum_{j=w}^{2\nb - w}\left( \|\hat{\tilde S}_{l,j,r} -  \hat{\tilde S}^{\epsilon}_{l,j,r})\|_q^2 +  \|\hat{\tilde S}^{\epsilon}_{l,j,r} - \hat{\tilde S}^{e}_{l,j,r} \|_q^2 \right. \right. \\
           &\left. \left. + \|\hat{\tilde S}^{e}_{l,j,r} - \hat{\tilde S}^{e, \Gamma}_{l,j,r}\|_q^2 + \|\hat{\tilde S}^{e, \Gamma}  - \tilde S_{l,j,r}\|_q^2 \right) \mf 1(\bar A_n \cap \bar B_n)  \right|^{1/2} \\ 
        &= O((n^{\phi}h/\sqrt{w} +w^{3/2}/n  +  g^{\prime}_nq_n + f_nq_n) (n|\B|)^{1/q}),\label{eq:square}
    \end{align}
    where $C$ is a sufficiently large constant, and the last inequality is due to triangle inequality.
    Let $q_n = \left\{\left(g^{\prime}_n + f_n +  w^{3/2}/n +  n^{\phi} h/\sqrt{w}\right) (n|\B|)^{1/q}\right\}^{-1/(q+2)}$. Under \eqref{eq:rate}, we have $q_n \to \infty$, and $(g_n^{\prime} + f_n) q_n \leq g_n q_n \to 0$. By \eqref{eq:dagger}, \eqref{eq:square0} and \eqref{eq:square}, Lemma C.1 of \cite{Dette2021ConfidenceSF}, we have shown \eqref{eq:tildeZcompare}, i.e.,
    \begin{align}
        &\sup_{x \in \R} \left| P(\tilde Z_{\bt} \leq x| \F_n) - P\left(\left| \frac{1}{\sqrt{n b}}\sum_{i=1}^{2 \nb -1} \tilde{\mf Z}_i \right|_{\infty} \leq x\right)\right|\\
        & =\Op \left(\vartheta_n^{1/3}\left\{ 1 \vee \log (n|\B|/\vartheta_n) \right\}^{2/3} +\Theta\left(q_n^{-q}, n |\mathbb B |\right)\right).
    \end{align}
Under \cref{ass:alter}, for $t \in [b, 1-b]$, $z = (i,l,k) \in \tilde \B(t)$, by the concentration inequality of high-dimensional Gaussian process, we have uniformly 
\begin{align}
    \sqrt{nb_z}|\tilde \rho_z(t) - g_z(t)|/\hat{\tilde \Gamma}_z(t) &\geq  \sqrt{nb_z}|g_z(t) - \rho_z(t)|/\hat{\tilde \Gamma}_z(t) -  \max_{z\in \B} \sup_{t\in \TT}\sqrt{nb_z}|\tilde \rho_z(t) - \rho_z(t)|/\hat{\tilde \Gamma}_z(t) \\
    & \geq \lambda_n \log(n|\B|) - \Op(\log(n|\B|)) \to \infty.\label{eq:fixalter}
\end{align}
 Combining  \eqref{eq:lim} and \eqref{eq:fixalter}, under \cref{ass:alter}, it follows that $\underset{n \to \infty}{\lim} \underset{B \to \infty} {\lim} \int_0^1 P(\{\tilde T_z(t)
    \leq \tilde r_{\bt},z\in \B \cap 
    \bar{\tilde \B}(t)\} \cap  \{\tilde T_z(t)
    > \tilde r_{\bt}, z \in \tilde \B(t)\}| \F_n) dt \geq \underset{n \to \infty}{\lim} \underset{B \to \infty} {\lim}  P(\{\rho_z(\cdot), z \in \B \} \in \tilde{\mathcal C}_n|\FF_n) =1 - \alpha$, with probability approaching $1$. \hfill $\Box$


\subsection{Proof of \texorpdfstring{\cref{boot:reduce}}{Theorem 4.1}}
 Proof of (i). The result follows from \cref{prop:rhok} and similar arguments  in the proof of \cref{boot:net}.\\
Proof of (ii).  After a further investigation of \eqref{eq:fixalter},  we have
$
   \left|\sqrt{nb_z}|\check \rho_z(t) - g_z(t)|- \sqrt{nb_z}|\tilde \rho_z(t) - g_z(t)|\right| = \Op(\log(n|\B|))=\op(\lambda_n)$. Therefore, if \eqref{eq:uniformreduce} holds, \eqref{eq:recovery} will follow from \cref{boot:net} and elementary calculation for a sufficiently small $\alpha$. 
In the following we shall prove \eqref{eq:uniformreduce}. By \cref{prop:Gammacp}, we have \begin{align}
    \max_{z \in \B} \sup_{t\in\TT} \left|\hat{\check \Gamma}_z(t) /\hat{\tilde \Gamma}_z(t) - {\check \Gamma}_z(t) /\tilde \Gamma_z(t)\right| = \op(1).
\end{align}
Write $\Xi_{j,k}^{i,l} = \tilde L^{i,l}_k(t_j , \FF_j)$, where $ \tilde L^{i,l}_k(\cdot ,\cdot)$ is a nonlinear filter, $(i,l,k) \in \B$. By a careful investigation of Lemma C.3 of \cite{dette2022}, we have $\tilde \Gamma_{k}^{i,l,2}(t) =   \kappa \sigma^2(\tilde L^{i,l}_{k}, t)$ and $\check \Gamma_{k}^{i,l,2} (t)=\check \kappa \sigma^2(\tilde L^{i,l}_{k}, t)$, where $\sigma^2(\tilde L^{i,l}_{k}, t)$ is the long-run variance of $\tilde L^{i,l}_k(t , \FF_j)$.  Recall that $\check{K}(t) = (\check{K}_{+}(t) + \check{K}_{-}(t))/2$ where $\check{K}_{\pm}(t)=\sum_{j=0, 1, 2}A_{j}(\pm r)K (t+(\pm r + 1 - j)\delta)$,  $A_{0}(r)=r(r-1) / 2, A_{1}(r)=\left(1-r^{2}\right),  A_{2}(r)=r(r+1) / 2$ for some selected $r \in (-1,1)$ and non-negative constant $\delta$. It follows from the elementary calculation that 
   \begin{align}
   \int \check K^2(t) dt =  \int K^2(t) dt  - r^2(1-r^2)C( \delta) - \frac{1}{2}D(\delta),\label{eq:defcd}
   \end{align}
   where 
$         C(\delta) = 1.5 C(0, \delta)-2 C(0.5, \delta)+0.5 C(1, \delta)$,
         $C(a, \delta)= \frac{1}{2}\int K(u-a \delta) K(u+a \delta) + K(u-a \delta) K(u+a \delta) d u$ and 
         $D(\delta) = \kappa - r^2(1-r^2)C(\delta)- \frac{1}{4}\{2(2-3r^2)(1 - r^2)C(r, \delta)  + r^2(1-r)^2C(r+1, \delta) + r^2(r+1)^2C( r-1, \delta) - 4r(1-r)(1-r^2)C(r+1/2, \delta)+ 4r(1+r)(1-r^2)C(r-1/2, \delta) \}.
$
   The positivity of $ C(\delta) $ and $D(\delta)$ when $\delta > 0$ follows from \cref{A:K} and a careful investigation of  Proposition 1 and Proposition 2 of \cite{cheng2007reducing}.

     For a pre-specified significance level $0 < \alpha < 1$, we shall show $|\tilde r_{\bt}/ \check r_{\bt} - 1| = \op(1) $ by proving
  \begin{align}
   |\tilde r_{\bt} -\sqrt{2\log b^{-1} }| = \op(1),\quad     |\check r_{\bt} -  \sqrt{2\log b^{-1}}| = \op(1).\label{eq:rsim}
   \end{align}
   
   We only give the proof for $ \tilde r_{\bt} $ for brevity, and that of $\check r_{\bt}$ will follow similarly. Write $\tilde r$ short for $\tilde r_{\bt}$. We shall first show that uniformly for $x > \sqrt{2\log b^{-1}-2c}$ for some arbitrary non-negative $c$, 
   \begin{align}
   P(\tilde Z_{\bt} > x |\FF_n) &\geq  \prod_{z \in \B} (b^{-1} c_z^2 C_{\kappa} \exp(-x^2/2)  + 2(1 - \Phi(x))) + \op(1)\\
   & \geq  (b^{-1} C_{\kappa} \exp(-x^2/2)  + 2(1 - \Phi(x)))^{|\B|}+ \op(1).\label{eq:approx_tildeZ}
   \end{align}
where $C_{\kappa} =  \sqrt{\int |K^{\prime}(x)|^2 dx /\int K^2(x) dx}$. Recall that $P(\tilde Z_{\bt}>\tilde r|\FF_n)=1-\alpha$. Hence, \eqref{eq:approx_tildeZ} will lead to $\tilde r^- :=\sqrt{2\log (b^{-1} C_{\kappa})- 2|\B|^{-1}\log \alpha }\lesssim_1 \tilde r $. Recall the definition of $c^*$ in \cref{Ass:ck}. On the other hand, we shall show that under \cref{Ass:ck}, for any $x$ such that $x / \sqrt{2\log \log n} \to \infty$,
\begin{align}
P(\tilde Z_{\bt} > x |\FF_n) &\leq  1- \prod_{z \in \B} (1- b^{-1} c_z^2 C_{\kappa} \exp(-x^2/2)  - 2(1 - \Phi(x))) + \op(1),\\ 
& \leq 1-(1- b^{-1}(c^*)^{-1} C_{\kappa} \exp(-x^2/2)  - 2(1 - \Phi(x)))^{|\B|}+ \op(1),
\label{eq:approx_tildeZup}
\end{align}
which will lead to $ \tilde r \lesssim_1\sqrt{2\log (b^{-1} C_{\kappa}/c^*) - 2\log(1-(1-\alpha)^{1/|\B|})}=: \tilde r^{+}$. Since $|\B|^{-1}\log \alpha  = o(b^{-1})$,  $ - \log(1-(1-\alpha)^{1/|\B|}) = O(\log |\B|) = o(b^{-1})$, we have $\tilde r^- \sim_1 \sqrt{2\log b^{-1} }$ and $\tilde r^{+} \sim_1 \sqrt{2\log b^{-1}}$, which yields the result $|\tilde r  - \sqrt{2\log b^{-1}}|=\op(1)$. Finally, after a further investigation of \eqref{eq:fixalter}, we have
$
   \left|\sqrt{nb_z}|\check \rho_z(t) - g_z(t)|- \sqrt{nb_z}|\tilde \rho_z(t) - g_z(t)|\right| = \Op(\log(n|\B|))=\op(\lambda_n \log(n|\B|)).$ Since we have shown that $\hat{\check \Gamma}_{z}(t)/\hat{\tilde \Gamma}_{z}(t) \overset{p}{\to} \sqrt{\check 
 \kappa/\kappa} < 1$ and $|\tilde r_{\bt}/\check r_{\bt} - 1| = \op(1)$,
\eqref{eq:recovery} then follows from \eqref{eq:lim} and result (i). 

\textbf{Proof of \eqref{eq:approx_tildeZ} and \eqref{eq:approx_tildeZup}.} By \cref{boot:net}, the tail probabilities of $\tilde Z_{\bt}$ and $\check Z_{\bt}$ given data are asymptotically  equal to 
  \begin{align}
  P\left(\left| \frac{1}{\sqrt{n b}}\sum_{i=1}^{2 \nb} \tilde{\mf Z}_i \right|_{\infty} > x\right) ~\text{and}~  P\left(\left| \frac{1}{\sqrt{n b}}\sum_{i=1}^{2 \nb} \check{\mf Z}_i \right|_{\infty} >  x\right), \label{eq:compare}
  \end{align}
uniformly over $x\in \mathbb R$  with probability approaching $1$, see \eqref{eq:tildeZcompare}.
In the subsequent proof, we will derive explicit approximation formulae for the tail probabilities defined in \eqref{eq:compare}. For brevity, we only present the details for the computation of the lower and upper bounds of $ P\left(\left| \frac{1}{\sqrt{n b}}\sum_{i=1}^{2 \nb} \tilde{\mf Z}_i \right|_{\infty} > x\right)$, and the results for the variance reduced tail probability (the second term) in \eqref{eq:compare} follow  analogously.  Recall $\tilde Z_{i,j}$ is the $j$ th element in $\tilde{\mf Z}_i$. Define $\bar{\tilde Z}_{i,j}$ as  $\tilde Z_{i,j}/\sqrt{\mathrm{Var}(\sum_{i=1}^{2\nb} \tilde Z_{i,j})/(nb)}.$ By elementary calculation similar to Lemma C.3 in \cite{Dette2021ConfidenceSF}, $\mathrm{Var}(\sum_{i=1}^{2\nb} \tilde Z_{i,j})/(nb)$ is bounded from $0$,  so that $ \frac{1}{\sqrt{n b}}\sum_{i=1}^{2\nb}\bar{\tilde {\mf Z}}_{i}$ is the normalized version of  $ \frac{1}{\sqrt{n b}}\sum_{i=1}^{2 \nb}{\tilde {\mf Z}}_{i}$ and each component is of variance $1$.  \par
\textbf{Proof of \eqref{eq:approx_tildeZ}}.
We shall break the proof into the following 3 steps.  Fix the order of the elements of $\mathbb B$, and let $z$ be the $j_{th}$ element of $\mathbb B$ in the following proofs. \par
  \textbf{Step 1}: Under \cref{ass:rho}, when $|\B| = O(\log n)$, we shall show that for $x > \sqrt{2\log b^{-1}-2c}$, where $c$ is an arbitrary non-negative constant, 
  \begin{align}
   P\left(\left| \frac{1}{\sqrt{n b}}\sum_{i=1}^{2 \nb} \bar{\tilde{\mf Z}}_i \right|_{\infty} \leq x\right) \leq \prod_{j=1}^{|\B|}P\left(\max_{0 \leq s \leq m^*} \left|\frac{1}{\sqrt{n b}}\sum_{i=1}^{2 \nb} \bar{\tilde{Z}}_{i, s\nb|\B| + j} \right| \leq x \right) + o(1),\label{eq:step1}
  \end{align}
     where $m^* = (n - 2\nb )/(\nb)$.
  \par
  \textbf{Step 2}: For $i.i.d.$ $N(0,1)$ random variables $v_{z, j}$, $z \in \B$, $1 \leq j \leq n$, for $b \to 0$, $nb \to \infty$, we obtain
  \begin{align}
  &\sup_{x \in \R} \left| \prod_{j=1}^{|\B|}P\left(\max_{0 \leq s \leq m^*} \left|\frac{1}{\sqrt{n b}}\sum_{i=1}^{2 \nb} \bar{\tilde{Z}}_{i, s\nb|\B| + j}\right| \leq x \right) \right. \\ & \left.- \prod_{z \in \B} P\left(\max_{0 \leq s \leq m^*} \left|\sum_{i=1}^{2 \nb} \frac{v_{z, i+s\nb} K_{b_z}(\frac{i-\nb}{n})}{\sqrt{nb_z \kappa}} \right| \leq x \right)\right| = o(1).  \label{eq:step2} \end{align}\par
   \textbf{Step 3}: Show that for $x > \sqrt{2\log b^{-1}-2c}$, for the constant $c$ defined in Step 1
   \begin{align}
  &\prod_{z \in \B} P\left(\max_{0 \leq s \leq m^*} \left|\sum_{i=1}^{2 \nb} \frac{v_{z, i+s\nb} K_{b_z}(\frac{i-\nb}{n})}{\sqrt{nb_z \kappa}} \right| > x \right)\\ & = \prod_{z \in \B} (b^{-1} c_z^2 C_{\kappa} \exp(-x^2/2)  + 2(1 - \Phi(x))) + o(1),\label{eq:step3} 
   \end{align}
   where $\Phi(x)$ is the cumulative function of normal distribution, $C_{\kappa} = \sqrt{\int |K^{\prime}(x)|^2 dx /\int K^2(x) dx}$. Therefore, since $c_z \geq 1$, \eqref{eq:approx_tildeZ} follows from \eqref{eq:step3}.
 
   \begin{remark}
   $|\B|$ can also be allowed to diverge at a polynomial rate, such that $|\B| =o\left( b^{-\frac{1-\rho}{4(1+\rho)}}\right)$. 
     We omit its proof for the sake of conciseness.
   \end{remark}

\textbf{Proof of Step 1}. When $|\B|$ is finite, the proof follows from Theorem 3 of  \cite{HUSLER198891}. We extend it for diverging $|\B|$.
Note that 
\begin{align}
P\left(\left| \frac{1}{\sqrt{n b}}\sum_{i=1}^{2 \nb} \bar{\tilde{\mf Z}}_i \right|_{\infty} \leq x\right)  \leq P\left(\max_{1 \leq j \leq |\B|, 0 \leq s \leq m^*} \left|\frac{1}{\sqrt{n b}}\sum_{i=1}^{2 \nb} \bar{\tilde{Z}}_{i, s\nb|\B| + j} \right| \leq x \right).\label{eq:lower}
\end{align}For the sake of simplicity, let 
\begin{align}
J_{k,j} := \frac{1}{\sqrt{n b}}\sum_{i=1}^{2 \nb} \bar{\tilde{Z}}_{i, (k-1)|\B| + j}, \quad k=1,\cdots, n-2\nb +1, j = 1,\cdots, |\B|.
\end{align}

It suffices to show that the distribution of the Gaussian process of $(J_{s\nb+1,j})$ can be approximated by $(J_{s\nb+1,j}^*)$ which is independent for different $j$'s (i.e., $ \mathrm{Cov}(J^*_{s\nb+1,i},J^*_{l\nb+1,j}) = 0$ for any $s,l$ and $i\neq j$) and such that 
\begin{align}
   \mathrm{Cov}(J^*_{s\nb+1,j},J^*_{l\nb+1,j}) = \mathrm{Cov}(J_{s\nb+1,j}, J_{l\nb+1,j}),\quad s,l=0,\cdots, m^*,j=1,\dots, |\B|. \label{eq:Jstardef}
\end{align}
Therefore, by definition of \eqref{eq:Jstardef} we have
\begin{align}
    \prod_{j=1}^{|\B|}P\left(\max_{0 \leq s \leq m^*} \left|\frac{1}{\sqrt{n b}}\sum_{i=1}^{2 \nb} \bar{\tilde{Z}}_{i, s\nb|\B| + j} \right| \leq x \right) = P(J^*_{s\nb+1,j} \leq x,  0 \leq s \leq m^*, 1 \leq  j \leq |\B|).
\end{align}

 By the Theorem 3 of \cite{HUSLER198891}, it follows that
\begin{align}
 &\left|P\left(\max_{1 \leq j \leq |\B|, 0 \leq s \leq m^*} \left|\frac{1}{\sqrt{n b}}\sum_{i=1}^{2 \nb} \bar{\tilde{Z}}_{i, s\nb|\B| + j} \right| \leq x \right) - \prod_{j=1}^{|\B|}P\left(\max_{0 \leq s \leq m^*} \left|\frac{1}{\sqrt{n b}}\sum_{i=1}^{2 \nb} \bar{\tilde{Z}}_{i, s\nb|\B| + j} \right| \leq x \right)\right|  \\
 & = \left| P(J_{s\nb+1,j} < x, 0 \leq s \leq m^*, 1 \leq  j \leq |\B|) -   P(J^*_{s\nb+1,j} < x,  0 \leq s \leq m^*, 1 \leq  j \leq |\B|) \right|\\ 
 & \leq \sum_{1 \leq i,j \leq |\B|} \sum_{0 \leq s,l \leq m^*} \lambda^*_{i,j}(s,l) (1 - \lambda^{*,2}_{i,j}(s,l))^{-1/2}\exp\left\{-\frac{x^2}{1 + \lambda^*_{i,j}(s,l)} \right\}:=S_n,\label{eq:Sn}
\end{align}
where $\lambda^*_{i,j}(s,l) = |\mathrm{Cov}(J_{s\nb+1,i}, J_{l\nb+1,j})|$ if $i \neq j$, $s, l=0,\cdots, m^*$, and  $\lambda^*_{i,j}(s,l) = 0$ otherwise. Since  $(\tilde{\mf Z}_i)_{i=1}^{2\nb} \in \R^{(n - 2\nb+1)|\B|}$ share the same autocovariance structure with  $(\bar{\mf \Xi}^{\B}_i)_{i=1}^{2\nb} $ and $\mathrm{Var}(\lambda^*_{i,j}(s,l))=1$,  
by \cref{ass:rho}, there exists a constant $\rho$, $0< \rho < 1$, such that $| \lambda^*_{i,j}(s,l) | <  \rho$ for $i,j = 1,\cdots, |\B|$.
Let $\theta_n = b^{-\lambda}$, $ \lambda = (1-\rho)/(2(1 + \rho))$.
Note that by Lemma 5 of \cite{zhou2010simultaneous}, since $\mathrm{Var}(\sum_{i=1}^{2\nb} \tilde Z_{i,j})/(nb)$ and $\tilde \Gamma_z(\cdot)$ are bounded from $0$, for $|k_1-k_2| \geq \lfloor nb \log n  \rfloor$, $k_1, k_2 = 1,\cdots, n-2\nb +1$, it follows that
\begin{align}
\mathrm{Cov}(J_{k_1,i}, J_{k_2,j}) &= O\left(\frac{c_zc_y}{nb}\sum_{p, q = 1}^{2\nb} E\left(\Xi_{y, p+k_1-1}\Xi_{z, q+k_2-1}K_{b_y}\left(\frac{p-\nb}{n}\right)K_{b_z}\left(\frac{q-\nb}{n}\right)\right)\right)\\
&=O\left(\frac{1}{nb}\sum_{p, q = 1}^{2\nb}  \chi^{|p+k_1-q-k_2|}\right) = O(\chi^{|k_1-k_2|}/(nb)),
\end{align}
where $z$ and $y$ are the $j$ and $i$ th elements of $\B$, respectively.
For  $|k_1-k_2| \leq \lfloor nb \log n  \rfloor$ it follows immediately that  $\mathrm{Cov}(J_{k_1,i}, J_{k_2,j})  = O(1)$.
We consider $S_n$ separately for $|s-l| < \theta_n$ and $|s-l| \geq \theta_n$, $s, l = 0,\cdots,m^*$,
\begin{align}
S_n& = \left(\sum_{1 \leq i, j\leq |\B|} \sum_{|s-l| < \theta_n} +  \sum_{1 \leq i, j \leq |\B|} \sum_{|s-l| \geq \theta_n} \right) \lambda^*_{i,j}(s,l) (1 - \lambda^{*,2}_{i,j}(s,l))^{-1/2} \exp\left\{-\frac{x^2}{1 + \lambda^*_{i,j}(s,l)} \right\}\\ 
& :=  S_{n,1} + S_{n,2},
\end{align}
where $S_{n,1}$ and $S_{n,2}$ are defined in the obvious way.
Notice that $| \lambda^*_{i,j}(s,l) | <  \rho$ and $m^*=O(b^{-1})$. For a sufficiently large constant $C > 0$ and $x > \sqrt{2 \log b^{-1}-2c}$ for the constant $c$ defined in Step 1, we have 
\begin{align}
S_{n,1} &\leq C |\B|^2 b^{-1} \theta_n \exp \left\{ -\frac{x^2}{2}\right\} \exp \left\{ -\frac{(1- \rho) x^2}{2(1+ \rho)}\right\}\\ &\leq C |\B|^2  \theta_n b^{(1- \rho)/(1+ \rho)}\exp(2c/(1+\rho))  = o(1).\label{eq:Sn1}
\end{align}
On the other hand, for $S_{n,2}$, we obtain
\begin{align}
S_{n,2} = O\left( |\B|^2 b^{-1}  \chi^{nb\theta_n}/(nb) \exp \left\{ -\frac{x^2}{2}\right\}\right) = o(1).\label{eq:Sn2}
\end{align}
Finally, combining \eqref{eq:lower}, \eqref{eq:Sn}, \eqref{eq:Sn1} and \eqref{eq:Sn2}, we have shown \eqref{eq:step1}. 
\par
\textbf{Proof of Step 2}. Write  $J^{\circ}_{k,j} := \sum_{i=1}^{2 \nb} v_{z, i+k-1} K_{b_z}(\frac{i-\nb}{n})/\sqrt{nb_z \kappa}$ in \eqref{eq:step2}, where $z$ is the $j_{th}$  element of $\B$.
By Lemma 3.1 of \cite{victor2013}, we have 
\begin{align}
&\sup_{x \in \R} \left| \prod_{j=1}^{|\B|} P\left(\max_{0 \leq s \leq m^*} \left|\frac{1}{\sqrt{n b}}\sum_{i=1}^{2\nb} \bar{\tilde{Z}}_{s\nb|\B| + j}\right| \leq x \right) \right. \\ & \left.- \prod_{z \in \B} P\left(\max_{0 \leq s \leq m^*} \left|\sum_{i=1}^{2\nb} \frac{v_{z, i+s\nb} K_{b_z}(\frac{i-\nb}{n})}{\sqrt{nb_z \kappa}} \right| \leq x \right)\right| \\ 
& = \sup_{x \in \R} \left| P\left(\cap_{j=1}^{|\B|}\left\{\max_{0 \leq s \leq m^*} |J_{s\nb+1,j}^{*} |\leq x \right\} \right) - P\left(\cap_{j=1}^{|\B|} \left\{ \max_{0 \leq s \leq m^*} |J_{s\nb+1,j}^{\circ}| \leq x\right\}\right) \right|\\ 
& = \sup_{x \in \R} \left| P\left(\max_{0 \leq s \leq m^*, 1 \leq j \leq |\B|} |J_{s\nb+1,j}^{*} |\leq x \right) - P\left(\max_{0 \leq s \leq m^*, 1 \leq j \leq |\B|} |J_{s\nb+1,j}^{\circ}| \leq x\right) \right|\\ 
& \leq C \Delta^{1/3} \left\{1 \vee \log (n|\B| /\Delta)\right\}^{2/3},\label{eq:step2Gauss}
\end{align}
where 
$
\Delta \leq  \max_{1 \leq j \leq |\B| , 1 \leq k, l \leq n - 2\nb +1}\left| \E(J_{k,j}^{*} J_{l,j}^{*}) -   \E(J_{k,j}^{\circ} J_{l,j}^{\circ})\right|
$. We proceed to derive the upper bound of $\Delta$.
First, by elementary calculation we have uniformly for $1 \leq k,l \leq n- 2\nb+1$, $1 \leq j \leq |\B|$,
\begin{align}
  \E(J_{k,j}^{\circ} J_{l,j}^{\circ})  &= \frac{1}{nb_z \kappa} \sum_{p,q=1}^{2\nb} \E v_{z, p+k-1}v_{z, q+l -1} K_{b_z}\left(\frac{p-\nb}{n}\right)K_{b_z}\left(\frac{q-\nb}{n}\right)\\
  & =  \frac{1}{nb_z \kappa} \sum_{p,q=1}^{2\nb} \mf 1(p = q+l-k) K_{b_z}\left(\frac{p-\nb}{n}\right)K_{b_z}\left(\frac{q-\nb}{n}\right)\\ 
  &= \kappa^{-1} \int K(u)K\left(u + \frac{k-l}{nb_z}\right) du + O((nb)^{-1}).\label{eq:Jcirc}
\end{align}
For the calculation of $\E(J_{k,j}^{*} J_{l,j}^{*})$,  since $(\tilde{\mf Z}_i)_{i=1}^{2\nb} \in \R^{(n - 2\nb+1)|\B|}$ share the same autocovariance structure with the vectors $(\bar{\mf \Xi}^{\B}_i)_{i=1}^{2\nb} $, by similar argument of Lemma C.3 in \cite{Dette2021ConfidenceSF} we get  
\begin{align}
&\frac{1}{nb}\mathrm{Cov}\left(\sum_{i=1}^{2\nb} \tilde Z_{i,(k-1)|\B|+j}\sum_{i=1}^{2\nb} \tilde Z_{i,(l-1)|\B|+j}\right) \\
& = \frac{1}{nb}\sum_{p,q=1}^{2\nb} \frac{\E \Xi_{z, p+k-1} \Xi_{z, q+l-1} K_{b_z}\left(\frac{p-\nb}{n}\right)K_{b_z}\left(\frac{q-\nb}{n}\right)}{\tilde \Gamma_z(\frac{k-1+\nb}{n}) \tilde \Gamma_z(\frac{l-1+\nb}{n}) }\\
& = \frac{1}{nb}\sum_{p,q=1}^{n} \frac{\E \Xi_{z, p} \Xi_{z, q} K_{b_z}\left(\frac{p-k+1-\nb}{n}\right)K_{b_z}\left(\frac{q-l+1-\nb}{n}\right)}{ \tilde \Gamma_z(\frac{k-1+\nb}{n}) \tilde \Gamma_z(\frac{l-1+\nb}{n}) }\\
 &=\kappa^{-1} \int K(u)K\left(u + \frac{k-l}{nb_z}\right) du + O(b\log b + \chi^{n b r_n}+ r_n + (nb)^{-1}) ,
\end{align}
where $r_n = M \log n /(nb)$, $M$ is a sufficiently large constant and the big $O$ only depends on the dependence measure and  the coefficients of stochastic Lipschitz continuity, which are uniformly bounded by \cref{Ass:error}. Therefore, uniformly for $1 \leq k, l \leq n-2\nb +1$ and $1 \leq j \leq \B$, we have  \begin{align}
\E(J_{k,j}^{*} J_{l,j}^{*}) = \kappa^{-1} \int K(u)K\left(u + \frac{k-l}{nb_z}\right) du + O(b\log b + \chi^{n b r_n}+ r_n + (nb)^{-1}).\label{eq:Jstar}
\end{align}
By \eqref{eq:step2Gauss}, \eqref{eq:Jcirc} and \eqref{eq:Jstar}, we have $\Delta = O(b\log b + \chi^{n b r_n}+ r_n + (nb)^{-1})$ and thus \eqref{eq:step2} follows.\par
\textbf{Proof of Step 3}. 
Define for $t \in [0,1]$,
\begin{align}
   \bs  \ell_z(t) = (K_{b_z}(1/n - t), \cdots, K_{b_z}(1-t))^{\T}/(\sqrt{nb_z} \kappa^{1/2}), \quad \mf T_z(t) = \bs \ell_z(t)/| \bs \ell_z(t)|.
\end{align}
Observe that 
\begin{align}
    \sum_{i=1}^{2\nb} \frac{v_{z, i + s\nb} K_{b_z}(\frac{i-\nb}{n})}{\sqrt{nb_z \kappa}} = \langle \bs \ell_z((s+1)\nb/n), \mf v_z \rangle,
\end{align}
where $\mf v_z = (v_{z,1}, \cdots, v_{z,n})^{\T}$, $\langle\cdot, \cdot\rangle$ denotes the inner product. To apply \cite{sun1994simultaneous}, we first approximate the maximands over discrete $s = 0, 
\cdots, m^*$ by the supreme over $t \in [0,1]$. Following similar lines in the proof of \cref{nonasynetwork}, we have 
\begin{align}
    &\sup_{x \in \mathbb R}\left|\prod_{z \in \B} P\left( \max_{0 \leq s\leq m^*} |\langle \mf T_z((s+1)\nb/n), \mf v_z \rangle| \leq x \right) - \prod_{z \in \B}  P\left( \sup_{t \in [0,1]} |\langle \mf T_z(t), \mf v_z \rangle| \leq x \right)\right|\\
     &=\sup_{x \in \mathbb R}\left| P\left(\max_{z \in \B}  \max_{0 \leq s\leq m^*} |\langle \mf T_z((s+1)\nb/n), \mf v_z \rangle| \leq x \right) -  P\left( \max_{z \in \B} \sup_{t \in [0,1]} |\langle \mf T_z(t), \mf v_z \rangle| \leq x \right)\right|\\
     & = O(\Theta(|\B|^{1/(q+1)}(nb)^{-q/(q+1)}, n|\B| )).\label{eq:Tdiscrete}
\end{align}
By Proposition 1 of \cite{sun1994simultaneous}
since $x > \sqrt{2\log b^{-1}-2c}$ for the constant $c$ defined in Step 1, we have
\begin{align}
    \prod_{z \in \B}  P\left( \max_{t \in [0,1]} |\langle \mf T_z(t), \mf v_z \rangle| > x \right) = \prod_{z \in \B}(b_z^{-1} C_{\kappa} \exp(-x^2/2)  + 2(1 - \Phi(x))) + O(|\B|b),\label{eq:Tcontinuous}
\end{align}
where $\Phi(x)$ is the cumulative function of normal distribution, $C_{\kappa} =  \sqrt{\int |K^{\prime}(x)|^2 dx /\int K^2(x) dx}$, $|\B|b = O(b\log n ) =o(1)$. Combining \eqref{eq:Tdiscrete}, \eqref{eq:Tcontinuous} and
 $|\bs \ell_z(t)| = 1+O((nb)^{-1})$, we obtain 
\begin{align}
&\prod_{z \in \B} P\left(\max_{0 \leq s \leq m^*} \left|\sum_{i=1}^{2\nb} \frac{v_{z, i+s\nb} K_{b_z}(\frac{i-\nb}{n})}{\sqrt{nb_z \kappa}} \right| > x \right) \\ 
&= \prod_{z \in \B}(b_z^{-1} C_{\kappa} \exp(-x^2/2)  + 2(1 - \Phi(x))) + O(|\B|b + \Theta(|\B|^{1/(q+1)}(nb)^{-q/(q+1)}, n|\B| )) ,
\end{align}
   For a sufficiently large $q$, since $nb^3 \to \infty$, $ \Theta(|\B|^{1/(q+1)}(nb)^{-q/(q+1)}, n|\B| )$ also converges to 0. 
   \par 
\textbf{Proof of \eqref{eq:approx_tildeZup}.}
By Theorem 2 of \cite{latala2017royen}, we have 
   \begin{align}
    P\left(\left| \frac{1}{\sqrt{n b}}\sum_{i=1}^{2\nb} \bar{\tilde{\mf Z}}_i \right|_{\infty} \leq x\right) \geq \prod_{j=1}^{|\B|}P\left(\max_{1 \leq k \leq n - 2\nb +1} \left|\frac{1}{\sqrt{n b}}\sum_{i=1}^{2\nb} \bar{\tilde{Z}}_{i, (k-1)|\B| + j} \right| \leq x \right).\label{eq:step1up}
   \end{align}
   By similar arguments in the proof of Step 2 (i.e., \eqref{eq:step2}), we have 
   \begin{align}
   &\sup_{x \in \R} \left| \prod_{j=1}^{|\B|}P\left(\max_{1 \leq k \leq n - 2\nb +1} \left|\frac{1}{\sqrt{n b}}\sum_{i=1}^{2\nb} \bar{\tilde{Z}}_{i, (k-1)|\B| + j}\right| \leq x \right) \right. \\ & \left.- \prod_{z \in \B} P\left(\max_{1 \leq k \leq n - 2\nb +1} \left|\sum_{i=1}^{2\nb} \frac{v_{z, i+k-1} K_{b_z}(\frac{i-\nb}{n})}{\sqrt{nb_z \kappa}} \right| \leq x \right)\right| = o(1). \label{eq:step2up}
   \end{align}
   Finally,  by similar argument in the proof of Step 3 (i.e., \eqref{eq:step3}), for $x/ \sqrt{2\log \log n} \to \infty$,
   \begin{align}
   & \prod_{z \in \B} P\left(\max_{1 \leq k \leq n - 2\nb +1} \left|\sum_{i=1}^{2\nb} \frac{v_{z, i+k-1} K_{b_z}(\frac{i-\nb}{n})}{\sqrt{nb_z \kappa}} \right| \leq x \right)\\ &= \prod_{z \in \B}\left\{1-(b_z^{-1} C_{\kappa} \exp(-x^2/2)  + 2(1 - \Phi(x)))\right\} + o(1),\label{eq:step3up}
\end{align}
   where $\Phi(x)$ is the cumulative function of normal distribution, $C_{\kappa} = \sqrt{\int |K^{\prime}(x)|^2 dx /\int K^2(x) dx}$. \eqref{eq:approx_tildeZup} then follows from \eqref{eq:compare}, \eqref{eq:step1up}, \eqref{eq:step2up}, \eqref{eq:step3up} and \cref{Ass:ck}.  \hfill $\Box$

\bigskip
\newpage
\begin{center}
   {\fontsize{15}{30} \bf  Supplement to "Time-varying correlation network analysis of non-stationary multivariate time series with complex trends" }\\[6pt]
   {Lujia Bai and Weichi Wu}\\[6pt]
   {Center for Statistical Science, Department of Industrial Engineering, Tsinghua University}
\end{center}

We organize the supplement as follows:  In \cref{sec:alg}, we give the algorithm equipped with plug-in estimators when the trend functions are smooth as mentioned in \cref{rm:ncp}. \cref{sec:proof1} provides lemmas and propositions used in the paper, their corresponding proofs, and the theoretical justification of the algorithm using plug-in estimators. 
\appendix
\setcounter{section}{3}
\section{The plug-in algorithm}\label{sec:alg}\label{sec:plg}
Recall that $\hat \epsilon_{j,i} := Y_{j,i} - \hat \mu_i(t_j)$, $1 \leq i \leq p$, $1 \leq j \leq n$, where $\hat \mu_i(t_j)$ is the local linear estimator, i.e.,\begin{align}
    (\hat \mu_i(t) , \hat \mu^{\prime}_i(t))=\underset{\eta_0, \eta_1}{\mathrm{argmin}}  \sum_{j=1}^n\{Y_{j,i}  - \eta_0 -\eta_1(t_j-t) \}^2 K_{\tau_i}(t_j- t),\label{eq:muloclin}
\end{align}
where $\tau_i$ is the bandwidth parameter for $\hat \mu_i(\cdot)$.
 When \eqref{eq:model_spec} in the main article  has no change point, we can remove the trend function and use                   directly the residuals to estimate the cross-correlations, i.e., $\hat \gamma^{i,l}_{k}(t) = \frac{1}{nb_k^{i,l}} \sum_{j=1}^n \hat \epsilon_{j,i}\hat \epsilon_{j+k,l} K_{b^{i,l}_k}(t_j- t)$, 
and
\begin{align}
      \hat \rho^{i,l}_{k}(t) = \hat \gamma^{i,l}_{k}(t)/\hat \sigma_{i,l}(t),\quad \hat \sigma_{i,l}(t) = \sqrt{\hat \gamma^i_{0}(t) \hat \gamma^l_{0}(t)},\quad (i,l,k) \in \B,\label{eq:rho_network_plug}
\end{align}
which would lead to  a non-trivial extension of \cite{zhao2015inference} from the inference of certain local autocorrelation curve to the joint inference of cross-correlation curves of multivariate and high dimensional non-stationary time series with possibly diverging number of lags.
 Recall that $\hat \gamma^{i}_{k}(t)$ are short for $\hat \gamma^{i,i}_{k}(t)$. Define the processes and the estimators of the residuals as
\begin{align}
e^{i,l}_{j,k}  := \epsilon_{j,i} \epsilon_{j+k,l} - \gamma^{i,l}_{k}(t_j) , \quad 
\hat e^{i,l}_{j,k} := \hat \epsilon_{j,i}\hat \epsilon_{j+k,l} - \hat \gamma^{i,l}_{k}(t_j).\label{eq:res}
\end{align}
Write $e_{j,k}^i$ short for $e_{j,k}^{i,i}$.
The counterpart of \eqref{eq:Xidiscrete} becomes 
\begin{align}
V^{i,l}_{j,k} =e^{i,l}_{j,k}/\sigma_{i,l}(t_j) - 2^{-1} \rho_{k}^{i,l}(t_j)\left(e_{j,0}^i /\gamma_0^i(t_j) +  e_{j,0}^l /\gamma_0^l(t_j) \right).\label{eq:defineV}
\end{align}
Suppose $V^{i,l}_{j,k}$ admits the form $V^{i,l}_{j,k} = L^{i,l}_{k}(t_j, \FF_j)$, where $ L^{i,l}_{k}(\cdot, \cdot)$ is a filter such that  $L^{i,l}_{k}(t, \FF_j)$ is well defined. 
Similarly, we can define $\Gamma^{i,l}_{k}(t)$ as  the square root of the limiting variance of $  (nb^{i,l}_k)^{-1/2}\sum_{j=1}^n K_{b^{i,l}_k}(t_j - t)V^{i,l}_{j,k}$, and
\begin{align}
 \bar{\mf V}_{j,s}^{\B} := (c_k^{i,l} K_{b^{i,l}_{k}}(t_j - t_s)V^{i,l}_{j,k}/  \Gamma^{i,l}_{k}(t_s), (i,l,k)\in \B)^{\T}, \quad 1 \leq s\leq n, ~j=\nt, \cdots, 2\nb - \nt.
 \end{align}
  and we get 
$
 \bar{\mf V}_{j}^{\B} = (\bar{\mf V}_{j,\nb+ \nt}^{\B, \T}, \bar{\mf V}_{j + 1,\nb + \nt+ 1}^{\B, \T}, \cdots, \bar{\mf V}_{n - 2\nb +j, n - \nt-\nb}^{\B, \T})^{\T}$.
 Finally, we estimate $V^{i,l}_{j,k}$ and $\Gamma_{k}^{i,l}(u)$ in $ \bar{\mf V}_{j}^{\B}$ by plugging in the residuals of \eqref{eq:res} and the estimators of \eqref{eq:rho_network_plug}, i.e., 
 \begin{align}
 \tilde V^{i,l}_{j,k} =  \hat \epsilon_{j,i}\hat \epsilon_{j+k,l}/\hat \sigma_{i,l}(t_j) - 2^{-1}  \hat \rho_{k}^{i,l}(t_j)\left( \hat \epsilon_{j,i}^2 /  \hat \gamma_0^i(t_j) +  \hat \epsilon_{j,l}^2  /  \hat \gamma_0^l(t_j) \right), 
 \end{align}
 and 
 \begin{align}
    \hat{\Gamma}_{k}^{i,l,2}(t) =  \frac{\kappa}{m} \sum_{s=1}^{n} \Delta^{i,l,2}_{k,s}\omega(t,s),\quad \omega(t,s) = K_{\eta}(t-t_s)/\sum_{k=1}^n K_{\eta}(t_k - t),
\end{align}
where $\Delta^{i,l}_{k,s} = \sum_{j=s}^{s+m -1} \hat V^{i,l}_{j, k}$, and $\hat V^{i,l}_{j,k} =  \hat e^{i,l}_{j,k}/\hat \sigma_{i,l}(t_j) - 2^{-1}  \hat \rho_{k}^{i,l}(t_j)\left( \hat e_{j,0}^i /  \hat \gamma_0^i(t_j) +   \hat e_{j,0}^l /  \hat \gamma_0^l(t_j) \right)$. 
Define $ \bar{\hat{\mf V}}_{j}^{\B} = (\bar{\hat{\mf V}}_{j,\nb+\nt}^{\B, \T}, \bar{\hat{\mf V}}_{j + 1,\nb + \nt + 1}^{\B, \T}, \cdots, 
 \bar{\hat{\mf V}}_{n - 2\nb +j, n - \nb-\nt}^{\B, \T})^{\T}$, $j=\nt, \cdots, 2\nb - \nt$, as the estimator of $\bar{\mf V}_j^{\B}$ using estimators defined above, where
\begin{align}
   \bar{\hat{\mf V}}_{j,s}^{\B} =  (c_k^{i,l} K_{b^{i,l}_{k}}(t_j - t_s)\tilde V^{i,l}_{j,k}/  \hat \Gamma^{i,l}_{k}(t_s), (i,l,k)\in \B)^{\T}.\label{eq:hatV}
\end{align}

 Write $\tau = \max_{i=1,\cdots,p} \tau_i$. The algorithm using plug-in estimation is shown in \cref{alg:jointncp}. 
\begin{algorithm}
    \caption{Plug-in estimation of time-varying network}
    \label{alg:jointncp}
    \begin{algorithmic} [1]
        \State Compute the residuals $\hat \epsilon_{j,i} := Y_{j,i} - \hat \mu_i(t_j)$, where  $\hat \mu_i(t_j)$ is as defined in \eqref{eq:muloclin}, $1\leq i \leq p$, $1\leq j \leq n$.
        \State Compute $\hat \rho^{i,l}_k(t)$, $(i,l,k) \in \B$, in \eqref{eq:rho_network_plug}.
        \State Compute the $|\B|$-dimensional vectors $\bar{\hat{\mf V}}_{j,s}^{\B}$ in \eqref{eq:hatV}, $1 \leq j, s \leq n$.
        \State For  window size $w$, compute $\hat{\mf S}_{l, j}^{\B} = \sum_{s = j-w+1}^{j} \bar{\hat{\mf V}}_{s+l, \nb + l} ^{\B}-\sum_{s = j+1}^{j+w} \bar{\hat{\mf V}}_{s+l, \nb + l}^{\B}$, where $l = 0, \cdots, n - 2\nb$. 
                \For{$r = 1, \cdots, B$}
        \State Generate independent standard normal random variables $R_i^{(r)}$, $i=1,\cdots,n$. 
         \State Calculate  
                $$Z_{\bt}^{(r)}= \frac{\underset{0 \leq l \leq n - 2\nb}{\max} \left| \sum_{j= w + \nt -1}^{ 2\nb - w - \nt} \hat{\mf S}_{l, j}^{\B}  R^{(r)}_{l+j}\right|_{\infty}}{\sqrt{2 w(\nb -\nt)}}$$
    \EndFor
    \State Let $\hat r_{\bt}$ denote the $(1-\alpha)$-quantile of the bootstrap sample $Z_{\bt}^{(1)}, \cdots,Z_{\bt}^{(B)}$.  
     \State
     Connect $i$ and $l$ at time $t \in [b, 1-b]$ if 
    $ \sqrt{nb^{i,l}_k} |g_{ilk}(t) - \hat \rho^{i,l}_k(t)| > \hat r_{\bt}   \hat{\Gamma}^{i,l}_k(t)$, for some $k$ such that $(i,l, k) \in \B$.
%
    \end{algorithmic}

\end{algorithm}
   \begin{remark}
   We select the smoothing parameters using the schemes presented in \cref{sec:impl} with the following modification.  
  For the estimation of $\mu(\cdot)$, we can write $\hat{\mf Y}_0 = \mf Q_0(\tau)\mf Y_0$ for some square matrix $\mf Q_0$ depending on $\tau$, where $\mf Y_0 = (Y_{1,i}, \cdots, Y_{n,i})^{\T}$, and $\hat{\mf Y}_0=(\hat Y_{1,i},...,\hat Y_{n,i})^\top$ is the estimated value of $\mf Y_0$ via the bandwidth $\tau_i$, i.e., $\hat Y_{j,i}= \hat {\mu}_i(t_j)$. Then we select $\tau_i$ by minimizing
\begin{equation}
     \operatorname{GCV}(\tau)=\frac{n^{-1}|\mathbf{Y_0}-\hat{\mathbf{Y}}_0|^{2}}{[1-\operatorname{tr}\{\mf Q_0(\tau)\} / n]^{2}}.\label{GCV_tau}
    \end{equation}
  We select $w$ and $\eta$ in the bootstrap algorithm \cref{alg:jointncp} also by the extended minimum volatility (MV) method. As discussed in \cref{sec:impl}, we first propose a grid of possible block sizes and bandwidths $\{w_{1}, w_{2},\cdots, w_{M_1}\}$, $\{\eta_1, \eta_2,\cdots, \eta_{M_2}\}$. Define the sample variance $s^2_{w_i,\eta_j}$ of the bootstrap statistics as 
\begin{align}
 s^2_{w_{i},\eta_j} =  \sum_{l=0}^{n- 2\nb} \sum_{r=w_{i}}^{n-w_{i}}\hat{\mf S}_{l, r, w_{i}, \eta_j}^{\B, \top} \hat{\mf S}_{l, r, w_{i}, \eta_j}^{\B}, 
\end{align}
where $\hat{\mf S}_{l, r, w_{i}, \eta_j}^{\B}$ is as defined in \cref{alg:jointncp}  using bandwidth $w_{i}$, and  $m^{i,l}_k = \lf n^{2/7} \rf$ and $\eta_j$ for $\hat{ \Gamma}^{i,l}_k(\cdot)$. 
Then select $(i, j)$ which minimizes the following criterion,
   \begin{align}
    \mathrm{MV}(i,j):= \mathrm{SD}\left\{\cup_{r=-1}^{1}\{s^2_{w_{i}, \eta_{j+r}}\} \cup \cup_{r=-1}^{1}\{s^2_{w_{i+r}, \eta_{j}}\}\right\},\nonumber
    \end{align} where SD stands for the sample standard deviation.
   \end{remark}
   For $z = (i,l,k) \in \B$, $t \in [0,1]$, write $\sqrt{nb_z} |g_{z}(t)-\hat \rho_z(t)|/\hat{ \Gamma}_z(t) $ as $T_z(t)$ and let $\hat N(t) = \{ T_z(t)
    \leq \hat r_{\bt}, z\in \B -\tilde \B(t)\} \cap  \{ T_z(t)
    > \hat r_{\bt}, z \in \tilde \B(t)\}$.
   Define $
        \mathcal C_n = \{\mf x(\cdot)=(x_z(\cdot), z\in 
        \B)^{\top} \in [b,1-b]^{|\B|}: \sqrt{nb_z} |x_z(t) - \rho_z(t)|\leq   \hat r_{\bt}  \hat \Gamma_z(t), \forall t \in [b,1-b]\}
   $. Analogous to \cref{boot:net}, the following theorem ensures the asymptotic type I error control and the recovery probability of  \cref{alg:jointncp}, the proof of which is deferred to Section \ref{pf:ncp}.
   \begin{theorem}\label{boot:ncp}
    Under Assumptions \ref{Ass:error}, \ref{A:K}, \ref{Ass:ck}, assuming that $\underset{t \in (0,1)}{\inf}  \sigma(L^{i,l}_{k},t) > 0$, and the second derivative of $\mu_i(\cdot)$  exists and with uniformly bounded Lipschitz constants on $[0,1]$ for $i=1,2,\cdots, p$,  $|\B|^{1/q} (\sqrt{n b^7} + (nb \tau^2)^{-1/2} )\to 0$. Further assume for some sufficiently large $q$,
    \begin{align}
        \vartheta_n^{1/3}\left\{ 1 \vee \log (n|\B|/\vartheta_n) \right\}^{2/3} +\left(e_n (n|\B|)^{1/q}\right)^{q/(q+2)} \to 0,
    \end{align}
    where $\vartheta_n = \log^2 n/w + w/(nb) + \sqrt{w/(nb)}(n|\B|)^{4/q}$ and   $e_n = |\B|^{1/q} (\sqrt{m/(n \eta^{2})}+1/m+\eta^2  +\sqrt{m/(nb)}(mb/n)^{-1/(2q)}  + (nb)^{-1/2} b^{-1/q} + 1/\sqrt{w\tau^2})+ w^{3/2}/n$.
   Then,
    we have  \par
    (i)  (Type I error control.) As $n$ and $B$ go to infinity 
    \begin{align}
        P_{H_0}\big(T_z(t)
    \leq \hat r_{\bt}, t\in[b, 1-b], z\in \B| \F_n \big) = P(\{\rho_z(\cdot), z \in \B\} \in \mathcal C_n | \F_n) \overset{p}{\to} 1-\alpha.
    \end{align}\par
    (ii) (The lower bound of the recovery probability.) Under \cref{ass:alter}, we have $$\underset{n \to \infty}{\lim} \underset{B \to \infty}{\lim} \int_0^{1} P(\hat N(t)| \F_n) dt\geq 1-\alpha,$$ with probability approaching $1$  for any arbitrarily small $\alpha>0$,
    where $\hat N(t)$ is the event for correctly recovering the network at time $t$. 
\end{theorem}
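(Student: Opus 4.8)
The plan is to transcribe the proof of \cref{boot:net} to the plug-in setting, inserting one extra layer that controls the error produced by the local linear estimation of the trends $\mu_i(\cdot)$. The first step is to establish the plug-in analogue of \cref{nonasynetwork}: under the stated conditions there is a stochastic expansion
$\max_{z\in\B}\sup_{t\in\TT}\sqrt{nb_z}\,\big|\hat\rho_z(t)-\rho_z(t)-\vartheta^V_z(t)\big|=\op(1)$, with $\vartheta^V_z(t)=(nb_z)^{-1}\sum_{j}K_{b_z}(t_j-t)V_{z,j}$ and $V_{z,j}$ as in \eqref{eq:defineV}; the remainder collects (a) the reduced kernel bias of order $b^{3}$ afforded by $\int u^{2}K=0$, together with the $O(\tau^{2})$ bias of $\hat\mu_i$, and (b) the bilinear cross terms $\epsilon_{j,\cdot}(\hat\mu_\cdot-\mu_\cdot)$, whose kernel average is controlled by expanding $\hat\mu_i(\cdot)-\mu_i(\cdot)$ as a weighted sum of errors, contributing a term of order $(nb\tau^{2})^{-1/2}$ — this is exactly where the extra bandwidth condition $|\B|^{1/q}(\sqrt{nb^{7}}+(nb\tau^{2})^{-1/2})\to0$ enters. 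Since $V_{z,j}=L^{i,l}_k(t_j,\FF_j)$ is locally stationary with geometrically decaying physical dependence inherited from \cref{Ass:error}, and $\inf_t\sigma(L^{i,l}_k,t)>0$ ensures non-degeneracy, I would then apply the Gaussian approximation theory for sparse, rearranged high-dimensional time series of \cite{Dette2021ConfidenceSF} (cf. \cite{zhang2018}) to the block vectors $\bar{\mf V}^{\B}_j$ of \eqref{eq:hatV}, obtaining zero-mean Gaussian vectors $(\mf Z_i)$ sharing the autocovariance of $(\bar{\mf V}^{\B}_i)$ with $\sup_x\big|\PP(\max_z\sup_t\sqrt{nb_z}|\hat\rho_z-\rho_z|/\Gamma_z\le x)-\PP(|\tfrac{1}{\sqrt{nb}}\sum_i\mf Z_i|_\infty\le x)\big|=o(1)$; the $\nt$-offsets in \cref{alg:jointncp} simply restrict all suprema to the interior $[b,1-b]$ with the $\tau$-boundary of $\hat\mu_i$ absorbed into the indexing.

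The second step is bootstrap consistency: $\sup_x|P(Z_{\bt}\le x\mid\F_n)-\PP(|\tfrac{1}{\sqrt{nb}}\sum_i\mf Z_i|_\infty\le x)|=\op(1)$. As in the proof of \cref{boot:net} this reduces to comparing the conditional covariance of the bootstrap blocks $\hat{\mf S}^{\B}_{l,j}R_{l+j}$ (built from $\tilde V^{i,l}_{j,k}$ and $\hat\Gamma^{i,l}_k$) with the target covariance $\sigma^{\mf Z}$. I would chain the substitutions $\hat\mu\to\mu$ (trend error), $\hat\epsilon\hat\epsilon\to e$ (centering), $\hat\Gamma_z\to\Gamma_z$, $\hat\rho_z\to\rho_z$, bounding each in $\mathcal L^{q/2}$ on a high-probability $\F_n$-event and combining by the triangle inequality exactly as in \eqref{eq:square0}--\eqref{eq:square}; this yields the block-level rate $\vartheta_n=\log^{2}n/w+w/(nb)+\sqrt{w/(nb)}(n|\B|)^{4/q}$ plus the estimation rate $e_n$, whose pieces $\sqrt{m/(n\eta^{2})}+m^{-1}+\eta^{2}+\sqrt{m/(nb)}(mb/n)^{-1/(2q)}$ come from \cref{prop:Gammacp}, $1/\sqrt{w\tau^{2}}$ from the trend error, and $w^{3/2}/n$ from the block discretization. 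Feeding these into the Gaussian comparison lemma (Lemma C.1 of \cite{Dette2021ConfidenceSF}, of Chernozhukov–Chetverikov–Kato type) and optimizing the truncation level $q_n$ gives the assumed rate $\vartheta_n^{1/3}(1\vee\log(n|\B|/\vartheta_n))^{2/3}+(e_n(n|\B|)^{1/q})^{q/(q+2)}\to0$.

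The conclusions then follow as in \cref{boot:net}. For (i), under $H_0$ the event $\{T_z(t)\le\hat r_{\bt},\ \forall z,t\}$ coincides with $\{\{\rho_z(\cdot)\}\in\mathcal C_n\}$, and Steps 1--2 together with Gaussian anti-concentration (continuity of the law of the Gaussian maximum, hence convergence of the bootstrap quantile $\hat r_{\bt}$ to the $(1-\alpha)$-quantile of $|\tfrac{1}{\sqrt{nb}}\sum_i\mf Z_i|_\infty$) give $P(\{\rho_z(\cdot)\}\in\mathcal C_n\mid\F_n)\overset{p}{\to}1-\alpha$. For (ii), the Gaussian concentration bound yields $\max_{z\in\B}\sup_{t\in\TT}\sqrt{nb_z}|\hat\rho_z(t)-\rho_z(t)|/\hat\Gamma_z(t)=\Op(\log(n|\B|))$, so under \cref{ass:alter}, for $z\in\tilde\B(t)$ one has $T_z(t)\ge\lambda_n\log(n|\B|)-\Op(\log(n|\B|))\to\infty$ uniformly, exactly as in \eqref{eq:fixalter}; combining with (i) gives $\lim_n\lim_B\int_0^1 P(\hat N(t)\mid\F_n)\,dt\ge\lim_n\lim_B P(\{\rho_z(\cdot)\}\in\mathcal C_n\mid\F_n)=1-\alpha$ with probability approaching $1$.

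The main obstacle is the uniform treatment of the local linear trend error in Steps 1--2: one must show that replacing $\epsilon_{j,i}$ by $\hat\epsilon_{j,i}$ throughout all quadratic statistics perturbs both the curve estimates $\hat\rho_z$ and the long-run variance estimates $\hat\Gamma_z$ by amounts that are negligible \emph{uniformly} over the (possibly diverging) family $\B$, which requires a uniform-in-$t$ stochastic expansion of $\hat\mu_i(\cdot)-\mu_i(\cdot)$, careful bookkeeping of the bilinear residual terms and of the $\tau$-width boundary region, and reconciliation of the resulting rates with those of \cref{prop:Gammacp}. Everything downstream of this is a routine transcription of the difference-based argument.
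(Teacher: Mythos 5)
Your proposal is correct and follows essentially the same route as the paper's proof: a uniform stochastic expansion of $\hat\rho_z$ absorbing the trend-estimation error (the paper implements this via maximal partial-sum bounds for $\sum_j(\hat\epsilon_{j,i}\hat\epsilon_{j+k,l}-\epsilon_{j,i}\epsilon_{j+k,l})$ and summation by parts, yielding the rate $(nb\tau)^{-1}+b^3$ that matches your $|\B|^{1/q}(\sqrt{nb^7}+(nb\tau^2)^{-1/2})$ condition after normalizing by $\sqrt{nb_z}$), then the Gaussian approximation for the block vectors $\bar{\mf V}^{\B}_j$, the conditional covariance comparison on high-probability events with the extra term $l_n/\sqrt{w}=|\B|^{1/q}/\sqrt{w\tau^2}$, and the same optimization of the truncation level and the same endgame for (i) and (ii). No gaps to flag.
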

\begin{remark}

 \cref{boot:ncp} admits both cases when $|\B|$ is fixed and  when $|\B|$ diverges. The conditions of \cref{boot:ncp} can be satisfied for sufficiently large $q$,  if $w \asymp \lfloor n^{2/5} \rfloor$, $\eta \asymp  n^{-1/7}$, $b \asymp  n^{-1/5}$, $m \asymp  \lfloor n^{2/7} \rfloor$ and $\tau \asymp  n^{-1/6}$.
\end{remark}

\section{Proofs}\label{sec:proof1}

\subsection{Proof of \texorpdfstring{\cref{nonasynetwork}}{Theorem 3.1}}

Recall that in \cref{prop:rhok}, 
$\vartheta^{i,l}_k(t) = \frac{1}{nb_{k}^{i,l}}\sum_{j=1}^n K_{b_{k}^{i,l}}(t_i-t)\Xi^{i,l}_{j,k}$.  By \cref{prop:rhok}, we have 
 \begin{align}
  \max_{(i,l,k) \in \B} \left\|\sup _{t \in \mathcal{T}}\left| \tilde{\rho}^{i,l}_{k}(t)-\rho^{i,l}_{k}(t)- \vartheta^{i,l}_k(t)\right|\mf 1(\bar B^{\prime}_n)\right\|_q= O\left( b^{-1/q}(n^{\phi-1}b^{-1}h  + b^{3 }+n^{-1/2}h)\right).
  \label{eq:suprhok_diff}
 \end{align}
Under \cref{A:K}, we have $\mu_2 = \int u^2 K(u) du = 0$.  Let $\TT_n = [\nb, n-\nb]$.

Notice that 
$ \stackrel[(i,l,k) \in \B]{}{\max}\underset{j \in \TT_n}{\max}\left|\tilde \rho_k^{i,l}(t_j) - \rho_k^{i,l}(t_j) - \vartheta^{i,l}_k(t_j) \right|^q \leq \underset{(i,l,k) \in \B}{\sum} \underset{j \in \TT_n}{\max} \left|\tilde \rho_k^{i,l}(t_j) - \rho_k^{i,l}(t_j) - \vartheta^{i,l}_k(t_j) \right|^q$.
By \eqref{eq:suprhok_diff} we have
\begin{align}
    \left\| \max_{ (i, l, k) \in \B}\max_{ j \in \TT_n}\left|\tilde \rho_k^{i,l}(t_j) - \rho_k^{i,l}(t_j) - \vartheta^{i,l}_k(t_j)\right| \mf 1(\bar B^{\prime}_n)\right\|_q = O( (|\B|/b)^{1/q} (hn^{\phi - 1}b^{-1}  + b^{3}+n^{-1/2}h)).\label{eq:rhoqnorm}
\end{align}
Similarly, by \cref{lm:loclin} and \eqref{eq:betagamma}, we have $P(B^{\prime}_n) = O((q_n^{\prime})^{-q})$. 
Write $c_n =(|\B|/b)^{1/q} (hn^{\phi - 1}b^{-1}  + b^{3}+n^{-1/2}h) $ for short. 
By elementary calculation similar to Lemma C.3 in \cite{Dette2021ConfidenceSF}, \begin{align}
  nb^{i,l}_k\left\| \vartheta^{i,l}_k(t) \right\|^2 &= \tilde \Gamma^{i,l,2}_{k}(t)+ O(b\log b + \chi^{n b r_n}+ r_n + (nb)^{-1}),\label{eq:varV_diff}
\end{align}
where $r_n = M \log n/(nb)$ for some sufficiently large positive $M$. Since $\sum_{s=1}^{2\nb}c_k^{i,l}\Xi^{i,l}_{j,k}K_{b^{i,l}_k}(t_j -t_s)/(\sqrt{nb} \tilde \Gamma_k^{i,l}(t_j)) = \sqrt{nb} |\vartheta_k^{i,l}(t_j)|/\tilde \Gamma_k^{i,l}(t_j)$,  it can be verified that under \eqref{eq:varV_diff} 
 $\frac{1}{\sqrt{nb}}\sum_{i=1}^{2\nb}\bar{\mf \Xi}_{i}^{\B}$ satisfies Condition (9) in Corollary 2  of \cite{zhang2018}. Then, there exists a sequence of zero-mean Gaussian vectors $(\tilde{\mf Z}_i)_{i=1}^{2\nb} \in \R^{(n - 2\nb+1)|\B|}$, which share the same autocovariance structure with the vectors $(\bar{\mf \Xi}^{\B}_i)_{i=1}^{2\nb} $  such that
\begin{align}
    \sup_{x \in \R}\left|\PP\left(\left| \frac{1}{\sqrt{nb}}\sum_{i=1}^{2\nb}\bar{\mf \Xi}_{i}^{\B}\right|_{\infty}\leq x \right) -\PP\left(\left| \frac{1}{\sqrt{nb}}\sum_{i=1}^{2\nb} \tilde{\mf Z}_i\right|_{\infty} \leq x\right) \right| = O((nb)^{-(1-11\iota)/8}).
\end{align}
Then by Lemma C.1 in \cite{Dette2021ConfidenceSF}, we have 
\begin{align}
    &\sup_{x \in \R}\left|\PP\left(\max_{(i,l,k) \in \B} \max_{j\in \TT_n}\sqrt{nb^{i,l}_k}|\tilde \rho_k^{i,l}(t_j) - \rho_k^{i,l}(t_j)|/\tilde \Gamma_k^{i,l}(t_j)\leq x\right)- \PP\left(\left|  \frac{1}{\sqrt{nb}} \sum_{i=1}^{2\nb} \tilde{\mf Z}_i\right|_{\infty}\leq x\right)\right|\\ 
    &= O((nb)^{-(1 - 11\iota)/8}+(\sqrt{nb}c_n/\delta)^{q}
    + \Theta\left(\delta , n |\B|\right) + (q_n^{\prime})^{-q} ),\label{eq:ncpstep1_diff}
\end{align}
where solving $(\sqrt{nb}c_n/\delta)^q = \delta$ and $\delta = (q_n^{\prime})^{-q}$, we have $\delta = (\sqrt{nb}c_n)^{q/(q+1)}$ and $q_n^{\prime} = (\sqrt{nb}c_n)^{-1/(q+1)}$.
Since $|\B|^{1/q} h n^{\phi - 1/2}b^{-1/2-1/q} \to 0$, $|\B|^{1/q} n^{1/2}b^{7/2-1/q} \to 0$, we have $c_n = (|\B|/b)^{1/q} (hn^{\phi - 1}b^{-1}  + b^{3}+n^{-1/2}h) = o((nb)^{-1/2})$. Therefore, $q_n^{\prime} \to \infty$.
Under the bandwidth condition $(nb)^{-1/2}\{|\B|/(hn^{\phi-1}+b^4 + n^{-1/2}hb)\}^{1/(q+2)} \to 0$, we have $f_n q_n^{\prime} \to \infty$.
Note that $c_n  = o((nb)^{-1/2}) = o(b^{-1})$. Then, it also holds that $$\sup_{t \in \TT} \left\|\frac{\partial}{\partial t } \sqrt{nb^{i,l}_k} (\tilde \rho_k^{i,l}(t) - \rho_k^{i,l}(t))\mf 1(\bar B^{\prime}_n)\right\|_q \leq M / b,$$ where $M$ is a sufficiently large constant. By Taylor's expansion, the continuity of $\tilde \Gamma^{i,l}_k(t)$ as well as the strict positiveness of $\tilde \Gamma^{i,l}_k(t)$ by \cref{Ass:diff}, we have 
\begin{align}
    \left\| \max_{(i,l,k) \in  \B} \sqrt{nb^{i,l}_k} \sup_{|t_j- t| \leq n^{-1}, j \in \TT_n, t\in[0,1]}\left |\frac{\tilde \rho_k^{i,l}(t) - \rho_k^{i,l}(t)}{ \tilde \Gamma_k^{i,l}(t)} - \frac{ \tilde \rho_k^{i,l}(t_j) - \rho_k^{i,l}(t_j)}{\tilde \Gamma_k^{i,l}(t_j)}\right|\mf 1(\bar B^{\prime}_n)\right\|_q = O(|\B|^{1/q} (nb)^{-1}).\label{eq:ncpstep2_diff}
\end{align}
Combining \eqref{eq:ncpstep1_diff} and \eqref{eq:ncpstep2_diff}, following similar arguments of Theorem 3.2 in \cite{Dette2021ConfidenceSF}, we obtain
\begin{align}
        \sup_{x \in \R}\left|\PP\left(\max_{(i,l,k) \in \B} \sup_{t \in \TT} \sqrt{nb^{i,l}_{k}}|\tilde \rho^{i,l}_{k}(t) - \rho^{i,l}_{k}(t)|/\tilde \Gamma^{i,l}_{k}(t)\leq x\right)- \PP\left(\left|\frac{1}{\sqrt{n b}} \sum_{i=1}^{2\nb} \tilde{\mf Z}_i\right|_{\infty}\leq x\right)\right| = O(\tilde \theta_n),
        \label{eq:lemma1}
 \end{align}
where 
$\tilde \theta_n = (n b)^{-(1 - 11\iota)/8}
+ \Theta\left((\sqrt{nb}c_n)^{q/(q+1)} , n |\B|\right) + \Theta\left( |\B|^{1/(q+1)}(nb)^{-q/(q+1)}, n |\B|\right)$.
Note that for a sufficiently large $q$, we have $\tilde \theta_n = o(1)$. \hfill $\Box$
\subsection{Proof of \texorpdfstring{\cref{boot:ncp}}{Theorem D.1}}\label{pf:ncp}

Following (A.8) and (A.10) of the proof of Theorem 3.1 in \cite{dette2018change}, assuming $\tau \to 0$, we have 
\begin{align}
    \max_{(i,l,k) \in \B} \left\|\max_{\nt \leq r \leq n - \nt }\left|\sum_{j=1}^r \hat \epsilon_{j,i}\hat \epsilon_{j+k,l}  - \sum_{j=1}^r \epsilon_{j,i} \epsilon_{j+k,l} \right|\right\|_q = O(\log^2 n/\sqrt{n\tau^2} + \tau^{-1}) = O(\tau^{-1}). \label{eq:doubleres}
\end{align}
Recall that $\hat \gamma^{i,l}_k(t) = \frac{1}{nb^{i,l}_k} \sum_{j=1}^n \hat \epsilon_{j,i}\hat \epsilon_{j+k,l} K_{b^{i,l}_k}(t_j-t)$.
Then, by the summation-by-parts formula and \eqref{eq:doubleres}, we have 
\begin{align}
    \max_{(i,l,k) \in \B}  \left \|\sup_{t \in [\tau, 1-\tau] }\left|\hat \gamma^{i,l}_k(t) - \frac{1}{nb^{i,l}_k} \sum_{i=1}^n \epsilon_{j,i} \epsilon_{j+k,l}  K_{b_k^{i,l}}(t_i-t)\right| \right\|_q  = O((nb\tau)^{-1}). \label{eq:secondres}
\end{align}
Recall the definition of $e^{i,l}_{j,k}$ in  \eqref{eq:res}. 
By Lemma B.1 in the supplememt of \cite{dette2018change}, under \cref{A:K}, which yields that $\mu_2 = \int u^2 K(u) du = 0$ , uniformly for $(i, l, k) \in \B$, we have
\begin{align}
    \left\| \sup_{t \in  [b, 1-b]}\left| \frac{1}{nb^{i,l}_k} \sum_{i=1}^n \epsilon_{j,i} \epsilon_{j+k,l} K_{b^{i,l}_k}(t_i - t) - \gamma^{i,l}_k(t) - \frac{1}{nb^{i,l}_k}\sum_{j=1}^n K_{b^{i,l}_k}(t_j - t)e^{i,l}_{j,k} \right| \right\|_q = O((nb)^{-1}+ b^3).  \label{eq:secondlocal}
\end{align}
 Combining \eqref{eq:secondres} and \eqref{eq:secondlocal}, it follows that uniformly for $(i, l, k) \in \B$,
\begin{align}
    \left\|\sup_{t \in   [\tau+b , 1-\tau-b]}\left| \hat \gamma^{i,l}_k(t) - \gamma^{i,l}_k(t) - \frac{1}{nb^{i,l}_k}\sum_{j=1}^n K_{b^{i,l}_k}(t_j - t)e^{i,l}_{j,k} \right| \right\|_q = O((nb\tau)^{-1} + b^3) = O(a_n),\label{eq:supergamma}
\end{align}
where $a_n =(nb\tau)^{-1} + b^3$. Let $\TT^{\prime} = [\tau+b , 1-\tau-b]$, $\TT_n^{\prime} = [\nt + \nb , n-\nt-\nb]$.

By \eqref{eq:supergamma} and similar arguments of \eqref{eq:rhoqnorm} we have,
\begin{align}
    \left\| \max_{ (i,l,k) \in \B}\max_{s \in \TT^{\prime}_n}\left|\hat \gamma^{i,l}_k(t_s) - \gamma^{i,l}_k(t_s) - \frac{1}{nb^{i,l}_k}\sum_{j=1}^n K_{b^{i,l}_k}(t_j - t)e^{i,l}_{j,k}\right| \right\|_q = O(|\B|^{1/q} a_n).\label{eq:superBgamma}
\end{align}
Recall the definition of $V^{i,l}_{j,k}$ in \eqref{eq:defineV}.
By elementary calculation similar to Lemma C.3 in \cite{Dette2021ConfidenceSF}, uniformly for $t \in \TT^{\prime}$, 
\begin{align}
   \frac{1}{nb^{i,l}_k}\left\| \sum_{j=1}^n V^{i,l}_{j,k} K_{b^{i,l}_k}(t_j - t)\right\|^2 &= \kappa \sigma^2(L_k^{i,l},t)+ O(b\log b + \chi^{nb r_n}+ r_n + (nb)^{-1}),\label{eq:varV}
\end{align}
where $r_n = a \log n/(nb)$ for some sufficiently large positive constant $a$. Therefore, under \cref{Ass:ck}, it can be verified Condition (9) of Corollary 2  of \cite{zhang2018} is satisfied.
Then, it follows that
\begin{align}
    \sup_{x \in \R}\left|\PP\left(\left| \frac{1}{\sqrt{nb}}\sum_{i=\nt }^{2\nb -\nt}\bar{\mf V}_{i}^{\B}\right|_{\infty}\leq x \right) -\PP\left(\left| \frac{1}{\sqrt{nb}}\sum_{i=\nt }^{2\nb-\nt } \bar{\mf Z}_i\right|_{\infty} \leq x\right) \right| = O((nb)^{-(1-11\iota)/8}),
\end{align}
where $(\bar{\mf Z}_i)_{i=\nt }^{2\nb - \nt } \in \R^{(n - 2\nb+ 1)|\B|}$ is a sequence of zero-mean Gaussian vectors, which share the same autocovariance structure with the vectors $(\bar{\mf V}_i^{\B})_{i=\nt }^{2\nb - \nt} $,  $n|\B| = O(\exp(n^{\iota}))$ for some $0 \leq \iota < 1/11$. 
Since $|\B|^{1/q} a_n = o((nb)^{-1/2})$, $|\B|^{1/q} a_n = o(b^{-1})$, following similar arguments in the proof of \cref{prop:rhok} and similar arguments of the proof of \eqref{eq:lemma1} of \cref{nonasynetwork}, we have
\begin{align}
    \sup_{x \in \R}\left|\PP\left(\max_{(i,l,k) \in \B} \sup_{t \in \TT^{\prime}}\sqrt{nb^{i,l}_k} |\hat \rho^{i,l}_k(t) - \rho^{i,l}_k(t)|/ \Gamma^{i,l}_k(t)\leq x\right)- \PP\left(\left|\frac{1}{\sqrt{n b}} \sum_{i=\nt }^{2\nb-\nt } \bar{\mf Z}_i\right|_{\infty}\leq x\right)\right| = o(1).\label{eq:lemmaV}
\end{align}
Let $Z_{\bt}$ denote $Z_{\bt}^{(r)}$ in the one iteration of \cref{alg:jointncp}. Following \eqref{eq:lemmaV}, it's sufficient to show that 
    \begin{align}
        \sup_{x \in \mathbb R} \left| P(Z_{\bt} \leq x| \F_n) - P\left(\left| \frac{1}{\sqrt{n b}}\sum_{i=\nt }^{2 \nb -\nt } \bar{\mf Z}_i \right|_{\infty} \leq x\right)\right|  =\op(1).
    \end{align}
    Define 
    \begin{align}
        \mf Z^{\diamond}_{a|\B| + c} = \left( \sum_{j = w + \nt -1 }^{2\nb - w - \nt} \hat S^{\B}_{(a-1), j, c} R_{a+j-1}, a=1, \cdots, n-2\nb +1, 1 \leq c \leq |\B| \right),
    \end{align}
    where $\hat S^{\B}_{l, j, r}$ denote the $r$th element of $\hat{\mf S}^{\B}_{l, j}$. Let
    \begin{align}
        \mf Z^{\diamond} = \left( \mf Z_1^{\diamond, \T},\cdots, \mf Z_{(n-2\nb +1)|\B|}^{\diamond, \T}\right)^{\T},
    \end{align}
    and it follows that $Z_{\bt} = |\mf Z^{\diamond}|_{\infty}$. 
    Define 
    \begin{align}
        S_{(a-1), j,c }^{\B} = \sum_{i=j-w+1}^{j} \bar{ V}^{\B}_{i + (a-1), \nb + (a-1), c} - \sum_{i=j+1}^{j+w}\bar{ V}^{\B}_{i + (a-1), \nb + (a-1), c},
    \end{align}
    and $\mf Z^{\dagger}$ by substituting $\hat{S}^{\B}_{(a-1), j, c}$ in $\mf Z^{\diamond}$ by $ S^{\B}_{(a-1), j, c}$.
 Similar to the proof of \eqref{eq:dagger} of \cref{boot:net}, we have 
    \begin{align}
      &\sup_{x \in \R}\left|P\left(\left.\frac{ |\mf Z^{\dagger}|}{\sqrt{2w(\nb-\nt)}}\leq x \right|\F_n\right) - P\left(\left| \frac{1}{\sqrt{n b}}\sum_{i=\nt }^{2 \nb -\nt } \bar{\mf Z}_i \right|_{\infty} \leq x\right)\right|\\& = \Op(\vartheta_n^{1/3}\left\{ 1 \vee \log (n|\B|/\vartheta_n) \right\}^{2/3}),
      \label{eq:tobarZ}
    \end{align}
    where $\vartheta_n = \frac{\log^2 n}{w} + \frac{w}{nb} + \sqrt{\frac{w}{nb}}(n|\B|)^{4/q}$. Recall that $f_n = (nb)^{-1/2} b^{-1/q} |\B|^{1/q}$ and $g^{\prime}_n = |\B|^{1/q} (\sqrt{m/(n \eta^{2})} + 1/m + \eta  +\sqrt{m/(nb)}(mb/n)^{-1/(2q)})$.  Let $l_n = |\B|^{1/q}\tau^{-1}$.
    Define the $\FF_n$ measurable event
   $$
        A_n^{\circ} = \left\{ \max_{(i,l,k) \in \B} \sup_{t \in \TT^{\prime}} \left|\hat \Gamma^{i,l,2}_k(t) - \Gamma^{i,l,2}_k(t) \right|  > g^{\prime}_n q_n^{\circ}\right\},\quad B_n^{\circ} = \left\{ \max_{(i,l,k) \in \B} \sup_{t \in \TT^{\prime}} \left|\hat \gamma^{i,l,2}_k(t) - \gamma^{i,l,2}_k(t) \right|  > f_n q_n^{\circ}\right\},$$ and $$
        C_n^{\circ} = \left\{\max_{(i,l,k) \in \B}  \max_{\nt \leq r \leq n - \nt } \left|\sum_{j=1}^r \hat \epsilon_{j,i}\hat \epsilon_{j+k,l}  - \sum_{j=1}^r \epsilon_{j,i} \epsilon_{j+k,l} \right|  > l_n q_n^{\circ}\right\},$$
    where $q_n^{\circ}$ is a positive sequence which goes to infinity such that $(g_n^{\prime} + l_n + f_n)q_n^{\circ} \to 0$. Then by similar arguments in \cref{prop:Gammacp},  \eqref{eq:doubleres}  and \eqref{eq:superBgamma}, we have 
    \begin{align}
        P(A_n^{\circ} \cup B_n^{\circ} \cup C_n^{\circ}) = O((q_n^{\circ})^{-q}).\label{eq:set}
    \end{align}
    Then, for some large constant $M$, by the conditional normality we have
    \begin{align}
        &\E(|\mf Z^{\diamond} - \mf Z^{\dagger}|^q_{\infty}\mf 1(\bar A_n^{\circ} \cap \bar B_n^{\circ}  \cap \bar C_n^{\circ}) |\FF_n) \\ 
        &\leq M\left| \sqrt{\log n |\B|} \max_{1 \leq r \leq |\B|, 0 \leq l \leq n - 2\nb}\left(\sum_{j=w + \nt -1 }^{2\nb - w - \nt }(\hat S_{l, j, r}^{\B} - S_{l, j, r}^{\B})^2 \mf 1(\bar A_n^{\circ} \cap \bar B_n^{\circ}  \cap \bar C_n^{\circ})\right)  \right|^{q/2}.\label{eq:Zboot1}
    \end{align}
    By the continuity of $\gamma_k(\cdot)$ and $\gamma_0(\cdot)$, we have
    \begin{align}
        &\frac{1}{\sqrt{2w(\nb - \nt)}}\left\| \max_{1 \leq r \leq |\B|, 0 \leq l \leq n - 2\nb}\left(\sum_{j=w + \nt -1 }^{2\nb - w - \nt }(\hat S_{l, j, r}^{\B} - S_{l, j, r}^{\B})^2 \mf 1(\bar A_n^{\circ} \cap \bar B_n^{\circ}  \cap \bar C_n^{\circ})\right)^{1/2}  \right\|_q \\ &= O\left(\left((g^{\prime}_n + f_n + l_n/\sqrt{w})q_n^{\circ} + w^{3/2}/n\right) (n|\B|)^{1/q}\right).\label{eq:Zboot2}
    \end{align}
   Let $q_n^{\circ} =\left(\left(g^{\prime}_n+f_n+l_n/\sqrt{w}+ w^{3/2}/n\right)  (n|\B|)^{1/q}\right)^{-1/(q+2)} $. Combining \eqref{eq:tobarZ}, \eqref{eq:set}, \eqref{eq:Zboot1} and \eqref{eq:Zboot2}, following similar arguments in proving \cref{boot:net}, we have 
    \begin{align}
        &\sup_{x \in \R} \left| P(Z_{\bt} \leq x| \F_n) - P\left(\left| \frac{1}{\sqrt{n b}}\sum_{i= \nt }^{2 \nb - \nt } \bar{\mf Z}_i \right|_{\infty} \leq x\right)\right|\\ 
         & =\Op \left(\vartheta_n^{1/3}\left\{ 1 \vee \log (n|\B|/\vartheta_n) \right\}^{2/3} +\Theta\left((q_n^{\circ})^{-q}, n |\mathbb B |\right)\right)\\ 
        &=\op(1).
    \end{align}
    Finally, (i) and (ii) follow from similar arguments of the proof of \cref{boot:net}. \hfill $\Box$
\section{Proof of auxiliary results}\label{sec:aux}
\subsection{Proofs of \texorpdfstring{\cref{prop:rhok} and a corollary}{Proposition C.1 and a corollary}}
In order to show \cref{prop:rhok}, we first prove the following lemma.
\begin{lemma}\label{lm:loclin}
Under Assumptions \ref{Ass:error}, \ref{A:K}, \ref{Ass:ck}, \ref{Ass:diff} and bandwidth conditions $nb^4 \to \infty$, $nb^7 \to 0$, $n^{2\phi - 1}b^{-1}h^2 \to 0$.  For any $q \geq 4$, we have the following results:
\begin{enumerate}[label=(\roman*)]
\item The difference-based estimator has the following approximation,
$$
\max_{(i,l,k) \in \B} \left\| \sup _{t \in \mathcal{T}}\left|\hat{\beta}^{i,l}_{k}(t)-\beta^{i,l}_{k}(t)-\frac{1}{nb_k^{i,l}}\sum_{j=1}^n K_{b^{i,l}_k}(t_j - t) \tilde e^{i,l}_{j,k} \right| \right\|_{q}=O\left(b^{-1/q}(n^{\phi-1}b^{-1}h+ b^{3}+ n^{-1/2}h)\right).
$$
\item  For the difference-based estimator with variance reduction, we have 
\begin{align}
    \max_{(i,l,k) \in \B} \left\| \sup _{t \in \mathcal{T}}\left|\check{\beta}^{i,l}_{k}(t)-\beta^{i,l}_{k}(t) -\frac{1}{nb_k^{i,l}}\sum_{j=1}^n \check{K}_{b^{i,l}_k}(t_j - t) \tilde e^{i,l}_{j, k} \right| \right\|_{q} =O\left(b^{-1/q}(n^{\phi-1}b^{-1}h+ b^{3}+ n^{-1/2}h)\right).
\end{align}
\end{enumerate}
\end{lemma}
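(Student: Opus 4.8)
The plan is to prove part~(i) by writing $\hat\beta^{i,l}_{k}$ in explicit local-linear weight form and controlling the decomposed response term by term, and then to deduce part~(ii) from part~(i) by linearity, since $\check\beta^{i,l}_{k}$ is a fixed linear combination of $\hat\beta^{i,l}_{k}$ evaluated at finitely many nearby interior points.

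For~(i), write $\hat\beta^{i,l}_{k}(t)=\sum_{j=1}^{n}l_{j}(t)\,\tilde y^{i}_{j,k}\tilde y^{l}_{j,h}$ with the standard local-linear weights $l_{j}(t)=l_{j}(t;b^{i,l}_{k})$, and split the response as in \eqref{eq:def_ytilde}, $\tilde y^{i}_{j,k}\tilde y^{l}_{j,h}=\tilde\epsilon^{i}_{j,k}\tilde\epsilon^{l}_{j,h}+\tilde\mu^{i}_{j,k}\tilde\mu^{l}_{j,h}+\tilde\epsilon^{i}_{j,k}\tilde\mu^{l}_{j,h}+\tilde\epsilon^{l}_{j,h}\tilde\mu^{i}_{j,k}$. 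For the noise--squared part, write further $\tilde\epsilon^{i}_{j,k}\tilde\epsilon^{l}_{j,h}=\beta^{i,l}_{k}(t_{j-k})+\tilde e^{i,l}_{j,k}$. The deterministic part contributes $\beta^{i,l}_{k}(t)+O(b^{3})$: a two-term Taylor expansion of $\beta^{i,l}_{k}$ (whose second derivative is Lipschitz uniformly over $\B$ by \cref{A:gamma} and \eqref{eq:beta_gamma}), the vanishing of $\int uK=\int u^{2}K=0$ from \cref{A:K}, and the $O(h/n)$ index shift $t_{j-k}\mapsto t_{j}$ all enter; since $\int u^{2}K=0$ makes the design second moment degenerate one must expand the local-linear weights beyond their vanishing leading moments, the condition $nb^{4}\to\infty$ being what absorbs the resulting corrections into the $O(b^{3})$ term. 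The centered part yields $(nb^{i,l}_{k})^{-1}\sum_{j}K_{b^{i,l}_{k}}(t_{j}-t)\tilde e^{i,l}_{j,k}$ up to an $\mathcal L^{q}$-negligible remainder, using that $l_{j}(t)-(nb^{i,l}_{k})^{-1}K_{b^{i,l}_{k}}(t_{j}-t)$ is to leading order $(nb^{i,l}_{k})^{-1}K_{b^{i,l}_{k}}(t_{j}-t)$ times a factor vanishing with $b$, and that the centered locally stationary sequence $\tilde e^{i,l}_{j,k}$ has geometrically decaying physical dependence by \cref{A:G_delta}. For the trend--squared part, uniform Lipschitz continuity of the pieces $\mu_{i,l}$ gives $|\tilde\mu^{i}_{j,k}|\le Ch/n$ outside the at most $O(n^{\phi}h)$ indices per kernel window whose interval $(t_{j-k},t_{j}]$ straddles a change point, where $|\tilde\mu^{i}_{j,k}|\le D$ by \cref{A:jump}; hence this term is $O(n^{\phi-1}b^{-1}h)$ uniformly. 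For the two cross parts, each is mean zero; re-indexing the lagged innovations, transferring the difference onto the piecewise-Lipschitz kernel--trend weights by summation by parts, and applying a moment inequality for locally stationary processes (using \cref{A:G_delta} and the conditions $nb^{4}\to\infty$, $nb^{7}\to0$, $n^{2\phi-1}b^{-1}h^{2}\to0$) bounds them by the stated $O(n^{-1/2}h)$ and $O(n^{\phi-1}b^{-1}h)$ terms, according to whether the contributing indices are far from or near a change point. Finally, the passage to $\sup_{t\in\TT}$ is handled by discretising $\TT$ into $O(b^{-1})$ points, bounding increments by the estimator's derivative (of order $b^{-1}$ times its magnitude), and applying a maximal inequality with $\mathcal L^{2q}$ moments (\cref{A:G_LS}), supplemented when $|\B|$ diverges by the sub-exponential tail bound of \cref{A:exp}; this produces the $b^{-1/q}$ factor.

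For~(ii), observe that $\check\beta^{i,l}_{k}(t)$ is a linear combination $\sum_{m}a_{m}\,\hat\beta^{i,l}_{k}(s_{m}(t))$ of $\hat\beta^{i,l}_{k}$ at the six points $s_{m}(t)=t-(\pm r+1-j)\delta(t)b^{i,l}_{k}$, $j\in\{0,1,2\}$, with fixed coefficients $a_{m}=\tfrac{1}{2}A_{j}(\pm r)$ summing to $1$, and that the minimum defining $\delta(t)$ guarantees $s_{m}(t)\in\TT$ for all $m$. Applying~(i) at each $s_{m}(t)$ and recombining, the stochastic parts assemble into $(nb^{i,l}_{k})^{-1}\sum_{s}\check K_{b^{i,l}_{k}}(t_{s}-t)\tilde e^{i,l}_{s,k}$ — this is exactly the identity defining $\check K$ — while the remainder and bias terms only acquire a bounded multiplicative constant. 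That the bias stays $O(b^{3})$ uses that the $A_{j}(\pm r)$ are the Lagrange interpolation weights at the nodes $\{-1,0,1\}$ evaluated at $\pm r$, hence reproduce quadratics exactly, so that $\sum_{m}a_{m}\beta^{i,l}_{k}(s_{m}(t))-\beta^{i,l}_{k}(t)=O((\delta b^{i,l}_{k})^{3})=O(b^{3})$ by the Lipschitz continuity of the second derivative of $\beta^{i,l}_{k}$; and $\check K$ inherits \cref{A:K}, as already observed in the main text.

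The main obstacle I anticipate is in part~(i): coordinating the change-point bookkeeping that yields the sharp $O(n^{\phi-1}b^{-1}h)$ rate, the treatment of the local-linear weights in the degenerate regime forced by $\int u^{2}K=0$ (where both leading design moments $S_{n,1}(t),S_{n,2}(t)$ must be expanded beyond their vanishing principal parts), and the simultaneous uniformisation over $t\in\TT$ and over a possibly diverging $\B$, which forces the physical-dependence and sub-exponential machinery rather than elementary chaining. Part~(ii) is then essentially algebraic, the only delicate point being the boundary zone where $\delta(t)<\delta$ and $\check K$ becomes $t$-dependent while still satisfying the analogue of \cref{A:K}.
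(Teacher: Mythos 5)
Your proposal follows essentially the same route as the paper's proof: the same four-way splitting of $\tilde y^{i}_{j,k}\tilde y^{l}_{j,h}$ into $\beta^{i,l}_k$ plus centered $\tilde e^{i,l}_{j,k}$, trend--trend, and two cross terms, the same change-point bookkeeping yielding $O(n^{\phi-1}b^{-1}h)$, the same summation-by-parts bound $O(n^{-1/2}h)$ for the cross terms, the same $O(b^3)$ bias from the fourth-order kernel $\int u^2K=0$, the same pointwise-to-uniform step paying $b^{-1/q}$ (done in the paper via Proposition B.1 of the cited reference, which is exactly your discretization argument), and the same reduction of (ii) to (i) through the quadratic-reproducing coefficients $A_j(\pm r)$ and the equivalent kernel $\check K$. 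The only cosmetic differences are that the paper works through the normal-equations matrix $\mf S_n(t)$ rather than explicit weights, and that your invocation of the sub-exponential tail condition for diverging $|\B|$ is unnecessary here, since the maximum over $\B$ sits outside the $\mathcal L^q$-norm and only uniformity of the constants over $z\in\B$ is needed.
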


\begin{proof}[Proof of of \cref{lm:loclin}]

Proof of (i). For simplicity, since $i, l, k$ will be fixed in the subsequent analysis, we omit them in the subscripts and  superscripts for short. That is we omit the dependence on $i$, $l$ and $k$ in $b^{i,l}_{k}$, $\beta^{i,l}_{k}$, when no confusion arises. Let $\tilde e_j$ denote $\tilde e^{i,l}_{j,k}$.
Define 
\begin{align}
    \mathbf{S}_{n}(t) = \begin{pmatrix}
        {S}_{n, 0}(t),& {S}_{n, 1}(t)\\
        {S}_{n, 1 }(t),& {S}_{n, 2}(t)
    \end{pmatrix},
\end{align}
where for $l=0, 1,2$,
\begin{equation}
    {S}_{n, l}(t)=\left(n b\right)^{-1} \sum_{j=1}^{n} \left\{\left(t_{j}-t\right) / b\right\}^{l} K_{b}\left(t_{j}-t\right).
  \end{equation}
 Further define 
\begin{align}
    &T_{n,s}(t) = \frac{1}{nb}\sum_{j=1}^n \tilde e_j \{(t_j-t)/b\}^s K_{b}(t_j-t), \quad \mf T_n(t) = (T_{n,0}(t),T_{n,1}(t))^{\T}, \label{def:T}\\ 
    &\delta_{n,s}(t) = \frac{1}{nb}\sum_{j=1}^n  \tilde \mu^{i}_{j,k}\tilde \mu^{l}_{j,h} \{(t_j-t)/b\}^s K_{b}(t_j-t), \quad \mf D_n(t) = (\delta_{n,0}(t),\delta_{n,1}(t))^{\T},\\ 
    &C_{n,s}(t) =  \frac{1}{nb} \sum_{j=1}^n(\tilde \epsilon^i_{j , k}\tilde \mu^{l}_{j,h} +  \tilde \epsilon^l_{j , h}\tilde \mu^{i}_{j,k}))\{(t_j-t)/b\}^s K_{b}(t_j-t),\quad \mf C_n(t)= (C_{n,0}(t),C_{n,1}(t))^{\T}.
\end{align}
Recall that,
$
   (nb)^{-1} \sum_{j=1}^n K_b (t_j - t) \tilde y^{i}_{j, k}\tilde y^{l}_{j, h} =  (nb)^{-1} \sum_{j=1}^n K_b (t_j - t) (\beta(t_j) +\tilde e_j+ \tilde \mu^{i}_{j,k}\tilde \mu^{l}_{j,h} + \tilde \epsilon^i_{j , k}\tilde \mu^{l}_{j,h} +  \tilde \epsilon^l_{j , h}\tilde \mu^{i}_{j,k})+ O(n^{-1}h).
$
Let $\hat{\bs \eta} (t) = \left(\hat{\beta}(t), b \hat{\beta}^{\prime}(t)\right)^{\T}$, $ \bs \eta(t) = \left(\beta(t), b \beta^{\prime}(t)\right)^{\T}$, where $\beta^{\prime}(t) = \partial \beta(t)/\partial t$.
Under condition \ref{A:gamma}, by Taylor's expansion,  if $|t_j - t| < b$,
$\beta(t_j) = \beta(t) + \beta'(t)(t_j - t) + (\beta^{\prime\prime}(t)/2 + O(b))(t_j - t)^2$. Therefore, 
\begin{align}
    \mf S_n(t) (\hat{\bs \eta} (t) - \bs \eta(t)) = \begin{pmatrix}
         b^2 S_{n,2}(t)(\beta^{\prime\prime}(t)+ O(b))/2 + O(n^{-1}h)  \\ 
         b^2  S_{n,3}(t)(\beta^{\prime\prime}(t)+ O(b))/2 + O(n^{-1}h)\\ 
    \end{pmatrix} +  \mf T_n(t) + \mf D_n(t) + \mf C_n(t).
    \label{eq:loc_mat}
\end{align}
We investigate $\mf T_n(t)$, $\mf D_n(t)$, $\mf C_n(t)$ in \eqref{eq:loc_mat} as follows. \par
(a) The order of $\sup_{t \in \mathscr{T}}\|\mf T_n(t)\|_q $.\par 
By Lemma B.3 of \cite{dette2018change}
\begin{align}
    \sup_{t \in \mathscr{T}}\| \mf T_n(t)\|_q = O((nb)^{-1/2}).\label{eq:T_bound}
\end{align}

(b) Calculation of $\mf D_n$. Recall that $t_i=i/n$.
Define the set
\begin{align}
    \mathbb{I}^i_k:=\left\{t_{j}:\left[t_{j-k}, t_{j}\right) \text { contains a change point of } \mu_i(\cdot) \right\}.
\end{align}
Let $\overline{\mathbb{I}}^i_k$ be the complement of $\mathbb{I}^i_k$ in $\left\{t_{1}, t_{2}, \cdots, t_{n}\right\}$. Here we omit the dependence  of $\mathbb{I}^i_k$ and $\overline{\mathbb{I}}^i_k$ on $n$ for the sake of brevity.
If there is no change point between $t_{j-k}$ and $t_{j}$, then $\tilde \mu^i_{j,k} = O(n^{-1}h)$, otherwise from condition \ref{A:jump}, we shall see $\tilde \mu^i_{j, k} = O(1)$. Let $\mathbb I =\mathbb I_k^i \cup \mathbb I_h^l$, and  $\overline{\mathbb{I}}$ be the complement of $\mathbb I$, where we omit the dependence on $i,l,k$ as long as no confusion is caused.
Then,
\begin{align}
    \sup_{t \in \mathscr{T}}| \delta_{n,s}(t)| &\leq \frac{1}{nb}  \sup_{t \in \mathscr{T}}\left|\sum_{j \in \mathbb I} \tilde \mu^i_{j,k}\tilde \mu^l_{j,h} \{(t_j-t)/b\}^s K_{b}(t_j-t)\right|\\ &+  \frac{1}{nb}  \sup_{t \in \mathscr{T}}\left|\sum_{j \in \bar{\mathbb I}} \tilde \mu^i_{j,k}\tilde \mu^l_{j,h} \{(t_j-t)/b\}^s K_{b}(t_j-t)\right|\\ 
    & = D_1 + D_2.
    \label{eq:delta}
\end{align}
Since there are at most $ O(n^{\phi} h)$ elements in $\mathbb{I}$, 
\begin{align}
    D_1 \leq \frac{1}{nb} \sum_{j \in \mathbb I}  \sup_{t \in \mathscr{T}}\left|\tilde \mu^i_{j,k}\tilde \mu^l_{j,h} \right| \sup_{u \in [-1,1]} |K(u)u^s|= O(n^{\phi-1}b^{-1}h).
    \label{eq:D_1}
\end{align}
Since $n^{2\phi -1}b^{-1}h^2 \to 0$, $D_1$ is of smaller order of $\mf T_n$. Note that under \cref{A:K}, for $t \in \mathcal T$,
\begin{align}
    (nb)^{-1}\sum_{i=1}^n\{(t_i-t)/b\}^{2l} K^2_{b}(t_i-t)
    & = \int_{-\infty}^{\infty} u^{2l}  K^2(u) du + O( (nb)^{-1}).
\end{align}
By Cauchy-Schwarz inequality, 
\begin{align}
    D_2 &\leq \frac{1}{nb}  \sup_{t \in \mathscr{T}}\left|\left(\sum_{j \in \overline{\mathbb I}} (\tilde \mu^i_{j,k}\tilde \mu^l_{j,h})^2 \right)^{1/2} \left(\sum_{j \in \overline{\mathbb I}}\{(t_j-t)/b\}^{2l} K^2_{b}(t_j-t)\right)^{1/2}\right|\\ 
    & \leq \frac{1}{nb}  \left(\sum_{j \in \overline{\mathbb I}}  (\tilde \mu^i_{j,k}\tilde \mu^l_{j,h})^2 \right)^{1/2} \sup_{t \in \mathscr{T}} \left(\sum_{i = 1}^n\{(t_i-t)/b\}^{2l} K^2_{b}(t_i-t)\right)^{1/2}\\
    &= O(n^{-2}b^{-1/2}h^2).
    \label{eq:D_2}
\end{align}
Therefore, combining \eqref{eq:delta}, \eqref{eq:D_1} and \eqref{eq:D_2}, we have
\begin{align}
    \sup_{t \in \mathscr{T}}| \delta_{n,s}(t)|  = O(n^{\phi-1}b^{-1}h).\label{eq:D_bound}
\end{align}

(c) Calculation of $\mf C_n$. 
 Let $c^{i,s}_{j,k}(t) = \tilde \mu^i_{j,k} \{(t_j-t)/b\}^s   K_{b}(t_j-t)$. If there is no change point between $t_{j-k}$ and $t_{j}$, $c^{i,s}_{j,k}(t)  = O(n^{-1}h)$, else from condition \ref{A:jump}, $c^{i,s}_{j,k}(t)  = O(1)$.
 
 \begin{align}
   \sup_{t \in \mathscr{T}}\| C_{n,s}(t)\|_q &=
    \frac{1}{nb}\sup_{t \in \mathscr{T}}\left \|\sum_{j\in \mathbb{I}} \tilde \epsilon^i_{j,k} c^{l,s}_{j,h}(t) \right\|_q +  \frac{1}{nb}\sup_{t \in \mathscr{T}}\left \|\sum_{j\in \mathbb{I}} \tilde \epsilon^l_{j,h} c^{i,s}_{j,k}(t) \right\|_q\\ 
    & + \frac{1}{nb}\sup_{t \in \mathscr{T}}\left \|\sum_{j \in \Ic} \tilde \epsilon^i_{j,k} c^{l,s}_{j,h}(t) \right\|_q +  \frac{1}{nb}\sup_{t \in \mathscr{T}}\left \|\sum_{j \in \Ic} \tilde \epsilon^l_{j,h} c^{i,s}_{j,k}(t) \right\|_q\\ 
    &:= C_{11} + C_{12} + C_{21} + C_{22}.
    \label{eq:C}
\end{align}
Similar to \eqref{eq:D_1}, under condition \ref{A:exp}, we have
\begin{align}
    C_{11}  = O(n^{\phi-1}b^{-1}h), \quad C_{12}  = O(n^{\phi-1}b^{-1}h).\label{eq:C_1}
\end{align}
Write $\Ic = \{j_1, j_2, \cdots, j_{K}\}$, where $j_s$ denotes $\tilde \mu_i(\cdot) $ does not contain a change point over  time interval $\left[t_{j_s-k}, t_{j_s}\right)$, and $\tilde \mu_l(\cdot) $ does not contain a change point over  time interval $\left[t_{j_s-h}, t_{j_s}\right)$,  $1 \leq s \leq K$. The cardinality of $\Ic$ is $O(n)$. Let $c^{l,s}_{j_0, h}(t) \equiv 0$.
Then, we have
\begin{align}
   & \sup_{t \in \mathcal{T}}\sum_{m=1}^{K} \left | c^{l,s}_{j_m, h}(t) - c^{l,s}_{j_{m-1}, h}(t)\right|\\
    &\leq \sup_{t \in \mathcal{T}}\sum_{m=1}^{K} \left |\tilde \mu^l_{j_m, h}  \{(t_{j_m}-t)/b\}^s K_{b}(t_{j_m}-t) \right|+\sup_{t \in \mathcal{T}}\sum_{m = 1}^K \left | \tilde \mu^l_{j_{m-1}, h}\{(t_{j_{m-1}}-t)/b\}^s  K_{b}(t_{j_{m-1}}-t)\right|\\
    &= O(hb).
\end{align}
Using summation-by-parts formula, we have
\begin{align}
    C_{21} &= \frac{1}{nb}\left \|\sum_{m = 1}^K \tilde \epsilon^i_{j_m,k} \right\|_q\sup_{t \in \mathcal{T}} \left| c^{l,s}_{j_K,h}(t)\right|+ \frac{1}{nb}\sup_{1 \leq r \leq K} \left \|\sum_{m=1}^r \tilde \epsilon^i_{j_m,k}\right\|_q \sup_{t \in \mathcal{T}}\sum_{m=1}^{K} \left | c^{l,s}_{j_m,h}(t) - c^{l,s}_{j_{m-1},h}(t)\right|\\ 
   & =O(n^{-1/2} h).\label{eq:C_3}
\end{align}
Similarly, we have $C_{22}  =O(n^{-1/2} h) $.
Combining \eqref{eq:C}, \eqref{eq:C_1}, and \eqref{eq:C_3}, we have
\begin{align}
    \sup_{t \in \mathcal{T}}\| C_{n,s}(t)\|_q = O(n^{\phi-1}b^{-1}h + n^{-1/2}h ).\label{eq:C_bound}
\end{align}
From calculus, 
$
  \sup_{t \in \mathcal{T}}|S_{n,l}(t) - \mu_l|=
  O((nb)^{-1})
$, 
where $\mu_l = \int_{\mathbb{R}} x^l K(x) dx$.
Combining \eqref{eq:loc_mat}, \eqref{eq:T_bound}, \eqref{eq:D_bound} and \eqref{eq:C_bound}, and by the invertibility of $\mf S_n(t)$, under the bandwidth condition $n^{2\phi - 1}b^{-1}h^ 2 \to 0$, we have
\begin{align}
    \sup_{t \in \mathcal{T}} \|\hat{\bs \eta} (t) - \bs \eta(t)\|_q = O((nb)^{-1/2}+ b^2).
\end{align}
 Under \cref{A:K}, we have $\mu_0 = 1$, $\mu_1 = 0$, $\mu_2 = 0$.
Then, it follows that 
\begin{align}
    \mf S_n(t) (\hat{\bs \eta} (t) -  \bs \eta(t)) = \begin{pmatrix}
         b^2 \mu_2 \beta^{\prime\prime}(t)/2+ O(b^3 + b/n + n^{-1}h)  \\ 
         O(b^3 + b/n + n^{-1}h)\\ 
    \end{pmatrix} +  \mf T_n(t) + \mf D_n(t) + \mf C_n(t),
    \label{eq:loc_mat1}
\end{align}
Under bandwidth conditions $nb^3 \to \infty$, $n^{2\phi-1}b^{-1}h^2 \to 0$, following the proof of Theorem 1 in \cite{zhou2010simultaneous}, 
\begin{align}
   \sup _{t \in \mathcal{T}}\left\|\hat{\beta}(t)-\beta(t)  -\frac{1}{nb}\sum_{j=1}^n K_{b}(t_j - t) \tilde e_{j} \right\|_q &= O ((nb)^{-3/2}+ b^3 + n^{-1}b+ n^{\phi-1}b^{-1}h + n^{-1/2}h )\\ & =O (n^{\phi-1}b^{-1}h + b^3 + n^{-1/2}h).\label{eq:betamoment}
\end{align}

Using Proposition B.1. of \cite{dette2018change}, we have for fixed $(i,l,k) \in \B$,
\begin{align}
   \left\| \sup _{t \in \mathcal{T}} \left|\hat{\beta}(t)-\beta(t)  -\frac{1}{nb}\sum_{j=1}^n K_{b}(t_j - t) \tilde e_{j} \right| \right\|_q = O(b^{-1/q}(n^{\phi-1}b^{-1}h+ b^{3}+ n^{-1/2}h)).
\end{align}

Under \cref{Ass:error} and the uniformly bounded Lipschitz constants of $\mu_i(t)$, the constant in the big $O$ for $(i,l,k)\in \B$ can be also uniformly bounded, i.e., 
\begin{align}
  \max_{(i,l,k) \in \B}   \left\| \sup _{t \in \mathcal{T}} \left|\hat{\beta}(t)-\beta(t)  -\frac{1}{nb}\sum_{j=1}^n K_{b}(t_j - t) \tilde e_{j} \right| \right\|_q = O(b^{-1/q}(n^{\phi-1}b^{-1}h+ b^{3}+ n^{-1/2}h)).
\end{align}

Proof of (ii).
Since $\check{\beta}_{+}(t)=\sum_{j=0, 1, 2}A_{j}(r)\hat{\beta}_j (t-(r + 1 - j)\omega(t))$, we have  by (i)  
\begin{align}
   & \max_{(i,l,k) \in \B}  \left\| \sup _{t \in \mathcal{T}} \left|\check{\beta}_{+}(t) - \sum_{j=0, 1, 2}A_{j}(r)\beta_{j}(t-(r + 1 - j)\omega(t)) \right. \right.\\ &\left.\left.- \frac{1}{nb}\sum_{i=1}^n \sum_{j=0, 1, 2}A_{j}(r)K_{b}(t_i - t + (r + 1 - j)\omega(t)) \tilde e_{i} \right|\right\|_q  = O(b^{-1/q}(n^{\phi-1}b^{-1}h+ b^{3}+ n^{-1/2}h)).\label{eq:plus1}
\end{align}
Under condition \ref{A:gamma} of \cref{Ass:error},  by (7.4) of \cite{cheng2007reducing}, we have
\begin{align}
   \left| \sum_{j=0, 1, 2}A_{j}(r)\beta_j(t-(r + 1 - j)\omega(t)) - \beta (t)\right|   = O( b^3).\label{eq:plus2}
\end{align}
Combining \eqref{eq:plus1} and \eqref{eq:plus2}, we have
\begin{align}
    & \max_{(i,l,k) \in \B}  \left\| \sup _{t \in \mathcal{T}} \left|\check{\beta}_{+}(t) - \beta(t)  - \frac{1}{nb}\sum_{i=1}^n \sum_{j=0, 1, 2}A_{j}(r)K_{b}(t_i - t + (r + 1 - j)\omega(t)) \tilde e_{i} \right| \right\|_q\\&  = O(b^{-1/q}(n^{\phi-1}b^{-1}h+ b^{3}+ n^{-1/2}h)).
 \end{align}
 Similarly, we have 
 \begin{align}
&  \max_{(i,l,k) \in \B}  \left\|\sup _{t \in \mathcal{T}} \left|\check{\beta}_{-}(t) - \beta(t) - \frac{1}{nb}\sum_{i=1}^n \sum_{j=0, 1, 2}A_{j}(-r)K_{b}(t_i - t + (-r + 1 - j)\omega(t)) \tilde e_{i} \right| \right\|_q \\ & = O(b^{-1/q}(n^{\phi-1}b^{-1}h+ b^{3}+ n^{-1/2}h)).
 \end{align}
 Therefore, (ii) holds. 
 \end{proof}
\subsubsection{Proof of \texorpdfstring{\cref{prop:rhok}}{Proposition C.1}}
  
Proof of (i).   Recall that $\mu_l = \int_{\mathbb{R}} x^l K(x) dx$. Recall the definitions of $\gamma_k^{i,l}(t) $ and $ \sigma_{i,l}(t)$ in \cref{def:cor},  $\tilde \gamma_k^{i,l}(t) $  and  $\tilde  \sigma_{i,l}(t)$ in \eqref{eq:rho_network}.
By Lemma 5 of \cite{zhou2010simultaneous}, under \cref{Ass:error}, we have for $t \in (0,1)$,
\begin{align}
   \max_{(i,l,k) \in \B} \sup_{t\in \TT}| \beta_k^{i,l}(t) - \gamma_0^{i,l}(t)| =  O(\chi^h + h/n),\quad \max_{(i,l,k) \in \B} \sup_{t\in \TT} | \beta_h^{i,l}(t) - 2\gamma_0^{i,l}(t)| =  O(\chi^h + h/n).\label{eq:betagamma}
\end{align}
Note that  $\chi^h + h/n = o(n^{\phi-1}b^{-1}h) = o((nb)^{-1/2})$.
We start by studying the bound and representation of $\tilde \sigma_{i,l}(t) - \sigma_{i,l}(t)$. \par 
(a) By the definitions of $ \sigma_{i,l}(t)$ and $\tilde  \sigma_{i,l}(t)$,  
\begin{align}
    \| \tilde \sigma_{i,l}(t) - \sigma_{i,l}(t)\|_{2q} &= \left\| \sqrt{\tilde \gamma^i_0(t) \tilde \gamma^l_0(t)} - \sqrt{ \gamma^i_0(t) \gamma^l_0(t)} \right\|_{2q} \\
    & = \left\| \frac{ \tilde \gamma^i_0(t) \tilde \gamma^l_0(t) - \gamma^i_0(t) \gamma^l_0(t)}{ \sqrt{\tilde \gamma^i_0(t) \tilde \gamma^l_0(t)} + \sqrt{ \gamma^i_0(t) \gamma^l_0(t)}}\right\|_{2q}\\ 
     & = \left\| \frac{ \tilde \gamma^i_0(t) (\tilde \gamma^l_0(t) - \gamma^l_0(t)) + (\tilde \gamma^i_0(t)-\gamma^i_0(t)) \gamma^l_0(t)}{ \sqrt{\tilde \gamma^i_0(t) \tilde \gamma^l_0(t)} + \sqrt{ \gamma^i_0(t) \gamma^l_0(t)}}\right\|_{2q}\\ 
     & \leq \left\| \frac{ \tilde \gamma^i_0(t)}{\tilde \sigma_{i,l}(t) + \sigma_{i,l}(t)}\right\|_{4q} \| \tilde \gamma^l_0(t) - \gamma^l_0(t))\|_{4q} + \left\| \frac{ \gamma^l_0(t)}{\tilde \sigma_{i,l}(t) + \sigma_{i,l}(t)}\right\|_{4q} \| \tilde \gamma^i_0(t) - \gamma^i_0(t))\|_{4q}\\
     &= O((nb)^{-1/2}), \label{eq:sigmatilde}
\end{align}
where the last inequality follows from triangle inequality, and the last equality follows from \ref{A:exp}, \eqref{eq:T_bound} and \eqref{eq:betamoment}.\par
(b) The representation of $\tilde \sigma_{i,l}(t) - \sigma_{i,l}(t)$.\par
Let $T^{i,l}_{k}(t) = \frac{1}{nb_k^{i,l}} \sum_{j=1}^n \tilde e^{i,l}_{j,k} K_{b^{i,l}_k}(t_j - t).$ When $i=l$, we use a single index for the sake of simplicity. For example, we use $T^{i}_{h}(t)$ to represent $T^{i,i}_{h}(t)$.
Observe that by  \eqref{eq:betamoment}, \eqref{eq:T_bound} and condition \ref{A:exp}, 
\begin{align}
  & \|  \tilde \gamma^i_0(t) \tilde \gamma^l_0(t) - \gamma^i_0(t) \gamma^l_0(t) -  \gamma^i_0(t)T^{l}_{h}(t)/2+  \gamma^l_0(t)T^{i}_{h}(t)/2\|_q  \\ 
  & = \|(\tilde \gamma^i_0(t) -  \gamma^i_0(t)) (\tilde \gamma^l_0(t) - \gamma^l_0(t)) + (\tilde \gamma^i_0(t)-\gamma^i_0(t)-T^{i}_{h}(t)/2) \gamma^l_0(t) -   \gamma^i_0(t) (\tilde \gamma^l_0(t) - \gamma^l_0(t)-T^{l}_{h}(t)/2)  \|_q\\
  & = O(n^{\phi-1}b^{-1}h + b^{3} + n^{-1/2}h).\label{eq:gammapro}
\end{align}

Then we have, 
\begin{align}
   & \left\| \tilde \sigma_{i,l}(t) - \sigma_{i,l}(t) -  \left(\gamma^i_0(t)T^{l}_{h}(t)+  \gamma^l_0(t)T^{i}_{h}(t)\right)/(4\sigma_{i,l}(t) )\right\|_q  \\ 
 & = \left\| \frac{  \tilde \gamma^i_0(t) \tilde \gamma^l_0(t) - \gamma^i_0(t) \gamma^l_0(t)}{ 2\sigma_{i,l}(t) } \frac{2\sigma_{i,l}(t) }{\tilde \sigma_{i,l}(t) +   \sigma_{i,l}(t) } - \left(\gamma^i_0(t)T^{l}_{h}(t)/2+  \gamma^l_0(t)T^{i}_{h}(t)/2\right)/(2\sigma_{i,l}(t) )\right\|_q\\ 
& \leq \left\| \frac{  \tilde \gamma^i_0(t) \tilde \gamma^l_0(t) - \gamma^i_0(t) \gamma^l_0(t)-\gamma^i_0(t)T^{l}_{h}(t)/2-  \gamma^l_0(t)T^{i}_{h}(t)/2}{ \sigma_{i,l}(t) } \frac{\sigma_{i,l}(t) }{\tilde \sigma_{i,l}(t) +   \sigma_{i,l}(t) }  \right\|_q\\ 
&+ \left\| \frac{\sigma_{i,l}(t) - \tilde  \sigma_{i,l}(t)}{\tilde \sigma_{i,l}(t) +   \sigma_{i,l}(t) }  \left(\gamma^i_0(t)T^{l}_{h}(t)/2+  \gamma^l_0(t)T^{i}_{h}(t)/2\right)/(2\sigma_{i,l}(t) )\right\|_q\\ 
& \leq   \left\| \frac{  \tilde \gamma^i_0(t) \tilde \gamma^l_0(t) - \gamma^i_0(t) \gamma^l_0(t)-\gamma^i_0(t)T^{l}_{h}(t)/2-  \gamma^l_0(t)T^{i}_{h}(t)/2}{ \sigma_{i,l}(t) } \right\|_q\\ &+ \left\| \frac{\sigma_{i,l}(t) - \tilde  \sigma_{i,l}(t)}{\tilde \sigma_{i,l}(t) +   \sigma_{i,l}(t) } \right\|_{2q}  \left\| \left(\gamma^i_0(t)T^{l}_{h}(t)+  \gamma^l_0(t)T^{i}_{h}(t)\right)/(4\sigma_{i,l}(t) )\right\|_{2q}\\
& = O(n^{\phi-1}b^{-1}h + b^{3} + n^{-1/2}h),\label{eq:sigmaexp}
\end{align}
where the second inequality follows from the non-negativity of $\tilde \sigma_{i,l}(t)$, and the last equality follows from \eqref{eq:T_bound},  \eqref{eq:sigmatilde} and \eqref{eq:gammapro}.

Recall  $\vartheta^{i,l}_k(t) := \frac{1}{nb_{k}^{i,l}}\sum_{j=1}^n K_{b_{k}^{i,l}}(t_i-t)\Xi^{i,l}_{j,k}$, where $\Xi^{i,l}_{j,k}$ is as defined in \eqref{eq:Xidiscrete}. By \cref{lm:loclin} \eqref{eq:betagamma}, and \eqref{eq:sigmaexp}, we have uniformly for all $t \in \mathcal{T}$, 
\begin{align} 
     \left\|\tilde \rho^{i,l}_k(t)  - \rho^{i,l}_k(t) - \vartheta^{i,l}_k(t)\right\|_q
     & =  \left\| \frac{\left[\tilde{\gamma}^{i,l}_{k}(t)-\gamma^{i,l}_{k}(t)\right]-\left[\tilde{\sigma}_{i,l}(t)-\sigma_{i,l}(t)\right] \rho^{i,l}_{k}(t)}{\tilde{\sigma}_{i,l}(t)} - \vartheta^{i,l}_k(t)\right\|_q \\
    & = \left\|\frac{\sigma_{i,l}(t)}{\tilde{\sigma}_{i,l}(t)}\vartheta^{i,l}_k(t) - \vartheta^{i,l}_k(t)\right\|_q + O(n^{\phi-1}b^{-1}h + b^{3} + n^{-1/2}h).
    \label{eq:rho_gamma}
    \end{align}
By \eqref{eq:sigmatilde}, it follows that
\begin{align}
     &\sup_{t \in \mathcal{T}} \left\|\left(\frac{\sigma_{i,l}(t)}{\tilde{\sigma}_{i,l}(t)}\vartheta^{i,l}_k(t) - \vartheta^{i,l}_k(t)\right) \mf 1(\bar B^{\prime}_n) \right\|_q \\ &=  \sup_{t \in \mathcal{T}} \left\|\left(\frac{\tilde  \sigma_{i,l}(t)- \sigma_{i,l}(t)}{\tilde \sigma_{i,l}(t)}\right) \mf 1(\bar B^{\prime}_n) \right\|_{2q} \left\|\vartheta^{i,l}_k(t)\right\|_{2q}= O((nb)^{-1}).\label{cn1}
\end{align}
Therefore, by \eqref{eq:rho_gamma}, \eqref{cn1}, we obtain
\begin{align}
   & \sup_{t \in \mathcal{T}} \left\|(\tilde \rho^{i,l}_k(t)  - \rho^{i,l}_k(t) - \vartheta^{i,l}_k(t))\mf 1(\bar B^{\prime}_n)\right\|_q
   =O(n^{\phi-1}b^{-1}h + b^{3} + n^{-1/2}h).
\end{align}
The result follows from Proposition B.1 in \cite{dette2018change} and a close investigation of the constants in the big $O$'s.

Proof of (ii). The result follows from \cref{lm:loclin} (ii).\hfill $\Box$

\subsubsection{A corollary of \texorpdfstring{\cref{lm:loclin}}{Lemma F.1}}
\begin{corollary}\label{cor:suprhok}
Under the conditions of \cref{prop:rhok}, for a sufficiently large $q$, we have 
\begin{align}
    &\max_{(i,l,k) \in \B} \sup _{t \in \mathcal{T}}\left| \tilde{\beta}^{i,l}_{k}(t)-\beta^{i,l}_{k}(t)- \frac{1}{nb_k^{i,l}}\sum_{j=1}^n K_{b^{i,l}_k}(t_j - t) \tilde e^{i,l}_{j,k}\right|= \Op\left(c_n\right) = \op(1),
\end{align}
where $c_n$ is as defined in \cref{nonasynetwork}.
\end{corollary}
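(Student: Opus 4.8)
The statement to prove, Corollary~\ref{cor:suprhok}, is an almost immediate consequence of Lemma~\ref{lm:loclin}(i) and Proposition~\ref{prop:rhok}(i); the only real work is bookkeeping with the maximum over $\B$ and converting $\mathcal L^q$-bounds (on events of overwhelming probability) into an $\Op$-statement that is in fact $\op(1)$. The plan is as follows. First I would recall that, by the definition $\tilde\gamma^{i,l}_k(t)=\hat\beta^{i,l}_h(t)/2-\hat\beta^{i,l}_k(t)$ and $\tilde\rho^{i,l}_k(t)=\tilde\gamma^{i,l}_k(t)/\tilde\sigma_{i,l}(t)$, the quantity $\tilde\beta^{i,l}_k(t)-\beta^{i,l}_k(t)$ in the corollary refers to the local linear estimator $\hat\beta^{i,l}_k(t)$ analyzed in Lemma~\ref{lm:loclin}(i). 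That lemma already gives
\[
\max_{(i,l,k)\in\B}\Big\|\sup_{t\in\TT}\Big|\hat\beta^{i,l}_k(t)-\beta^{i,l}_k(t)-\tfrac{1}{nb^{i,l}_k}\textstyle\sum_{j=1}^n K_{b^{i,l}_k}(t_j-t)\tilde e^{i,l}_{j,k}\Big|\Big\|_q = O\big(b^{-1/q}(n^{\phi-1}b^{-1}h+b^3+n^{-1/2}h)\big).
\]

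Next I would identify the right-hand rate with $c_n$. Recall from Proposition~\ref{prop:rhok} and the proof of Theorem~\ref{nonasynetwork} that $c_n=(|\B|/b)^{1/q}(hn^{\phi-1}b^{-1}+b^3+n^{-1/2}h)$; writing the sup over $t$ and max over $z\in\B$ as a single supremum over a set of cardinality $O(|\B|/b)$ (discretizing $\TT$ into $O(1/b)$ points and controlling the oscillation between grid points by the Lipschitz bounds established in the proof of Lemma~\ref{lm:loclin}, which cost at most $O((nb)^{-1})$), the factor $|\B|^{1/q}$ appears exactly as in \eqref{eq:rhoqnorm}. Hence
\[
\Big\|\max_{z\in\B}\sup_{t\in\TT}\Big|\hat\beta_z(t)-\beta_z(t)-\tfrac{1}{nb_z}\textstyle\sum_{j=1}^n K_{b_z}(t_j-t)\tilde e_z^{\,}{}_j\Big|\Big\|_q = O(c_n).
\]
By Markov's inequality this $\mathcal L^q$-bound yields $\max_{z\in\B}\sup_{t\in\TT}|\cdots|=\Op(c_n)$. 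Finally, the bandwidth conditions of Theorem~\ref{nonasynetwork} (in particular $|\B|^{1/q}hn^{\phi-1/2}b^{-1/2-1/q}\to0$ and $|\B|^{1/q}n^{1/2}b^{7/2-1/q}\to0$) force $c_n=o((nb)^{-1/2})=o(1)$, so $\Op(c_n)=\op(1)$, which is the assertion.

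The main (and only) obstacle is organizational rather than substantive: one must be careful that the ``max over $\B$'' in Lemma~\ref{lm:loclin}(i) is already a uniform-in-$z$ bound on the $\mathcal L^q$-norm of a supremum in $t$, but the corollary wants the $\mathcal L^q$-norm (or $\Op$-size) of the \emph{max over $z$} of that supremum — passing the max inside requires the union bound $\max_z X_z^q\le\sum_z X_z^q$, which is exactly what produces the extra $|\B|^{1/q}$ factor and turns the Lemma's rate into $c_n$. I would also note in passing that the argument is identical with $K$ replaced by the equivalent kernel $\check K$ of Lemma~\ref{lm:loclin}(ii), so the corollary also holds for $\check\beta_z$, which is what is actually needed to justify the stochastic expansion \eqref{eq:sto_expansion} quoted in the main text. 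Since all the ingredients are in place, I would keep the proof to a few lines: invoke Lemma~\ref{lm:loclin}, apply the union-bound argument of \eqref{eq:rhoqnorm}, then Markov and the bandwidth conditions.
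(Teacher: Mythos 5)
Your proposal is correct and follows essentially the same route as the paper: the paper's proof is exactly the union bound $\max_{z\in\B}X_z^q\le\sum_{z\in\B}X_z^q$ applied to the $\mathcal L^q$-bound of Lemma \ref{lm:loclin}, producing the extra $|\B|^{1/q}$ factor that turns the lemma's rate into $c_n$, followed by Markov's inequality and the bandwidth conditions to get $\op(1)$. Your additional remarks on the discretization (already absorbed into the $b^{-1/q}$ factor of the lemma) and on the $\check K$ version are consistent with the paper and do not change the argument.
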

\begin{proof}
The proof follows from \cref{lm:loclin} and the fact that 
\begin{align}
    &\stackrel[(i,l,k) \in \B]{}{\sup}\underset{t \in \TT}{\max}\left| \tilde{\beta}^{i,l}_{k}(t)-\beta^{i,l}_{k}(t)- \frac{1}{nb_k^{i,l}}\sum_{j=1}^n K_{b^{i,l}_k}(t_j - t) \tilde e^{i,l}_{j,k} \right|^q \\ & \leq \underset{(i,l,k) \in \B}{\sum} \underset{t \in \TT}{\sup} \left|  \tilde{\beta}^{i,l}_{k}(t)-\beta^{i,l}_{k}(t)- \frac{1}{nb_k^{i,l}}\sum_{j=1}^n K_{b^{i,l}_k}(t_j - t) \tilde e^{i,l}_{j,k} \right|^q.
\end{align}
\end{proof}

\subsection{Proof of \texorpdfstring{\cref{prop:Gammacp}}{Proposition C.2}}
\begin{proof}
Notice that 
$ \stackrel[(i,l,k) \in \B]{}{\max}\underset{t \in \TT}{\sup}\left|\hat{\tilde \Gamma}^{i,l,2}_{k}(t) - \tilde \Gamma^{i,l,2}_{k}(t) \right|^q \leq \underset{(i,l,k) \in \B}{\sum} \underset{t \in \TT}{\sup} \left|\hat{\tilde \Gamma}^{i,l,2}_{k}(t) - \tilde \Gamma^{i,l,2}_{k}(t) \right|^q$, hence
\begin{align}
\left\|  \max_{(i,l,k) \in \B} \sup_{t \in \TT} \left|\hat{\tilde \Gamma}^{i,l,2}_{k}(t) - \tilde \Gamma^{i,l,2}_{k}(t) \right|  \right\|_q \leq |\B|^{1/q} \max_{(i,l,k) \in \B}  \left\| \sup_{t \in \TT} \left|\hat{\tilde \Gamma}^{i,l,2}_{k}(t) - \tilde \Gamma^{i,l,2}_{k}(t) \right|  \right\|_q.
\end{align}
By \cref{prop:rhok}, \eqref{eq:T_bound} and Proposition B.1 of \cite{dette2018change}, under \cref{Ass:error}, we have uniformly for $(i,l,k)\in \B$, 
$$\left\| \sup _{t \in \mathcal{T}}\left|\hat{\beta}^{i,l}_{k}(t)-\beta^{i,l}_{k}(t)\right| \right\|_{2q}=O\left(n^{-\frac{1}{2}}b^{-\frac{1}{2}-\frac{1}{2q}}\right), \quad \left\|
\max_{1 \leq s\leq n-m+1}\left |\sum_{j=s}^{s+m-1} \Xi_{j, k}^{i, l}\right | \right\|_{2q}=O\left(m^{\frac{1}{2}}(m/n)^{-\frac{1}{2q}}\right).$$
    The proof follows from a careful investigation of Theorem 5 of \cite{zhou2010simultaneous} and Proposition 1.3 in \cite{dette2019detecting}.
    
\end{proof}

\normalem
\bibliographystyle{apalike}
\bibliography{main_arxiv}
\end{document}